\definecolor{jcolor}{RGB}{041,122,000}
\definecolor{darkred}{RGB}{100,000,000}
\definecolor{purple}{RGB}{200,000,200}
\newcommand{\cut}[1]{}
\newcommand{\commentout}[1]{}
\newcommand{\bd}{\boldsymbol}
\newcommand{\Ex}{\mathbb{E}}
\newcommand{\mc}{\mathcal}
\newcommand{\bX}{\boldsymbol{X}}
\newcommand{\bx}{\boldsymbol{x}}
\newcommand\independent{\protect\mathpalette{\protect\independenT}{\perp}}
\def\independenT#1#2{\mathrel{\rlap{$#1#2$}\mkern2mu{#1#2}}}
\theoremstyle{plain}
\newtheorem{theorem}{Theorem}[section]
\newtheorem{lemma}[theorem]{Lemma}
\theoremstyle{definition}
\newtheorem{assumption}[theorem]{Assumption}
\theoremstyle{remark}
\newtheorem{remark}[theorem]{Remark}
\icmltitlerunning{Multi-Source Conformal Inference Under Distribution Shift}
\begin{document}

\twocolumn[
\icmltitle{Multi-Source Conformal Inference Under Distribution Shift}





\begin{icmlauthorlist}
\icmlauthor{Yi Liu}{a}
\icmlauthor{Alexander W. Levis}{b}
\icmlauthor{Sharon-Lise Normand}{c}
\icmlauthor{Larry Han}{d}
\end{icmlauthorlist}

\icmlaffiliation{a}{North Carolina State University, Department of Statistics, Raleigh, NC, USA}
\icmlaffiliation{b}{Carnegie Mellon University, Department of Statistics, Pittsburgh, PA, USA}
\icmlaffiliation{c}{Harvard Medical School, Department of Health Care Policy, Boston, MA, USA}
\icmlaffiliation{d}{Northeastern University, Department of Health Sciences, Boston, MA, USA}

\icmlcorrespondingauthor{Larry Han}{lar.han@northeastern.edu}

\vskip 0.3in
]



\printAffiliationsAndNotice{}  

\begin{abstract}
Recent years have experienced increasing utilization of complex machine learning models across multiple sources of data to inform more generalizable decision-making. However, distribution shifts across data sources and privacy concerns related to sharing individual-level data, coupled with a lack of uncertainty quantification from machine learning predictions, make it challenging to achieve valid inferences in multi-source environments. In this paper, we consider the problem of obtaining distribution-free prediction intervals for a target population, leveraging multiple potentially biased data sources. We derive the efficient influence functions for the quantiles of unobserved outcomes in the target and source populations, and show that one can incorporate machine learning prediction algorithms in the estimation of nuisance functions while still achieving parametric rates of convergence to nominal coverage probabilities. Moreover, when conditional outcome invariance is violated, we propose a data-adaptive strategy to upweight informative data sources for efficiency gain and downweight non-informative data sources for bias reduction. We highlight the robustness and efficiency of our proposals for a variety of conformal scores and data-generating mechanisms via extensive synthetic experiments. Hospital length of stay prediction intervals for pediatric patients undergoing a high-risk cardiac surgical procedure between 2016-2022 in the U.S. illustrate the utility of our methodology.
\end{abstract}

\section{Introduction}
\label{sec:introduction}
Conformal inference is a set of methods used to construct distribution-free, nonparametric prediction intervals, for an outcome $Y$ on the basis of covariates $\bX$, with finite-sample marginal coverage guarantees. The framework was first introduced by \citet{vovk2005algorithmic, vovk2009line} and has since been extended to regression settings under covariate shift \citep{lei2018distribution, tibshirani2019conformal, lei2021conformal}. Recently, \citet{yang2024doubly} proposed robust prediction intervals under covariate shift by revealing a connection with the missing data literature, and appealing to modern semiparametric efficiency theory. However, \citet{yang2024doubly} assume only a single data source such that the conditional outcome distribution $Y \mid \bX$---and therefore the conditional distribution of conformal scores---is homogeneous. In general, conformal prediction methods have focused on covariate shift while assuming that conditional outcome distributions are invariant across environments \citep{peters2016causal}. We note, however, that some work has studied label shift settings (e.g., \citet{podkopaev2021}), but this involves the analogously strong assumption that the distribution of $\bX \mid Y$ is homogeneous. We refer the reader to \citet{barber2023conformal} and the extensive literature review therein for other works.

In reality, conditional outcome invariance is unlikely to hold in the real world. In recent years, there has been a huge increase in popularity in using large clinical research networks that facilitate multi-center collaboration. One goal with these networks is to leverage the multiple diverse data sources to mitigate issues related to small or non-representative data, thereby increasing statistical power for probing various scientific hypotheses. However, different clinical sites may be heterogeneous in terms of patient populations, treatment practices, and patient outcomes.  Furthermore, since individual-level data is protected by privacy regulations such as HIPAA and GDPR, direct pooling of data across sites is typically not feasible. Federated transfer learning methods have been proposed as powerful tools for integrating heterogeneous data \citep{duan2020learning, li2023targeting}, and have been applied to yield robust point estimation of the effect of a treatment on a combined population across sites \citep{xiong2023federated, vo2022bayesian}, and for the treatment effect on a specific target population \citep{han2021federated, han2024privacy, han2023multiply, vo2022adaptive}, while accounting for data-sharing constraints and heterogeneity (i.e., covariate shift and different conditional outcome distributions). It has also been applied in problems related to interval estimation, e.g., constructing robust confidence intervals by selecting eligible sites \citep{guo2023robust} with uniform coverage guarantees.
 
 In conformal prediction, \citet{lu2023federated} proposed a notion of partial exchangeability, but the focus of their work is to construct prediction intervals on the combined population across sites, and not any particular target site. Relatedly, \citet{plassier2023conformal} considered federated conformal prediction under label shift via quantile regression, and \citet{humbert2023one} proposed a quantile-of-quantiles estimator for conformal prediction by aggregating multiple quantiles returned by each site.  To date, there are no federated learning methods developed for conformal inference on a missing outcome in a setting with distribution shift across multi-site data, and where data cannot be directly combined due to privacy concerns. When conditional outcome distributions are not the same across sites, there is likely to be poor conformal set performance with existing methods when transferring prediction models (e.g., learned conditional quantiles) from one set to another, e.g., deployment to target distributions that are different from the source distribution \citep{jin2023diagnosing, cai2023diagnosing}. 

Our work differs from recent work by \citet{lee2023distribution} and \citet{dunn2023distribution} in important ways. \citet{lee2023distribution} focus on predicting an outcome on a new subject from a new (unobserved) site. \citet{dunn2023distribution} focus on this same task, and also consider a simplified version of the problem of predicting an outcome on a new subject from an existing (observed) site---they propose an unsupervised method that does not allow for the inclusion of covariates, and leave the supervised version as an open problem. Neither work allows for outcome missingness. In this paper, we fill these methodological and applied gaps by leveraging conformal prediction tools to provide patients with personalized predictions using multi-source data, accounting for missing data and distribution shifts, i.e., covariate shift and heterogeneous conditional outcome distributions. We propose a method to obtain valid prediction intervals, exploiting information from multiple potentially heterogeneous sites, and respecting the privacy of individual-level data when it cannot be shared. Our proposal shares the marginal coverage properties of conformal prediction methods and builds on modern semiparametric efficiency theory and federated learning for more robust and efficient uncertainty quantification. 

\section{Prediction interval construction}\label{sec:meth}

\subsection{Notation and background}\label{sec:setup}
Consider the following multi-site paradigm with missing data. We have data from $K$ sites, and for each subject in each site, we observe a covariate vector $\bX$. Let $T \in \{0,1,...,K-1\}$ denote the study sites, where $T=0$ indicates the target site and the remainder are source sites. Let $R$ be an indicator for observing the outcome $Y$, i.e., $R = 1$ if $Y$ is observed and $R=0$ if $Y$ is missing. The data are assumed to be a random sample of $n$ i.i.d. copies of $\mathcal{O} = (\bX, T, R, RY) \sim \mathbb{P}$. Throughout, let $\mathbb{P}_n(f) \equiv \frac{1}{n}\sum_{i=1}^n f(\mathcal{O}_i)$ be shorthand for the empirical average. To proceed, we make the following standard assumptions.
\begin{assumption}[Missing at random {[MAR]}]\label{ass:MAR}
    \[R \independent Y \mid T, \bX.\]
\end{assumption}
\begin{assumption}[Positivity]\label{ass:pos} For $\epsilon > 0$, 
    $$\mathbb{P}[\mathbb{P}[R = 1 \mid T, \bX] \geq \epsilon] = 1.$$
\end{assumption}
Note that MAR (i.e., Assumption~\ref{ass:MAR}), which asserts that missingness status is not informative about outcomes, given $T$ and $\bX$, and positivity (i.e., Assumption~\ref{ass:pos}), which requires that no subjects have outcomes that could never be observed, are both required for point identification of the distribution of missing outcomes and are standard in this literature \citep{lei2021conformal, yang2024doubly}

We construct prediction intervals of the form $\widehat C_\alpha(\bX)$, for $\alpha\in(0,1)$, such that
$$\mathbb{P}(Y \in \widehat{C}_\alpha(\bX) \mid T = 0, R = 0) \geq 1-\alpha.$$
That is, our predictions should be tailored for missing outcomes \textit{in the target site}, with marginal coverage guarantees. In the spirit of conformal inference, we introduce a conformal score, $S(\bX, Y)$, which for now we assume is fixed. Our predictions will be based on this score, namely $\widehat{C}_{\alpha}(\bX) = \left\{y \in \mathbb{R}: S(\bX, y) \leq \widehat{r}\right\}$, where $\widehat{r}$ is an estimate of $r_0= r_0(\alpha)(\mathbb{P})$, the $(1-\alpha)$-quantile of the conformal score $S(\bX,Y)$ in the target site.

Under MAR, the functional $r_0=r_0(\alpha)(\mathbb{P})$ is identified as the solution to an estimating equation:
{\small
\begin{align*}
    & \mathbb{P}(S(\bX,Y) \leq r_0 \mid T=0, R=0) \\
     & = \mathbb{E}_{\mathbb{P}}(\mathbb{P}(S(\bX,Y) \leq r_0 \mid T=0, \bX, R=1) \mid T=0, R = 0)
    \\ & = 1-\alpha. 
\end{align*}}
Without imposing any further structure, the nonparametric influence function of this functional can be derived \citep{yang2024doubly}. 

\begin{theorem}[\citet{yang2024doubly}] \label{thm:yang}
Under Assumptions~\ref{ass:MAR} and \ref{ass:pos}, the nonparametric influence function of the functional $r_0 = r_0(\alpha)(\mathbb{P})$ is given by
\begin{align}\label{eq:eif-target}
    & \dot{r}_0 (\mc O;\mathbb{P})  \nonumber \\
    & \propto I(T=0)\big[(1-R)\{m_0(r_0,\bX) - (1-\alpha)\} \nonumber \\
    & \quad\quad + R\eta_0(\bX)\{I(S(\bX,Y) \leq r_0) - m_0(r_0,\bX)\}\big] \nonumber \\
    & \eqqcolon \varphi_0(\mc O; r_0, m_0, \eta_0),
\end{align}
where 
$$m_0(r,\bX) =\mathbb{P}(S(\bX,Y) \leq r \mid \bX, T = 0, R=1)$$ is the cumulative distribution function (CDF) of the conformal score, and 
$$\eta_0(\bX) = \frac{\mathbb{P}(R=0 \mid T=0, \bX)}{\mathbb{P}(R=1 \mid T=0, \bX)}$$ is the missingness risk ratio. 
\end{theorem}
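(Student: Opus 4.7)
The plan is to recognize $r_0$ as an implicitly-defined (Z-estimator-type) functional and apply the standard implicit-function / delta-method reduction: first compute the efficient influence function (EIF) of the "inner" estimating function at a fixed $r$, then divide by the derivative of that functional with respect to $r$. Concretely, define the functional $\Phi(r,\mathbb{P}) = \mathbb{E}_{\mathbb{P}}[I(T=0,R=0)\,m_0(r,\bX)]$ and $p_{00} = \mathbb{P}(T=0,R=0)$. By MAR, $\mathbb{P}(S(\bX,Y)\leq r_0 \mid T=0,R=0) = \Phi(r_0,\mathbb{P})/p_{00}$, so $r_0$ is the solution of the estimating equation $\Psi(r,\mathbb{P}) := \Phi(r,\mathbb{P}) - (1-\alpha)\,p_{00} = 0$.

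Next, I would pass to a regular one-dimensional parametric submodel $\mathbb{P}_t$ through $\mathbb{P}$ with score $s(\mathcal{O})$ and differentiate the identity $\Psi(r_0(\mathbb{P}_t),\mathbb{P}_t) = 0$ at $t=0$. The implicit function theorem then gives
\begin{equation*}
\dot r_0(\mathcal{O};\mathbb{P}) = -\bigl[\partial_r \Psi(r,\mathbb{P})\big|_{r=r_0}\bigr]^{-1} \cdot \text{EIF}\bigl(\Psi(r_0,\cdot)\bigr)(\mathcal{O}),
\end{equation*}
and a short computation (using MAR again to equate conditional densities of $S$ under $R=0$ and $R=1$) shows $\partial_r \Psi(r,\mathbb{P})|_{r=r_0} = p_{00}\,f_{S\mid T=0,R=0}(r_0)$. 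This positive scalar is absorbed into the ``$\propto$'' in the theorem statement, so the work reduces to computing $\text{EIF}(\Psi(r_0,\cdot))$.

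The main obstacle is this EIF computation, which I would carry out by the familiar AIPW template. The functional $\Phi(r_0,\mathbb{P})$ admits the two equivalent representations $\mathbb{E}[I(T=0,R=0)\,m_0(r_0,\bX)]$ (outcome-regression form) and $\mathbb{E}[I(T=0,R=1)\,\eta_0(\bX)\,I(S\leq r_0)]$ (inverse-probability-weighted form, valid under MAR since $\eta_0(\bX)\,\mathbb{P}(R=1\mid T=0,\bX) = \mathbb{P}(R=0\mid T=0,\bX)$). Pathwise differentiation along $\mathbb{P}_t$—handling the score contributions from the marginal of $(\bX,T,R)$ and from the unknown nuisance $m_0$—yields the augmented estimator with a residual correction, namely
\begin{equation*}
\text{EIF}(\Phi(r_0,\cdot)) = I(T{=}0,R{=}0)\,m_0(r_0,\bX) + I(T{=}0,R{=}1)\,\eta_0(\bX)\{I(S\leq r_0) - m_0(r_0,\bX)\} - \Phi(r_0,\mathbb{P}),
\end{equation*}
whose mean-zero property and efficiency follow by checking orthogonality against each score in the tangent space $L^2_0(\bX)\oplus L^2_0(T,R\mid \bX)\oplus L^2_0(Y\mid \bX,T,R)$.

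Finally, I would subtract the straightforward EIF of $(1-\alpha)p_{00}$, which is $(1-\alpha)\{I(T=0,R=0)-p_{00}\}$, and collect terms using $\Phi(r_0,\mathbb{P}) = (1-\alpha)p_{00}$ to obtain
\begin{equation*}
\text{EIF}(\Psi(r_0,\cdot)) = I(T=0)\bigl[(1-R)\{m_0(r_0,\bX) - (1-\alpha)\} + R\,\eta_0(\bX)\{I(S(\bX,Y)\leq r_0) - m_0(r_0,\bX)\}\bigr].
\end{equation*}
Dividing by $-\partial_r\Psi(r_0,\mathbb{P}) = -p_{00}f_{S\mid T=0,R=0}(r_0)$ yields the stated IF up to a nonzero constant, giving $\dot r_0 \propto \varphi_0(\mathcal{O};r_0,m_0,\eta_0)$. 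The delicate bookkeeping is really concentrated in the AIPW-EIF derivation; the implicit-function step and the algebraic tidying of the $(1-\alpha)p_{00}$ pieces are mechanical once that is in hand.
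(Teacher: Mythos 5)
Your proposal is correct: the paper itself only cites this result from \citet{yang2024doubly}, but its own appendix derivations of the analogous influence functions (Theorems~\ref{thm:IF-ccod} and \ref{thm:IF-partial-ccod}) proceed exactly as you do---pathwise differentiation of the identifying estimating equation along a parametric submodel, with the covariate-law and conditional-CDF score contributions giving the AIPW-form correction and the density term $p_{00}\,f_{S\mid T=0,R=0}(r_0)$ absorbed into the proportionality constant. Your packaging via the implicit-function/Z-estimator reduction is just a reorganization of that same argument, so no substantive difference or gap.
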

\citet{yang2024doubly} propose a robust estimator $\widehat{r}_0$ that solves $0 = \mathbb{P}_n\left[\varphi_0(\mc O; r, \widehat{m}_0, \widehat{\eta}_0)\right]$ for $r$, where $\widehat{m}_0, \widehat{\eta}_0$ are estimated nuisance functions. 

Applying the method of \citet{yang2024doubly} in our multi-source data setting would only use data from the target site $T=0$ itself. To leverage data from the other $K-1$ sites, we make two contributions: (i) we propose a fully efficient estimator of $r_0 $ under further structural assumptions regarding outcome distribution homogeneity (Section~\ref{subsec:ccod}), and (ii) develop (Section~\ref{subsec:hetedis}) and implement (Section~\ref{sec:simu}) a data-adaptive approach when these structural assumptions may be violated.

\subsection{Efficient estimation under homogeneity}\label{subsec:ccod}

When subjects from different data sources are deemed to be similar, it may be reasonable to assert that the outcome distribution is common across them. This idea is formalized with the following structural assumption.
\begin{assumption}[Common conditional outcome distribution [CCOD{]}]\label{ass:ccod}
    $T \independent Y \mid \bX$.
\end{assumption}
Notably, Assumption~\ref{ass:ccod} entails no restriction on the covariate distribution across sites. That is, any level of covariate shift is permitted.  Under CCOD (i.e., Assumption~\ref{ass:ccod}), data from non-target source sites may be leveraged to improve the estimation of the target site quantile $r_0$. Our first result generalizes Theorem~\ref{thm:yang} to the multi-source setting under CCOD.

\begin{theorem}\label{thm:IF-ccod}
    Under Assumptions~\ref{ass:MAR}, \ref{ass:pos}, and \ref{ass:ccod}, the semiparametric efficient influence function (EIF) of $r_0 = r_0(\alpha)(\mathbb{P})$ is given by
    \begin{align} \label{eq:eif-ccod}
        & \dot{r}_0^{\mathrm{CCOD}}(\mc O; \mathbb{P}) \nonumber \\
        & \propto I(T = 0)(1 - R)\left\{\overline{m}(r_0, \bX) - (1 -\alpha)\right\} \nonumber \\
        & \quad + R\overline{\eta}(\bX) q_0(\bX)\left\{I(S(\bX, Y) \leq r_0) - \overline{m}(r_0, \bX)\right\} \nonumber \\
        & \eqqcolon \varphi^{\mathrm{CCOD}}(\mc O; r_0, \overline{m}, \overline{\eta}, q_0),
    \end{align}
    where 
$$\overline{m}(r,\bX) =\mathbb{P}(S(\bX,Y) \leq r \mid \bX, R=1)$$ is the global CDF of the conformal score, 
$$\overline{\eta}(\bX) = \frac{\mathbb{P}(R=0 \mid \bX)}{\mathbb{P}(R=1 \mid \bX)}$$ is the global missingness risk ratio, and \[q_0(\bX) = \mathbb{P}[T = 0 \mid \bX, R = 0]\]
is the target-site propensity.
\end{theorem}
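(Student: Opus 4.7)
My strategy is to derive $\dot r_0^{\mathrm{CCOD}}$ by projecting the nonparametric gradient $\varphi_0$ from Theorem~\ref{thm:yang} onto the observed-data tangent space induced by CCOD, and then recovering the EIF for $r_0$ via the implicit function theorem applied to the estimating function $\chi(r;\mathbb P):=\mathbb E_{\mathbb P}[I(T=0,R=0)\{I(S(\bX,Y)\leq r)-(1-\alpha)\}]$, which equals zero at $r=r_0$.

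\textbf{Observed-data tangent space.} I would first note that MAR and CCOD together imply $(T,R)\independent Y\mid\bX$, so $m_0(r,\bX)=\overline m(r,\bX)$ and the observed-data density factorizes as $p(\bx)\,p(t\mid\bx)\,p(r\mid t,\bx)\,[p(y\mid\bx)]^{r}$. The score of any regular parametric submodel therefore takes the form $s_X(\bX)+s_T(T\mid\bX)+s_R(R\mid T,\bX)+R\,s_Y(Y\mid\bX)$ with each component mean-zero in its relevant conditional sense, giving the orthogonal direct-sum tangent space $\mc T=\mc T_X\oplus\mc T_T\oplus\mc T_R\oplus\mc T_Y$ where $\mc T_Y=\{R\,s_Y(Y\mid\bX):\mathbb E[s_Y\mid\bX]=0\}$. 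The factor of $R$ in $\mc T_Y$ is essential: the $y$-component of the score is active only when $Y$ is observed.

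\textbf{Decomposition and projection.} Next I would substitute $\overline m$ for $m_0$ in $\varphi_0$ (still a valid gradient in the submodel) and split the result into (a) $A:=I(T=0)(1-R)\{\overline m(r_0,\bX)-(1-\alpha)\}$, a mean-zero function of $(\bX,T,R)$ that already sits in $\mc T_X\oplus\mc T_T\oplus\mc T_R$, and (b) $B:=I(T=0)R\eta_0(\bX)h(\bX,Y)$ with $h=I(S\leq r_0)-\overline m(r_0,\bX)$. Because $\mathbb E[h\mid T,\bX]=0$ under CCOD, the $(\mc T_X\oplus\mc T_T\oplus\mc T_R)$-projection of $B$ vanishes, so only the $\mc T_Y$-piece matters. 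I would determine the projection $R\,s_Y^\star$ by matching inner products against every $R s_Y\in\mc T_Y$:
\[\mathbb E[B\,R s_Y]=\mathbb E[\mathbb P(T{=}0,R{=}1\mid\bX)\,\eta_0 h\, s_Y]=\mathbb E[\mathbb P(T{=}0,R{=}0\mid\bX)\,h\, s_Y],\]
using $(T,R)\independent Y\mid\bX$ and the identity $\eta_0\,\mathbb P(T{=}0,R{=}1\mid\bX)=\mathbb P(T{=}0,R{=}0\mid\bX)$, while $\mathbb E[R s_Y^\star\cdot R s_Y]=\mathbb E[\mathbb P(R{=}1\mid\bX)\,s_Y^\star s_Y]$. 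Matching yields $s_Y^\star=\overline\eta(\bX)q_0(\bX)h$, so the projection of the submodel gradient onto $\mc T$ is $A+R\,\overline\eta(\bX)q_0(\bX)h=\varphi^{\mathrm{CCOD}}$. Differentiating $\chi(r_0(\mathbb P);\mathbb P)=0$ and dividing through by $\partial_r\chi(r_0;\mathbb P)$ (the density of $S$ at $r_0$ within the target missing subgroup) then gives $\dot r_0^{\mathrm{CCOD}}$, up to the normalizing constant absorbed into the $\propto$ sign.

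\textbf{Main obstacle.} The delicate step is executing the $\mc T_Y$-projection correctly: placing the $R$ factor where it belongs (which requires the \emph{observed}-data tangent space rather than the full-data one) and invoking $\eta_0\,\mathbb P(T{=}0,R{=}1\mid\bX)=\mathbb P(T{=}0,R{=}0\mid\bX)$ to convert the target-only augmentation weight $I(T=0)R\eta_0$ into the fully pooled weight $R\,\overline\eta q_0$. This reweighting is precisely where CCOD pays off---under CCOD every $R=1$ subject, from any site, informs $\overline m$ and can be reweighted as if it came from the target's missing subgroup. A secondary technicality is pathwise differentiability through the indicator $I(S\leq r)$, which I would handle by assuming $S(\bX,Y)$ has a positive density at $r_0$ within $\{T=0,R=0\}$ so that the implicit-function step is well-posed.
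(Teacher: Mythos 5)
Your proposal is correct, and it takes a genuinely different route from the paper. You project the known nonparametric gradient $\varphi_0$ of Theorem~\ref{thm:yang} onto the tangent space of the CCOD-restricted model, using (i) the factorization of the observed-data likelihood $p(\bx)p(t\mid\bx)p(r\mid t,\bx)[p(y\mid\bx)]^{r}$ to identify $\Lambda_{\mathbb{P}}=\Lambda_{\bX}\oplus\Lambda_{T\mid\bX}\oplus\Lambda_{R\mid\bX,T}\oplus\Lambda_{RY\mid\bX,R}$, (ii) the fact that $A=I(T=0)(1-R)\{\overline m(r_0,\bX)-(1-\alpha)\}$ already lies in the first three (unrestricted) components while $B=I(T=0)R\eta_0 h$ is orthogonal to them, and (iii) the inner-product matching with the identities $\eta_0(\bX)\mathbb{P}(T=0,R=1\mid\bX)=\mathbb{P}(T=0,R=0\mid\bX)$ and $\overline\eta(\bX)q_0(\bX)=\mathbb{P}(T=0,R=0\mid\bX)/\mathbb{P}(R=1\mid\bX)$, which indeed yields $s_Y^\star=\overline\eta q_0 h$ and hence $\varphi^{\mathrm{CCOD}}$. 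The paper instead differentiates the identifying equation $1-\alpha=\mathbb{E}_{\mathbb{P}}\{\overline m(r_0,\bX)\mid T=0,R=0\}$ along generic parametric submodels (a three-term product-rule computation), reads off the candidate influence function together with its explicit normalizing constant $-\{\mathbb{P}[T=0,R=0]\,\mathbb{E}_{\mathbb{P}}(p_{S\mid\bX,R=1}(r_0,\bX)\mid T=0,R=0)\}^{-1}$, and then verifies membership in $\Lambda_{\mathbb{P}}$ (via Lemma 24 of Rotnitzky et al.) to conclude efficiency. Your route buys modularity---efficiency follows automatically from the projection theorem, reusing Theorem~\ref{thm:yang}---while the paper's direct derivation is self-contained and produces the exact scaling constant rather than a statement up to proportionality. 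One minor looseness: your normalization step via the ``estimating function'' $\chi(r;\mathbb{P})=\mathbb{E}_{\mathbb{P}}[I(T=0,R=0)\{I(S(\bX,Y)\leq r)-(1-\alpha)\}]$ is a full-data moment involving the unobserved $Y$ on $\{R=0\}$, so it is not literally a pathwise-differentiable observed-data functional; the correct normalization is the one inherited from the gradient of Theorem~\ref{thm:yang} (equivalently, the derivative of $r\mapsto\mathbb{E}\{\overline m(r,\bX)\mid T=0,R=0\}$), but since the theorem is stated up to proportionality this does not affect the validity of your argument.
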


Compared to the nonparametric influence function of the $(1-\alpha)$-quantile of the conformal score (\ref{eq:eif-target}), which uses data from the target site only, the semiparametric EIF (\ref{eq:eif-ccod}) leverages data from all sites with observed outcomes $Y$.
Under CCOD, we propose the estimator $\widehat{r}^{\mathrm{CCOD}}$ which solves $0 = \mathbb{P}_n\left[\varphi^{\mathrm{CCOD}}(\mc O; r, \widehat{\overline{m}}, \widehat{\overline{\eta}}, \widehat{q}_0)\right]$ for $r$. We perform cross-fitting such that the nuisance estimators $(\widehat{\overline{m}}, \widehat{\overline{\eta}}, \widehat{q}_0)$ are estimated on an independent data split from the given estimating equation. The following result demonstrates the marginal coverage properties of the conformal interval $\widehat{C}_{\alpha}^{\mathrm{CCOD}}(\bX) = \{y \in \mathbb{R}: S(\bX, y) \leq \widehat{r}^{\mathrm{CCOD}}\}$.

\begin{theorem}\label{thm:cov-ccod}
    Let $D^n$ denote the training data with which $\widehat{r}^{\mathrm{CCOD}}$ is fit, and let $(\bX, T, R)$ denote a new independent test point with associated outcome $Y$. Assume that $(\widehat{\overline{m}}, \widehat{\overline{\eta}}, \widehat{q}_0)$ are each uniformly bounded, and that $\widehat{\overline{m}}(\, \cdot \,, \bx)$ is a non-decreasing function, for each $\bx$. Under Assumptions~\ref{ass:MAR}, \ref{ass:pos}, and \ref{ass:ccod},
    \begin{align*}
        & \mathbb{P}[Y \in \widehat{C}_{\alpha}^{\mathrm{CCOD}}(\bX) \mid T = 0, R = 0, D^n] \\
        &= (1 - \alpha) + O_{\mathbb{P}}(n^{-1/2} + R_n),
    \end{align*}
    where \[R_n = \left\{\lVert \widehat{\overline{\eta}} - \overline{\eta}\rVert + \lVert \widehat{q}_0 - q_0 \rVert\right\} \sup_{r} \lVert \widehat{\overline{m}}(r, \cdot) - \overline{m}(r, \cdot) \rVert.\] Here $\lVert f \rVert^2 = \mathbb{E}_{\mathbb{P}}(f(\mathcal{O})^2)$ is the squared $L_2(\mathbb{P})$ norm.
\end{theorem}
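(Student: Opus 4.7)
The plan is to first translate the coverage statement into a rate-of-convergence statement for $\widehat{r}^{\mathrm{CCOD}}$, and then establish that rate via a Z-estimator decomposition that exploits the Neyman-orthogonal structure of $\varphi^{\mathrm{CCOD}}$. Because $\widehat{C}_\alpha^{\mathrm{CCOD}}(\bX) = \{y : S(\bX, y) \leq \widehat{r}^{\mathrm{CCOD}}\}$ and $\widehat{r}^{\mathrm{CCOD}}$ is a function of $D^n$ only, iterating expectations and using MAR to drop the conditioning on $R = 0$ and CCOD to drop the conditioning on $T = 0$ in the innermost conditional distribution gives $\mathbb{P}[Y \in \widehat{C}_\alpha^{\mathrm{CCOD}}(\bX) \mid T=0, R=0, D^n] = \mathbb{E}[\overline{m}(\widehat{r}^{\mathrm{CCOD}}, \bX) \mid T=0, R=0, D^n]$. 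Since $\mathbb{E}[\overline{m}(r_0, \bX) \mid T=0, R=0] = 1-\alpha$ by definition of $r_0$, and since $\overline{m}(\cdot, \bx)$ is a CDF with uniformly bounded derivative (under the standard assumption of a bounded conditional score density), the coverage error equals $O(|\widehat{r}^{\mathrm{CCOD}} - r_0|)$, so it suffices to show $|\widehat{r}^{\mathrm{CCOD}} - r_0| = O_{\mathbb{P}}(n^{-1/2} + R_n)$.

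For the rate, I would start from $0 = \mathbb{P}_n\, \varphi^{\mathrm{CCOD}}(\cdot; \widehat{r}^{\mathrm{CCOD}}, \widehat{\overline{m}}, \widehat{\overline{\eta}}, \widehat{q}_0)$ and $0 = \mathbb{P}\, \varphi^{\mathrm{CCOD}}(\cdot; r_0, \overline{m}, \overline{\eta}, q_0)$ (the latter by Theorem~\ref{thm:IF-ccod}), and write
\begin{align*}
0 &= (\mathbb{P}_n - \mathbb{P})\varphi^{\mathrm{CCOD}}(\cdot; r_0, \overline{m}, \overline{\eta}, q_0) \\
& \quad + (\mathbb{P}_n - \mathbb{P})\{\varphi^{\mathrm{CCOD}}(\cdot; \widehat{r}^{\mathrm{CCOD}}, \widehat{\overline{m}}, \widehat{\overline{\eta}}, \widehat{q}_0) - \varphi^{\mathrm{CCOD}}(\cdot; r_0, \overline{m}, \overline{\eta}, q_0)\} \\
& \quad + \Phi(\widehat{r}^{\mathrm{CCOD}}; \widehat{\overline{m}}, \widehat{\overline{\eta}}, \widehat{q}_0),
\end{align*}
with $\Phi(r; m, \eta, q) \equiv \mathbb{P}\,\varphi^{\mathrm{CCOD}}(\cdot; r, m, \eta, q)$. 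The first line is $O_{\mathbb{P}}(n^{-1/2})$ by the CLT. The second line is a cross-fit empirical process remainder of order $o_{\mathbb{P}}(n^{-1/2})$ by standard arguments leveraging the independence of the nuisance estimators from the evaluation sample, together with $L_2$-consistency and uniform boundedness. The third line splits into $[\Phi(\widehat{r}^{\mathrm{CCOD}}; \widehat{\overline{m}}, \widehat{\overline{\eta}}, \widehat{q}_0) - \Phi(r_0; \widehat{\overline{m}}, \widehat{\overline{\eta}}, \widehat{q}_0)] + \Phi(r_0; \widehat{\overline{m}}, \widehat{\overline{\eta}}, \widehat{q}_0)$. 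A direct iterated-expectation calculation under MAR+CCOD shows the doubly robust bias
\[\Phi(r_0; \widehat{\overline{m}}, \widehat{\overline{\eta}}, \widehat{q}_0) = \mathbb{E}\big[\mathbb{P}(R=1\mid \bX)\{\overline{\eta}(\bX) q_0(\bX) - \widehat{\overline{\eta}}(\bX)\widehat{q}_0(\bX)\}\{\widehat{\overline{m}}(r_0, \bX) - \overline{m}(r_0, \bX)\}\big],\]
which is $O_{\mathbb{P}}(R_n)$ by Cauchy-Schwarz, boundedness of nuisances, and the triangle inequality $\|\overline{\eta} q_0 - \widehat{\overline{\eta}} \widehat{q}_0\| \lesssim \|\widehat{\overline{\eta}} - \overline{\eta}\| + \|\widehat{q}_0 - q_0\|$. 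Expanding $\Phi(\cdot; \widehat{\overline{m}}, \widehat{\overline{\eta}}, \widehat{q}_0)$ to first order in $r$ about $r_0$ and using invertibility of the derivative $\Phi'(r_0) = \mathbb{P}(T=0, R=0)\, f_{S \mid T=0, R=0}(r_0) > 0$ (a standard positive-density condition at the quantile) allows us to solve for the rate $\widehat{r}^{\mathrm{CCOD}} - r_0 = O_{\mathbb{P}}(n^{-1/2} + R_n)$, which combined with the reduction in the first paragraph yields the theorem.

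The main obstacle is control of the cross-fit empirical process remainder, because $\varphi^{\mathrm{CCOD}}$ contains the non-smooth indicator $I(S(\bX, Y) \leq r)$ and the class indexed by $r$ is not automatically Donsker. This is precisely where the assumption that $\widehat{\overline{m}}(\cdot, \bx)$ is non-decreasing becomes useful: together with the intrinsic monotonicity of $I(S \leq \cdot)$, it keeps $r \mapsto \varphi^{\mathrm{CCOD}}(\cdot; r, \widehat{\overline{m}}, \widehat{\overline{\eta}}, \widehat{q}_0)$ monotone, enabling the standard M/Z-estimator machinery for monotone criterion functions and producing the $o_{\mathbb{P}}(n^{-1/2})$ bound used above. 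Once this technical point is handled, the remaining computations amount to routine doubly-robust algebra.
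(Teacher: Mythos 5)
Your product-bias computation and the Donsker/monotonicity observation are exactly the ingredients the paper uses, but your overall route---first reduce the coverage error to $|\widehat{r}^{\mathrm{CCOD}} - r_0|$, then extract a Z-estimator rate for the quantile---does not prove the theorem as stated, because it leans on hypotheses the theorem does not make. Concretely: (a) the step ``coverage error $= O(|\widehat{r}^{\mathrm{CCOD}} - r_0|)$'' needs a bounded conditional density of the conformal score (a Lipschitz condition on the conditional CDF); (b) solving $\Phi(\widehat{r}^{\mathrm{CCOD}};\widehat{\overline{m}},\widehat{\overline{\eta}},\widehat{q}_0) = O_{\mathbb{P}}(\cdot)$ for $\widehat{r}^{\mathrm{CCOD}} - r_0$ needs the derivative $\Phi'(r_0) \propto f_{S \mid T=0,R=0}(r_0)$ to be nonzero, and in fact differentiability uniformly in the estimated nuisances; and (c) your claimed $o_{\mathbb{P}}(n^{-1/2})$ equicontinuity bound for the second empirical-process line requires consistency of $\widehat{r}^{\mathrm{CCOD}}$ and of the nuisance estimators, which is not assumed (the theorem allows $R_n$ not to vanish; only boundedness and monotonicity of $(\widehat{\overline{m}},\widehat{\overline{\eta}},\widehat{q}_0)$ are imposed). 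These are precisely the extra conditions the paper reserves for the heterogeneous-site result (Theorem~\ref{thm:cov-oracle} via Lemma~\ref{lemma:cdf-cont}) and deliberately avoids in Theorem~\ref{thm:cov-ccod}.

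The paper's proof bypasses the quantile rate altogether: under MAR and CCOD the conditional coverage error equals $\mathbb{E}_{\mathbb{P}}\{\overline{m}(\widehat{r}^{\mathrm{CCOD}},\bX) - (1-\alpha) \mid T=0, R=0, D^n\}$, which is \emph{exactly} $\mathbb{P}\left(\varphi^{\mathrm{CCOD}}(\mathcal{O};\widehat{r}^{\mathrm{CCOD}},\overline{m},\overline{\eta},q_0)\right)/\mathbb{P}[T=0,R=0]$, i.e.\ the population moment at the true nuisances and the \emph{estimated} quantile. Decomposing this as $\mathbb{P}_n\varphi^{\mathrm{CCOD}}(\cdot;\widehat{r}^{\mathrm{CCOD}},\widehat{\overline{m}},\widehat{\overline{\eta}},\widehat{q}_0)$ (zero by construction), minus an empirical-process term that is $O_{\mathbb{P}}(n^{-1/2})$ by Donskerness of the class indexed by $r$ with the bounded, monotone estimated nuisances held fixed (no $o_{\mathbb{P}}(n^{-1/2})$ needed, hence no consistency needed), minus the product-bias term you already derived, gives the conclusion with no density, invertibility, or consistency conditions. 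So your argument is salvageable in two ways: either add the Lipschitz/positive-density/consistency assumptions and prove a weaker statement, or---better---replace the reduction in your first paragraph by this exact identity, after which the rest of your decomposition collapses into the paper's proof.
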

Theorem \ref{thm:cov-ccod} says that conditional on training data, the proposed prediction interval attains nominal coverage at essentially parametric rates (some authors reserve the term parametric rate for a $o_\mathbb{P}(n^{-1/2})$ remainder), so long as the second order asymptotic bias term $R_n$ converges fast enough to zero. The robustness of our estimator is made clear from inspecting this bias, and Theorem~\ref{thm:cov-ccod} supports flexible (i.e., non- or semi-parametric) estimators for component nuisance functions: $R_n = O_{\mathbb{P}}(n^{-1/2})$ will hold whenever $\overline{m}_0, \overline{\eta}_0, q_0$ are estimated at $O_{\mathbb{P}}(n^{-1/4}$) rates, which may be achievable under smoothness, sparsity, or other structural conditions. Since the bias of our coverage error rate is of the order of the product of two errors, it can be substantially smaller relative to that of related work by \citet{lei2021conformal} (which would include data from the target site only in this setting), which has a bias of the order of the minimum of two errors \citep{yang2024doubly}. We note that the boundedness assumptions in Theorem~\ref{thm:cov-ccod} are standard, and that $\widehat{\overline{m}}(\, \cdot \,, \bx)$ should well be monotone, as it estimates the CDF $\overline{m}(\, \cdot \,, \bx)$.
\vspace{10pt}
\begin{remark}
    Whereas the coverage guarantees for prediction intervals in \citet{lei2021conformal} appear to hold only under a particular choice of conformal score (conditional quantile regression [CQR]), our methodology is not restricted by the choice of conformal score. To highlight the robustness of our procedure to the choice of conformal score, in the numerical experiments of Section \ref{sec:simu}, we evaluate three different conformal scores:
\begin{itemize}\setlength{\itemsep}{0pt}
    \item CQR score (see \citet{lei2021conformal}).
    \item Absolute residual (ASR): $S_{\textrm{ASR}}(x_i, y_i) = |y_i-\widehat\mu(x_i)|$, where $\widehat\mu(\cdot)$ is a regression model to estimate $\mu(x)=\Ex\{Y\mid X= x\}$. 
    \item Locally weighted ASR \citep{lei2018distribution}, defined by 
    $$
    S_{\textrm{local ASR}}(x_i, y_i) = \frac{|y_i-\widehat\mu(x_i)|}{\widehat\rho(x_i)},
    $$
    where $\widehat\rho(x_i)$ is an estimate of the conditional mean absolute deviation (MAD), $\mathbb{E}\{|Y_i-\mu(X_i)|~\big|~X_i=x_i\}$, a function of $x_i$ fitted on $\mc D_{11}$. 
\end{itemize}
\end{remark}

\subsection{Heterogeneous outcome distribution across sites}
\label{subsec:hetedis}

In practical settings, it will often be unreasonable to assume that the conditional outcome distribution is the same across all sites. In such cases, some source sites may provide relevant information for constructing target-site specific prediction intervals, whereas other sites may not. Concretely, the distribution of $Y$ given $(T = k, \bX)$ may be close to that in the target site $T = 0$ for some $k$, but not others. In this section, we present an approach that combines information from target and source sites in a data-adaptive manner. Our approach is also privacy-preserving, in that it involves only minimal data sharing of summary statistics across sites. 

Our proposal is to construct a $(1-\alpha)$-quantile for the target site by taking a weighted average of estimated quantiles $(\widehat{r}_0,\widehat{r}_1,\dots, \widehat{r}_{K-1})$, where  $\widehat{r}_k$ uses data from site $k$ for each $k$. We call the weights in the weighted average \textit{federated weights}. In the following subsections, we describe how the site-specific quantiles are estimated, and how the federated weights are obtained.

\subsubsection{Target site} 

For the target site, we estimate $\widehat{r}_0$ nonparametrically as in Section~\ref{sec:setup}. That is, we use the approach motivated by Theorem~\ref{thm:yang}, and take $\widehat{r}_0$ that solves
$
\mathbb{P}_{n}\left[\varphi_0\left(\mc O; \widehat{r}_{0}, \widehat{m}_{0}, \widehat{\eta}_{0}\right)\right]=0, 
$
where $\varphi_0$ is the nonparametric influence function \eqref{eq:eif-target}. 

\subsubsection{Source sites}
To construct a target-site specific quantile estimate using data from site $k \in \{1, \ldots, K-1\}$, we make a working partial CCOD assumption that outcomes have the same conditional distribution in site $k$ as in the target site. Note that we use this working partial CCOD assumption only to derive the form of the influence function; to aggregate information from source sites, we derive federated weights to account for possible violations of CCOD (Section \ref{agg}). An influence function under this assumption is derived in the following result.

\begin{theorem}\label{thm:IF-partial-ccod}
    Under Assumptions~\ref{ass:MAR}, \ref{ass:pos}, and the partial CCOD assumption $p(y \mid \bX, T = k) \equiv p(y \mid \bX, T = 0)$, an influence function (IF) of $r_0$ is given by
    {\small
    \begin{align*}
        & \dot{r}_k(\mc O; \mathbb{P}) \nonumber \\
        & \propto \frac{I(T=0,R=0)}{\mathbb P(T=0,R=0)}[m_0(r_0,\bX) - (1-\alpha)] \\
& \quad + \frac{I(T=k,R=1)}{\mathbb P(T=k,R=1)}\omega_{k,0}(\bX)[I(S(\bX, Y)\leq r_0) -m_k(r_0,\bX)] \\
        & \eqqcolon \varphi_{k} \left(\mc O ; r_0, m_0, m_{k}, \omega_{k,0}\right),
    \end{align*}
    }
    where $m_k(r,\bX) =\mathbb{P}(S(\bX,Y) \leq r \mid \bX, T = k, R=1)$ is the CDF of the conformal score in site $k$, and
$$
\omega_{k,0}(\bx) = \frac{p(\bx\mid T=0, R=0)}{p(\bx\mid T=k, R=1)}
$$
is a density ratio function of covariates $\bX$ under target site to source site $k$. 
\end{theorem}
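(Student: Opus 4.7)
The plan is to derive the IF via pathwise differentiation of the identifying equation
$\Psi(r_0,\mathbb{P}) := \mathbb{E}\bigl[I(T=0,R=0)\{m_0(r_0,\bX)-(1-\alpha)\}\bigr] = 0$, which identifies $r_0$ under Assumption~\ref{ass:MAR} by the same tower-law reasoning that underlies Theorem~\ref{thm:yang}. The implicit-function relation then gives $\dot{r}_k = -\dot\Psi(r_0,\mc O)/\partial_r \Psi|_{r_0}$, with the scalar $-1/\partial_r\Psi|_{r_0}$ absorbed into the ``$\propto$'' in the theorem statement.

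First I perturb along a regular parametric submodel $\{\mathbb{P}_\epsilon\}$ with score $s(\mc O)$ at $\epsilon=0$ satisfying the partial CCOD restriction, and expand $\partial_\epsilon\Psi(r_0,\mathbb{P}_\epsilon)|_{\epsilon=0}$ into two pieces. Differentiating the outer expectation (holding $m_0$ fixed) yields the first contribution, $\mathbb{E}[I(T=0,R=0)\{m_0(r_0,\bX)-(1-\alpha)\}\, s]$. The second piece comes from the dependence of the nuisance on $\mathbb{P}$: invoking partial CCOD I replace $m_0$ by $m_k$ so that, by the standard IF of a conditional expectation,
$\partial_\epsilon m_{0,\mathbb{P}_\epsilon}(r_0,\bX)|_{\epsilon=0} = \mathbb{E}\bigl[\{I(S(\bX,Y)\leq r_0)-m_k(r_0,\bX)\}\, s(\mc O) \,\big|\, \bX,T=k,R=1\bigr]$.

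Next I collapse the resulting conditional expectation into an unconditional integral against $s$. The outer $I(T=0,R=0)$ weights the $\bX$-integrand by $p(\bx,T=0,R=0) = \mathbb{P}(T=0,R=0)\,\omega_{k,0}(\bx)\,p(\bx\mid T=k,R=1)$, reinterpreting the target-site covariate law as a $\omega_{k,0}$-weighted source-site law. This converts the nuisance-derivative contribution to $\frac{\mathbb{P}(T=0,R=0)}{\mathbb{P}(T=k,R=1)}\,\mathbb{E}\bigl[I(T=k,R=1)\,\omega_{k,0}(\bX)\{I(S(\bX,Y)\leq r_0)-m_k(r_0,\bX)\}\, s\bigr]$. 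Combining the two pieces, dividing by the common factor $\mathbb{P}(T=0,R=0)$, and absorbing the remaining $\partial_r\Psi|_{r_0}$ into the ``$\propto$'' gives the claimed two-term expression.

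The main obstacle is the nuisance derivative under the partial CCOD restriction: one must pick the representation $m_0\equiv m_k$ rather than the site-$0$ representation so that the score $s$ is naturally evaluated at $(T=k,R=1)$, which is what gives rise to the indicator $I(T=k,R=1)$ in the final IF. One must also verify that the combined expression is mean zero at $r=r_0$, which follows from the identifying equation together with $\mathbb{E}[\omega_{k,0}(\bX)\{I(S(\bX,Y)\leq r_0)-m_k(r_0,\bX)\}\mid T=k,R=1]=0$ by iterated conditioning on $(\bX,T=k,R=1)$. The remaining algebra---the density-ratio change of measure and the normalization by $\partial_r\Psi|_{r_0}$---is routine.
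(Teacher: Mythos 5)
Your proposal is correct and follows essentially the same route as the paper's proof: pathwise differentiation of the identifying moment equation along a parametric submodel, using the partial CCOD assumption to swap in the site-$k$ representation of the conditional CDF (which is what produces the $I(T=k,R=1)$ term), a Bayes/density-ratio change of measure to obtain $\omega_{k,0}$, and absorption of the derivative in $r$ (the paper's $C_{k,0}(\mathbb{P})$ constant) into the proportionality. Your unconditional-moment and implicit-function-theorem framing is only a cosmetic repackaging of the paper's explicit three-term expansion, so no substantive difference or gap remains.
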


Given some nuisance estimators $\widehat m_0$, $\widehat m_k$ and $\widehat\omega_{k,0}$, we take $\widehat{r}_k$ that solves
$$
\mathbb{P}_{n}\left[\varphi_{k}\left(\mc O; \widehat{r}_{k}, \widehat{m}_{0}, \widehat m_{k}, \widehat{\omega}_{k,0}\right)\right]=0. 
$$
By construction, the quantile estimate $\widehat{r}_k$ uses data from both site $k$ and the target site, but note that the principal need for data sharing comes from the estimation of the density ratio $\omega_{k, 0}$. This can be done with the passing of only coarse summary statistics under flexible models \citep{han2021federated}.

\subsubsection{Aggregation across sites}
\label{agg}
To aggregate information from the target and source sites, we first compute the discrepancy measures $\widehat\chi_k = |\widehat r_0-\widehat r_k|$,
then solve for federated weights $\widehat{\bd w} = (\widehat w_0,\widehat w_1,\dots, \widehat w_{K-1})$ that minimize the following loss:
\begin{align}
\label{eq:loss}
Q(\bd w)& =\mathbb{P}_{n}\Bigg[\bigg\{\varphi_{0}(\mc O; \widehat{r}_0, \widehat{m}_{0}, \widehat{\eta}_{0}) \nonumber\\
&\quad -\sum_{k=1}^{K-1} w_{k} {\varphi}_{k}(\mc O_i; \widehat r_0, \widehat m_0, \widehat m_k, \widehat{\omega}_{k,0})\bigg\}^{2}\Bigg] \nonumber\\
& \quad +\frac{1}{n} \lambda \sum_{k=1}^{K-1}\left|w_{k}\right| \widehat{\chi}_{k}^{2},
\end{align}
subject to $0 \leq w_{k} \leq 1$, for all $k \in\{0,1, \ldots, K-1\}, $ and $\sum_{k=0}^{K-1} w_{k}=1$, and $\lambda$ is a tuning parameter chosen by cross-validation. Heuristically, our approach anchors at the nonparametric estimate $\widehat{r}_0$ and weights site $k$ when it is deemed similar enough to the target site \citep{han2021federated}.

Finally, we compute $\widehat{r}_{0,\text{fed}}$ as the weighted average of the site-specific quantiles: $\widehat r_{0,\text{fed}} = \sum_{k=0}^{K-1} \widehat w_k\widehat r_k$. The federated prediction interval is then defined as $\widehat{C}_{\alpha}^{\mathrm{fed}}(\bX) = \{y \in \mathbb{R}: S(\bX, y) \leq \widehat{r}_{0,\mathrm{fed}}\}$. In the following, we provide a coverage guarantee for the prediction interval based on an estimated quantile that is an arbitrary weighted combination of the relevant (i.e., oracle) source sites.

\begin{theorem}[Oracle coverage result]\label{thm:cov-oracle}
    Let \[\mathcal{S}^* = \{k \geq 1 : p(y \mid \boldsymbol{X}, T = k) \equiv p(y \mid \boldsymbol{X}, T = 0)\},\] which may be empty, denote the source sites for which the partial CCOD assumption holds. Let $D^n$ denote the training data with which $\widehat{r}_{0, \mathrm{fed}}$ is fit, and let $(\bX, T, R)$ denote a new independent test point with associated outcome $Y$. Assume that $(\widehat{\eta}_0, \widehat{m}_0)$ and $(\widehat{\omega}_{k, 0},\widehat{m}_k)$, for $k \in \mathcal{S}^*$, are each uniformly bounded, and that $\widehat{m}_k(\, \cdot \,, \bx)$ is a non-decreasing function for $k \in \{0\} \cup \mathcal{S}^*$, for each $\bx$. For any $w^* = (w_0, \ldots, w_{K - 1})$ with $w_k \geq 0$, $\sum_{k = 0}^{K - 1}w_k = 1$, and satisfying $w_k = 0$ for $k \not \in \{0\} \cup \mathcal{S}^*$, define 
    \[\widehat{C}_{\alpha}^{w^*}(\boldsymbol{X}) = \left\{y \in \mathbb{R}: S(\boldsymbol{X}, y) \leq \sum_{k= 0}^{K - 1} w_k \widehat{r}_k\right\}.\] Then under Assumptions 2.1 and 2.2, and conditions (i)--(iii) of Lemma~\ref{lemma:cdf-cont},
    \begin{align*}
        & \mathbb{P}[Y \in \widehat{C}_{\alpha}^{w^*}(\bX) \mid T = 0, R = 0, D^n] \\
        &= (1 - \alpha) + O_{\mathbb{P}}(n^{-1/2} + R_n^*),
    \end{align*}
    where 
        \begin{align*}
      & R_n^* = w_0\left\{\lVert \widehat{\eta}_{0} - \eta_{0}\rVert \cdot \sup_{r} \lVert \widehat{m}_0(r, \cdot) - m_0(r, \cdot) \rVert\right\} \\
      & \quad + \sum_{k = 1}^{K-1} w_k\bigg\{\lVert \widehat{\omega}_{k, 0} - \omega_{k, 0}\rVert\cdot \sup_{r} \lVert \widehat{m}_k(r, \cdot) - m_k(r, \cdot) \rVert
      \\
      & \quad \quad \quad \quad \quad \quad \quad \quad + \sup_{r} \lVert \widehat{m}_k(r, \cdot) - \widehat{m}_0(r, \cdot) \rVert\bigg\}.
    \end{align*}
\end{theorem}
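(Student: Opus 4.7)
The plan is to reduce the conditional coverage to a CDF evaluation at a convex combination of estimated quantiles, apply the local regularity supplied by Lemma~\ref{lemma:cdf-cont} to convert coverage error into quantile error, and bound the quantile error by a weighted sum of per-site rates. Concretely, since the test point $(\bX, T, R, Y)$ is independent of $D^n$, the conditional coverage equals $F_0\!\big(\sum_{k=0}^{K-1} w_k \widehat r_k\big)$, where $F_0(r) = \mathbb{P}[S(\bX,Y) \leq r \mid T=0, R=0]$ and $F_0(r_0) = 1-\alpha$. Under the density/continuity conditions of Lemma~\ref{lemma:cdf-cont}, $F_0$ is locally Lipschitz around $r_0$, so the coverage error is $O_\mathbb{P}(|\sum_k w_k \widehat r_k - r_0|)$. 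Using $\sum_k w_k = 1$ and the oracle support condition $w_k = 0$ for $k \notin \{0\} \cup \mathcal{S}^*$, the triangle inequality reduces the problem to bounding $|\widehat r_k - r_0|$ for each $k \in \{0\} \cup \mathcal{S}^*$.

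For $k = 0$, the standard $Z$-estimator/Neyman-orthogonal argument applied to $\mathbb{P}_n[\varphi_0(\mc O; \widehat r_0, \widehat m_0, \widehat \eta_0)] = 0$, as used in Theorem~\ref{thm:cov-ccod} and originally in \citet{yang2024doubly}, yields $|\widehat r_0 - r_0| = O_\mathbb{P}(n^{-1/2} + \|\widehat\eta_0 - \eta_0\|\cdot \sup_r \|\widehat m_0(r,\cdot) - m_0(r,\cdot)\|)$. For $k \in \mathcal{S}^*$, linearizing the source-site estimating equation around $r_0$ gives
\[
  |\widehat r_k - r_0| = O_\mathbb{P}\!\left(n^{-1/2} + \big|\mathbb{E}[\varphi_k(\mc O; r_0, \widehat m_0, \widehat m_k, \widehat\omega_{k,0})]\big|\right),
\]
with the centered empirical piece handled by sample splitting and the denominator bounded away from zero by the density condition in Lemma~\ref{lemma:cdf-cont}. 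The bias $\mathbb{E}[\varphi_k(\mc O; r_0, \widehat m_0, \widehat m_k, \widehat\omega_{k,0})]$ is the crux: combining (i) MAR so that $\mathbb{E}[m_0(r_0,\bX) \mid T=0, R=0] = 1-\alpha$, (ii) the density-ratio identity $\mathbb{E}[\omega_{k,0}(\bX)\,g(\bX) \mid T=k, R=1] = \mathbb{E}[g(\bX) \mid T=0, R=0]$, and (iii) partial CCOD ($m_k(r_0,\cdot) = m_0(r_0,\cdot)$), a direct calculation collapses this bias into
\[
  \mathbb{E}[\widehat m_0(r_0,\bX) - \widehat m_k(r_0,\bX) \mid T=0, R=0] + \mathbb{E}[(\widehat\omega_{k,0} - \omega_{k,0})(m_k - \widehat m_k)(r_0,\bX) \mid T=k, R=1].
\]
By Cauchy--Schwarz and the boundedness hypotheses (which absorb the change of measure), this is bounded by $\sup_r \|\widehat m_k(r,\cdot) - \widehat m_0(r,\cdot)\| + \|\widehat\omega_{k,0} - \omega_{k,0}\|\cdot \sup_r \|\widehat m_k(r,\cdot) - m_k(r,\cdot)\|$, reproducing the $k$-th summand of $R_n^*$. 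Weighting by $w_k$ and summing gives the claimed rate.

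The delicate step is the bias expansion for $k \in \mathcal{S}^*$. In Theorem~\ref{thm:cov-ccod} a single outcome-CDF nuisance appears throughout the efficient IF, giving the familiar product-of-errors double robustness. The partial-CCOD IF $\varphi_k$, by contrast, is \emph{hybrid}: it uses $\widehat m_0$ in the target-site summand but $\widehat m_k$ in the source-site summand. At the population level these coincide under partial CCOD, but their estimators need not, and their disagreement enters \emph{linearly} rather than as a product into the bias, producing the first-order term $\sup_r \|\widehat m_k(r,\cdot) - \widehat m_0(r,\cdot)\|$ in $R_n^*$. The careful bookkeeping --- ensuring MAR is invoked at exactly the step that pushes both halves onto the common conditioning event $\{T=0, R=0\}$, and applying the density-ratio identity to the correct term so that $\widehat m_0$ and $\widehat m_k$ can be subtracted --- is where the care is needed. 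The remainder follows from Lemma~\ref{lemma:cdf-cont} and standard cross-fitted $Z$-estimator arguments.
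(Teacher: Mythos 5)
Your proposal is correct and follows essentially the same route as the paper: the coverage error is converted to a CDF increment controlled by the local Lipschitz condition, the per-site quantile errors $|\widehat r_k - r_0|$ are bounded by a cross-fitted $Z$-estimator argument (the content of Lemma~\ref{lemma:cdf-cont} via Lemma 3 of \citet{kennedy2023}), and your explicit bias calculation at $r_0$ --- using MAR, the density-ratio identity, and $m_k \equiv m_0$ under partial CCOD to isolate the linear term $\sup_r\lVert \widehat m_k(r,\cdot)-\widehat m_0(r,\cdot)\rVert$ plus the product term $\lVert\widehat\omega_{k,0}-\omega_{k,0}\rVert\sup_r\lVert\widehat m_k(r,\cdot)-m_k(r,\cdot)\rVert$ --- reproduces exactly the paper's bias algebra for $\varphi_k$. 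Your direct bound $|F_0(r_{w^*})-F_0(r_0)|\lesssim \sum_k w_k|\widehat r_k - r_0|$ simply streamlines the paper's detour through $\mathbb{P}(\varphi_j(\mathcal{O}; r_{w^*},\cdot))$ for a fixed $j\in\mathcal{S}^*$ (yielding $\sum_k w_k R_{n,k}^*$ in place of the paper's equivalent-order $\min_j R_{n,j}^* + \sum_k w_k R_{n,k}^*$); the only small imprecision is attributing control of the centered empirical term to sample splitting, whereas the dependence of $\widehat r_k$ on the estimating-equation fold is handled, as in the paper, by the Donsker property of the $r$-indexed class implied by the boundedness and monotonicity assumptions.
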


 Note that our penalization procedure in (\ref{eq:loss}) is designed such that $w_k \to 0$ whenever $k \notin \mathcal{S}^*$, akin to adaptive Lasso \citep{zou2006adaptive} and trans-Lasso \citep{fan2024fast}.

\subsection{Estimation with data splitting}
To construct target-site-specific prediction intervals for missing outcomes leveraging information from all sites, we follow the steps described in Algorithm \ref{algo:robust}. In brief, we randomly split the training data $\mc D$ into two equal-sized folds $\mc D_1\cup\mc D_2$. 
We train the models for the putative CDFs of the conformal score $m_k$, $k=0,1,\dots, K-1$ on $\mc D_{11}$. Likewise, we train the density ratio model $\omega_{k,0}$ on $\mc D_1$. We fit all nuisance functions using SuperLearner with the base learners being random forest, elastic net, and generalized linear model (GLM). SuperLearner is a meta-learning algorithm that creates an optimal weighted average of the base learners and is shown to be as accurate
as the best possible prediction algorithm \citep{van2007super}. Density ratio models accommodate flexible basis functions and higher order terms to capture higher-order differences such as variance and skewness. One example we consider is the exponential tilt model, which recovers the entire class of natural exponential family distributions, including the normal distribution with mean shift, Bernoulli distribution for binary covariates, and more \citep{qin1998inferences, duan2020fast}. We predict the values from the trained models on $\mc D_2$ and plug these values into the IFs given in Algorithm \ref{algo:robust}. Figure \ref{fig:flow} provides a visualization of the procedure. Full detail on influence function estimation is given in Algorithm \ref{algo:est-if}.

\begin{figure*}
    \includegraphics[width=1\textwidth]{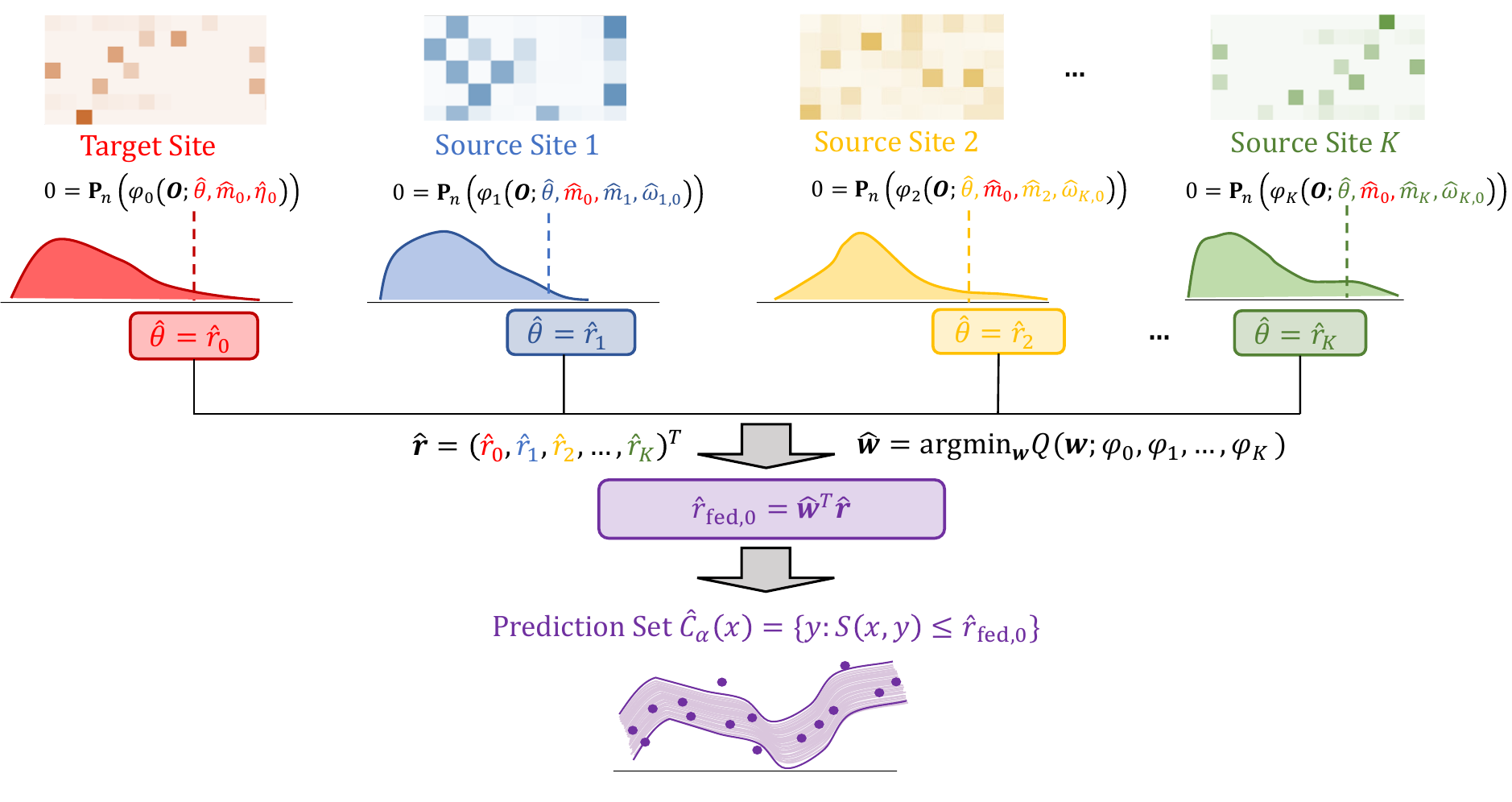}
    \caption{Illustration of the proposed robust algorithm for multi-source conformal prediction. Each $\widehat\theta$ represented by a different color is the estimated $(1-\alpha)$-quantile of the conformal score using data from the site with the same color. $\widehat m_0$ (in red) is the estimated CDF of the conformal score using only the target site data. The other $\widehat m_k~ (k\geq 1)$ are the estimated CDFs of the conformal scores from source sites, and $\widehat\omega_{k,0}~(k\geq 1)$ is the density ratio of site $k$ versus the target site. The federated $\widehat{r}_{\textrm{fed,0}}$ is a weighted average of the site-specific quantiles, with weights given by $\widehat{\bd w}$. The prediction interval $\widehat{C}_{\alpha}(\bx)$ is the set of outcomes $y$ such that the corresponding conformal scores $S(\bx,y)$ in the target are below the threshold $\widehat{r}_{\textrm{fed,0}}$.
 }\label{fig:flow}
\end{figure*}

\begin{algorithm*}
  \caption{Robust multi-source conformal prediction}\label{algo:robust}
  \begin{algorithmic}[1]
  \STATE {\bfseries Input:} {Training data $\mc D = \{\mc O_i = (\bX_i,T_i, R_i,R_iY_i), i=1,\dots, n\}$ with number of sites $K>0$, and the target site is indexed by $T=0$; desired coverage probability $1-\alpha$; estimators of nuisance functions $m_k(\theta, \bX)$, $\eta_0(\bd X)$, and $\omega_{k,0}(\bX)$ for $k=1,\dots,K-1$; a tuning parameter $\lambda$ (in the optimization step); a testing point $\bX=\bx$ from the target site. }
  
  \STATE {\bfseries Output:} {A valid prediction set $\widehat C_\alpha(\bx)$. }
  
  \STATE Split the training data $\mc D$ randomly into $\mc D_1$ and $\mc D_2$, where $\mc D_j=\{\mc O_i\in\mc D, i\in\mc I_j\}$ for $j=1,2$ and $\mc I_1\cup\mc I_2 = \{1,2,\dots, n\}$. 

  \STATE Fit nuisance functions $\widehat m_k$ and $\widehat\omega_{k,0}$ using SuperLearner on $\mc D_1$ and predict them on $\mc D_2$. 
  
  \STATE For the target site $k=0$, find $\widehat\theta = \widehat r_0$ that solves $0 = \dfrac1{|\mc I_2|}\displaystyle\sum_{i\in\mc I_2}\varphi_0(\mc O_i; \widehat\theta, \widehat m_0, \widehat\eta_0).$
  
 \STATE For source sites $k\geq 1$, find $\widehat\theta = \widehat r_k$ that solves $0 = \dfrac1{|\mc I_2|}\displaystyle\sum_{i\in\mc I_2}\varphi_k(\mc O_i; \widehat\theta, \widehat m_0, \widehat m_k, \widehat\omega_{k,0})$. 
 Compute $\widehat\chi_k = |\widehat r_0-\widehat r_k|$. 
  
\STATE Solve for aggregation weights $\widehat{\bd w}=(\widehat w_0, \widehat w_1,\dots\widehat w_{K-1})$ that minimize $Q(\bd w)$ subject to $0\leq w_k\leq 1$  and $\displaystyle\sum_{k=0}^{K-1} w_k = 1$. 

\STATE Compute $\widehat\theta = \widehat r_{0,\text{fed}} = \displaystyle\sum_{k=0}^{K-1} \widehat w_k\widehat r_k$. 
  
  \STATE \textbf{Return: }The prediction set $\widehat C_\alpha(\bx) = \{y: S(\bx,y)\leq \widehat r_{0,\text{fed}}\}$. 
  \end{algorithmic}
\end{algorithm*}

\section{Numerical Experiments}\label{sec:simu}

In this section, we evaluate our proposed method by conducting extensive Monte Carlo simulations, examining aspects such as marginal coverage, conditional coverage, and the width of the prediction interval. In each experiment, we compare our proposed federated method to construct prediction intervals $\widehat C_\alpha(x)$ against (i) the nonparametric efficient method described in \citet{yang2024doubly}, which uses data from the target site only and ignores external source data (target only) and (ii) the method that assumes CCOD holds across sites (pooled sample). In Appendix \ref{apx:simu}, we describe three other methods for learning the federated weights $\hat{w}$ and provide complete simulation results (see details in Appendix \ref{subapx:completeRes}). 

In total, we consider $3$ sample sizes $(300,1000,3000)$ $\times$ $3$ levels of covariate shift (homogeneous, weakly heterogeneous, strongly heterogeneous) $\times$ $2$ types of outcome errors (homoskedastic, heteroskedastic) $\times$ $3$ levels of concept shift (CCOD holds, weak violation, strong violation) $\times$ $3$ different conformal scores (ASR, locally weighted ASR, CQR) $ = 162$ scenarios for our proposed method and the five competitor methods.

\subsection{Data generating process}\label{subsec:DGP}

We generate data from $K=5$ sites, where site 0 is the target site and sites 1 through 4 are source sites, and $T_i \in \{0,\cdots,4\}$ denotes the site of subject $i$. Our goal is to construct valid prediction intervals for a testing point from the target site. We consider the sample size in each site to be $n_k\in\{300,1000,3000\}$, $k=0,...,4$ and generate data over $M=500$ independent Monte Carlo replications. We consider three site-specific covariate data generation scenarios:
\begin{itemize}
    \item Homogeneous covariate distributions: $X_i = \Phi(X_i^*)$ where $X_i^*\sim\mc N(0,1)$, and $\Phi(\cdot)$ is the CDF of the standard normal distribution, for all sites. 
    \item Weakly heterogeneous covariate distributions: $X_i^*\mid T_i\in\{0,1\}\sim\mc N(0,1)$, $X_i^*\mid T_i=2\sim\mc N(2,1)$, $X_i^*\mid T_i=3\sim\mc N(2,4)$, $X_i^*\mid T_i=4\sim\mc N(3,1)$, and  $X_i = \Phi(X_i^*)$. 
    \item Strongly heterogeneous covariate distributions: $X_i^*\mid T_i=0\sim\mc N(0,1)$, $X_i^*\mid T_i=1\sim\mc N(1,1)$, $X_i^*\mid T_i=2\sim\mc N(2,4)$, $X_i^*\mid T_i=3\sim\mc N(3,1)$, $X_i^*\mid T_i=4\sim\mc N(4,4)$, and  $X_i = \Phi(X_i^*)$. 
     \end{itemize}

For each scenario, we generate the propensity score of observing the outcome, i.e., $e(X_i) = P(R_i=1\mid X_i)$, by a logistic regression model, where
\allowdisplaybreaks\begin{align*}
    e(X_i) = \{1+\exp(-0.1+0.5X_i-0.1X_i^2)\}^{-1},
\end{align*}

ensuring that the true propensity score is in $(0.4,0.6)$ to avoid positivity violations. We include additional simulation results where the true propensity score is in the wider range $(0.1,0.9)$. We generate $R_i$ by $\text{Bern}(e(X_i))\in\{0,1\}$ so that outcomes are MAR. 

The outcomes $Y_i$ are generated by 
\allowdisplaybreaks\begin{align}\label{eq:outcome_md}
    Y_i = 5X_i + X_i^2 + \delta(T_i, X_i) + \varepsilon(X_i),
\end{align}
where $\varepsilon(x)\sim N(0, \sigma(x)^2)$. We consider two types of errors: (i) $\sigma(x)=1$ for homoscedastic errors and (ii) $\sigma(x) = -\log(x)$ for heteroscedastic errors. Under both cases, the oracle width of a $90\%$ prediction interval for the outcome is $2\times z_{0.95}\Ex\{\sigma(X_i)\}\approx 3.29$, where $z_{0.95}=1.645$ is the 95th percentile of the standard normal distribution. In addition, note that $\Ex\{\sigma(X_i)\} = \int_0^1\sigma(x)dx = 1$ for both $\sigma(x)=1$ and $\sigma(x)=-\log(x)$.

We also consider varying levels of concept shift corresponding to three cases for $\delta(T_i, X_i)$:
\begin{itemize}
    \setlength{\itemsep}{0pt}
    \item CCOD holds: $\delta(T_i,X_i)=0$, a constant;
    \item Weak violation of CCOD: $\delta(T_i, X_i)=7I(T_i\not=0)$;
    \item Strong violation of CCOD:  $\delta(T_i, X_i)=20I(T_i\not=0)$. 
\end{itemize}

\subsection{Results}\label{subsec:results}
We report the simulation results for $n_k=3000$, $k=0,...,4$ under strongly heterogeneous covariate distributions and strong violation of CCOD in Figure \ref{fig:results_main}. Complete numerical results for all sample sizes, covariate shifts, outcome errors, and concept shifts can be found in Appendix \ref{subapx:completeRes}.

Figure \ref{fig:results_main} summarizes results for (A) marginal coverage, (B) prediction interval width, (C) conditional coverage (C), and (D) weights as a function of discrepancy $\chi^2_k = (\widehat{r}_0 - \widehat{r}_k)^2$ values over $500$ replications. Compared to the target only method, our federated method achieves nominal marginal coverage with tighter dispersion and less variability, shorter prediction interval widths that are close to the oracle interval width (red dashed line), relatively good conditional coverage, and informative weight metrics that indicate how source site quantiles $\widehat{r}_k$ are being weighted as a function of discrepancy compared to the target site quantile $\widehat{r}_0$. The pooled sample method has poor performance for ASR, with overly conservative marginal coverage, interval widths that are on average five times longer than our federated method, and conservative conditional coverage. The performance for local ASR is also poor, with below nominal marginal and conditional coverage. The conditional coverage plots indicate that (1) ASR is not robust, which is consistent with the findings in \citet{lei2018distribution}); (2) both CQR and local ASR have better performance in terms of local coverage, and the results for the target only and our federated method perform similarly with $0.9$ nominal coverage level for many values of $X$. Full conditional coverage plots for all cases are provided in the Appendix (see Figure \ref{fig:localCov}).

\begin{figure*}[htbp!]
    \centering
    \includegraphics[width=0.92\textwidth]{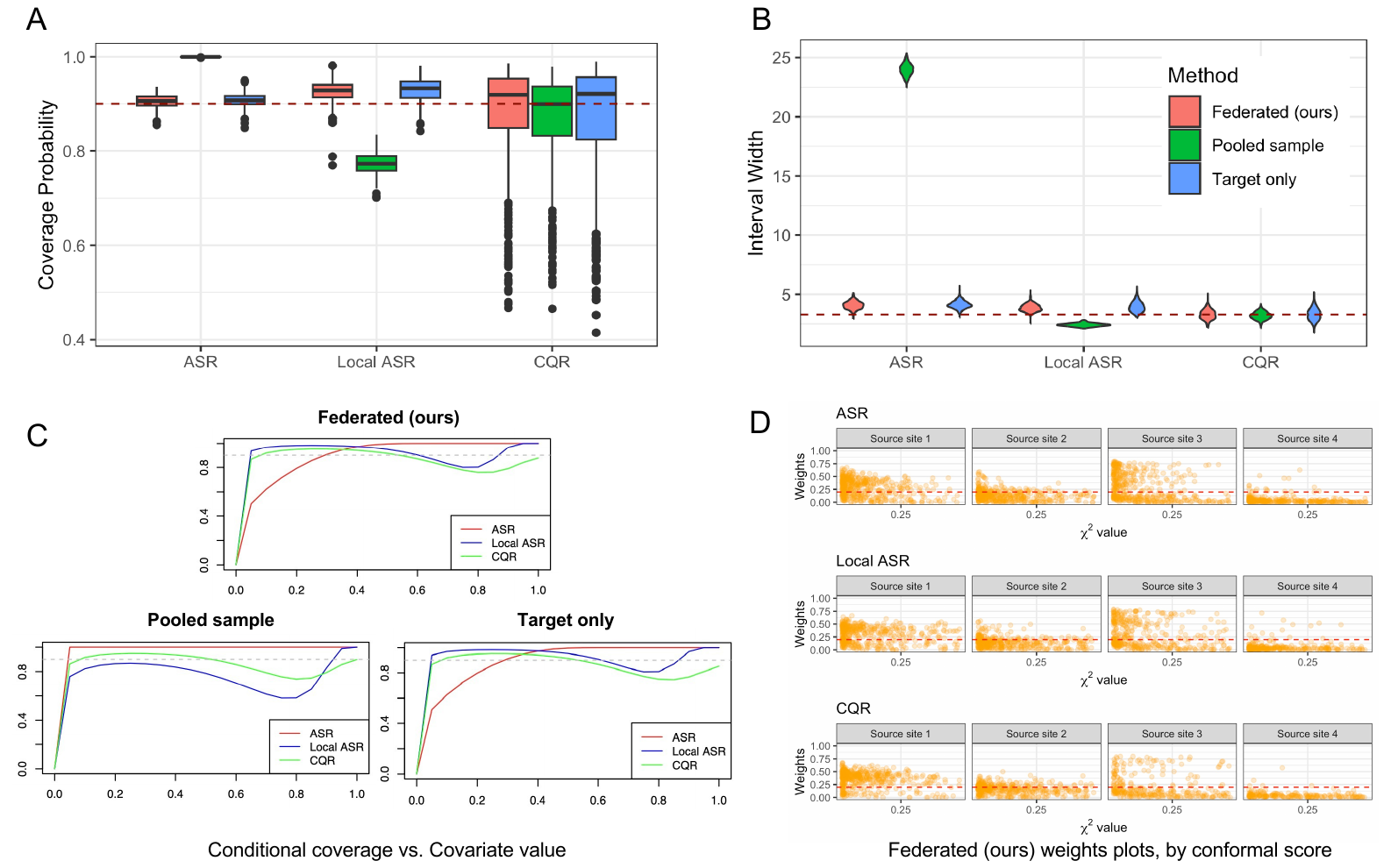}
    \caption{A: Marginal coverage, B: Prediction interval width, C: Conditional coverage, and D: Weights for our proposed federated method compared to the pooled sample and target only methods, where sample size $n_k=3000$, $k=0,...,4$ under strongly heterogeneous covariate distributions and strong violation of CCOD. } 
    \label{fig:results_main}
\end{figure*}

\section{Data Application}\label{sec:data}
Congenital heart defects (CHD) are the most prevalent birth defects in the United States, and over  $40{,}000$ surgeries for CHD are performed each year \citep{pasquali2016congenital}. Prolonged hospital length of stay (LOS) post-surgery places a significant financial burden on families and health care systems and is associated with postoperative morbidity. Moreover, LOS varies geographically, likely due to practice and patient heterogeneity. We utilize data from the Society of Thoracic Surgeons' Congenital Heart Surgery Database (STS-CHSD) which includes audited preoperative, intraoperative, and early postoperative information \citep{overman2019ten} from U.S. congenital heart surgery centers. We identified all Norwood surgeries, which are palliative surgeries for patients with CHD, occurring between January 2016 and June 2022. We used the index operative encounter during a given admission as the unit of observation. There were a total of $3{,}457$ observations, with a median LOS of $40$ days $(\textrm{min: 2}, \textrm{max: }183)$ and $752$ $(21.2\%)$ missing values for LOS. 

Our goal is to provide prediction intervals for LOS for patients in target sites with missing values of LOS. The target site is defined to be one of four mutually exclusive geographic regions according to the U.S. Census Bureau: (i) South, (ii) Midwest,  (iii) West, and (iv) Northeast \citep{census2020}.  We included as confounders demographic factors (e.g. age, race/ethnicity, sex, birthweight, birth height, etc.), genetic syndromes, chromosomal abnormalities, non-cardiac anomalies, pre-operative factors, and a variety of Norwood procedure-specific factors found in the STS-CHSD \citep{tabbutt2012risk}. While the MAR assumption is not testable, it is more likely to be valid in settings such as ours where a rich set of potential confounders are measured prospectively.

Figure \ref{fig:4panels} displays the prediction intervals for hospital LOS following a Norwood procedure for four randomly selected individuals, one in each region, across $\alpha=\{0.1,0.2,0.3,0.4,0.5\}$ and conformal scores $\in \{\textrm{ASR, local ASR, CQR}\}$. For example, using our proposed method and CQR as the conformal score for patient B in the Midwest region, with at least $50\%$ probability, the expected LOS is between $24.3$ to $39.9$ days $(\alpha=0.5)$. Our method generally produces tighter prediction intervals than the target only method of \cite{yang2024doubly}, and the advantage can be practically informative. For example, using local ASR for patient C in the South region, the $80\%$ prediction interval is over $30$ days shorter using our method versus the target only method. The pooled sample method performs similarly to our federated method, suggesting that data-adaptive inference may be nearly as efficient as under full CCOD in this data application.

\begin{figure*}[htbp!]
    \centering
    \includegraphics[width=0.95\textwidth]{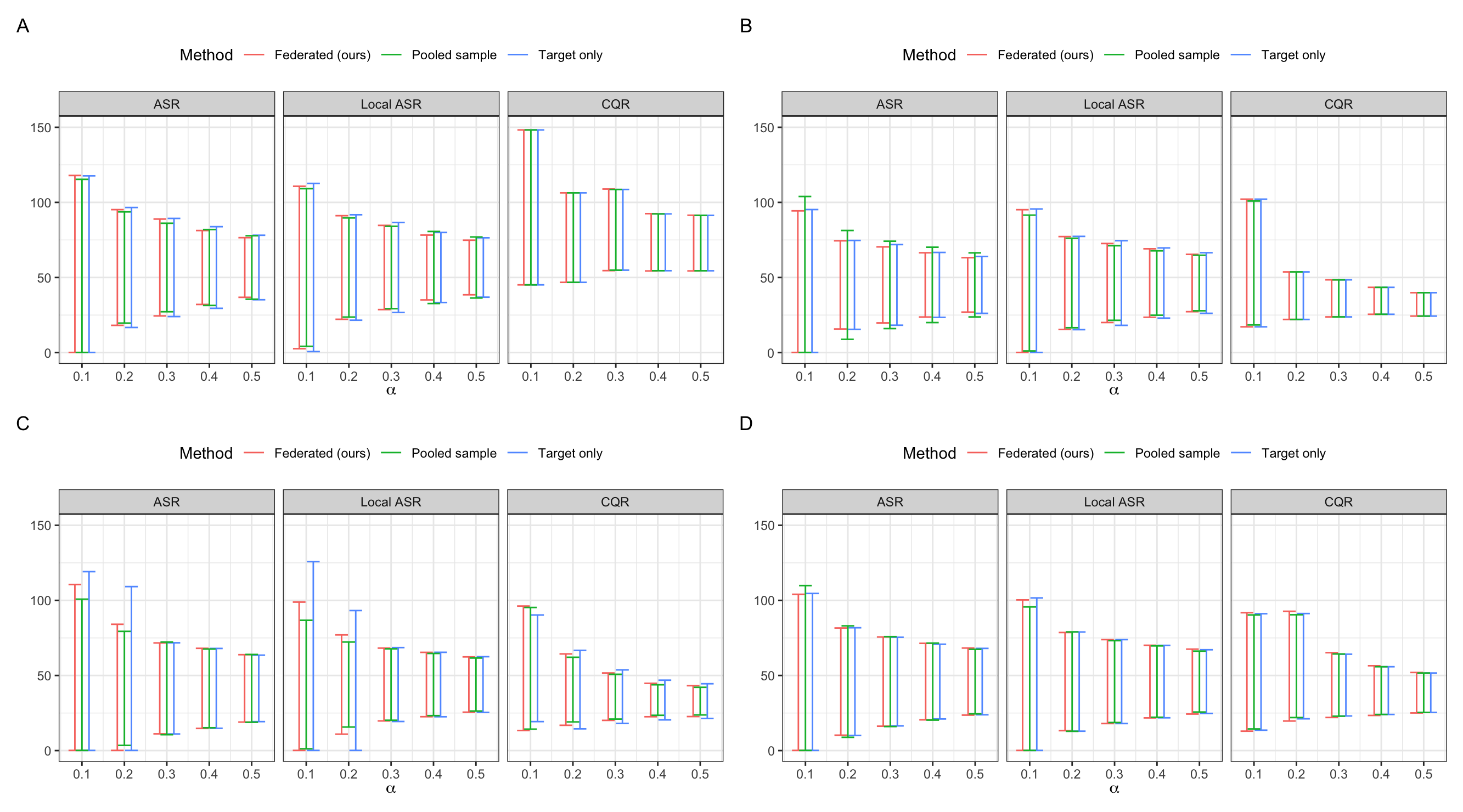}
    \caption{Each panel represents the prediction intervals for hospital LOS for a randomly selected individual following a Norwood procedure across $\alpha=\{0.1,0.2,0.3,0.4,0.5\}$ and conformal score $\in \{\textrm{ASR, local ASR, CQR}\}$ for A: a patient in South, B: a patient in Midwest, C: a patient in West, D: a patient in Northeast. }
    \label{fig:4panels}
\end{figure*}

\section{Discussion}\label{sec:discu}

We proposed a data-driven and distribution-free prediction method to obtain valid prediction intervals for missing outcome data in a target site while exploiting information from multiple potentially heterogeneous sites due to distribution shifts. Our proposal shares the marginal coverage properties of conformal prediction methods and builds on modern semiparametric efficiency theory and federated learning for more robust and efficient uncertainty quantification. When subjects from different data sources are similar, such that one may be willing to assert that the outcome distributions are shared, we derive the efficient influence function leveraging all data sources. In some practical settings, it would be unreasonable to assume that the conditional outcome distribution is the same across sites, i.e., some source sites may provide relevant information for constructing prediction intervals for the target site, whereas other sites may not. In such scenarios, we present a novel approach that combines information from target and source sites in a data-adaptive manner.

Among the three types of conformal scores that we studied, we provide the following recommendations for practitioners. When the sample size is small, e.g., 300 or fewer, we suggest using local ASR, which is more robust against heteroscedasticity compared to ASR and more efficient than CQR, which on average requires larger sample sizes to attain nominal coverage. When sample sizes are larger, CQR provides coverage probabilities close to the nominal level. 

An interesting line of future research concerns the development of covariate-adaptive ensemble weights for aggregating information from multiple sources of data. We conjecture that covariate-adaptive methods could produce prediction intervals that are as efficient as an oracle with knowledge of the optimal prediction interval, although we leave this for future work. Another direction for development is to formalize the framework through a sensitivity analysis approach when the CCOD assumption is violated. There are multiple options for sensitivity analysis, e.g. those working off of the Rosenbaum selection model such as \citet{jin2023sensitivity} and \citet{yin2024conformal}, or through a sensitivity parameter encoding a hypothetical departure from the CCOD assumption via a semiparametric approach \citep{robins2000sensitivity}. Challenges to overcome would be in the estimation of nuisance functions in this case.

\section{Impact Statement}
This paper presents work whose goal is to advance the field of conformal prediction and its applications to precision medicine. There are many potential societal consequences of our work, none of which we feel must be specifically highlighted here.

\section*{Software and Data}

We provide a user-friendly R function \texttt{MuSCI()} implementing the proposed method with an illustrative example, available at: \url{https://github.com/yiliu1998/Multi-Source-Conformal}. 

\section*{Acknowledgements}
This work was supported, in part, by Grant HL5R01HL162893 from the National Heart, Lung, and Blood Institute from the US National Institutes of Health. The data analyzed in this study were provided to the investigators through The Society of Thoracic Surgeons' Task Force on Funded Research Program. The authors thank Sara Pasquali, Meena Nathan, John Mayer, Jr., and Katya Zelevinsky for helpful discussions related to clinical features of CHD and surgical quality.






\bibliography{refs}
\bibliographystyle{icml2024}

\newpage
\appendix
\onecolumn
\section{Technical Details}\label{apx:tech}
\subsection{Proof of Theorem~\ref{thm:IF-ccod}}
Recall from \citet{bkrw1993} and \citet{vandervaart2002} that an influence function $\dot{\chi}(\mc O;\mathbb{P})$ of a functional $\chi(\mathbb{P})$ is a mean-zero finite variance function satisfying the following criterion:
    \[\left. \frac{d}{d\epsilon}\chi(\mathbb{P}_{\epsilon})\right|_{\epsilon = 0} = \mathbb{E}_{\mathbb{P}}\left(\dot{\chi}(\mc O;\mathbb{P}) u(\mc O)\right),\]
    for any regular parametric submodel $\{\mathbb{P}_{\epsilon}: \epsilon \in [0,1)\}$ such that $\mathbb{P}_0 \equiv \mathbb{P}$ with score function $u(\mc O) = \left. \frac{d}{d\epsilon} \log{d \mathbb{P}_{\epsilon}} \right|_{\epsilon = 0}$.
The semiparametric efficient influence function is the unique such function belonging to the tangent space, $\Lambda_{\mathbb{P}}$, which is the closure of the linear span of all scores of regular parametric submodels through $\mathbb{P}$. To find an influence function, we take such a generic submodel, and differentiate an identifying estimating equation with respect to $\epsilon$. Recall that
\[1-\alpha = \mathbb{E}_{\mathbb{P}}(\mathbb{P}(S(\bX,Y) \leq r_0(\alpha)(\mathbb{P}) \mid T=0, \bX, R=1) \mid T=0, R = 0),\]
which holds under Assumptions~\ref{ass:MAR} and \ref{ass:pos}. Under Assumption~\ref{ass:ccod}, we may instead write 
\[1-\alpha = \mathbb{E}_{\mathbb{P}}(\mathbb{P}(S(\bX,Y) \leq r_0(\alpha)(\mathbb{P}) \mid \bX, R=1) \mid T=0, R = 0),\]
since CCOD and MAR together imply $(R, T) \independent Y \mid \bX$. Thus, we have
\begin{align*}
    0 &= \left.\frac{d}{d\epsilon}\mathbb{E}_{\mathbb{P}_{\epsilon}}(\mathbb{P}_{\epsilon}(S(\bX,Y) \leq r_0(\alpha)(\mathbb{P}_{\epsilon}) \mid \bX, R=1) \mid T=0, R = 0)\right|_{\epsilon = 0} \\
    &= \left.\frac{d}{d\epsilon}\mathbb{E}_{\mathbb{P}_{\epsilon}}(\mathbb{P}(S(\bX,Y) \leq r_0(\alpha)(\mathbb{P}) \mid \bX, R=1) \mid T=0, R = 0)\right|_{\epsilon = 0} \\
    & \quad \quad + \left.\frac{d}{d\epsilon}\mathbb{E}_{\mathbb{P}}(\mathbb{P}_{\epsilon}(S(\bX,Y) \leq r_0(\alpha)(\mathbb{P}) \mid \bX, R=1) \mid T=0, R = 0)\right|_{\epsilon = 0} \\
    & \quad \quad + \left.\frac{d}{d\epsilon}\mathbb{E}_{\mathbb{P}}(\mathbb{P}(S(\bX,Y) \leq r_0(\alpha)(\mathbb{P}_{\epsilon}) \mid \bX, R=1) \mid T=0, R = 0)\right|_{\epsilon = 0} 
\end{align*}
Before proceeding, let  $u_{B \mid C}$ be the conditional score function for $B$ given $C$, for arbitrary $B$ and $C$, and note the key properties that (i) $\mathbb{E}_{\mathbb{P}}(u_{B \mid C} \mid C) = 0$, and (ii) $u_{B, C} = u_{B \mid C} + u_C$. Now, for the first of the above three terms, we have
\begin{align*}
    & \left.\frac{d}{d\epsilon}\mathbb{E}_{\mathbb{P}_{\epsilon}}(\overline{m}(r_0, \bX) \mid T=0, R = 0)\right|_{\epsilon = 0} \\
    &= \mathbb{E}_{\mathbb{P}}(\{\overline{m}(r_0, \bX) - (1 - \alpha)\}u_{\bX \mid T = 0, R = 0} \mid T=0, R = 0) \\
    &= \mathbb{E}_{\mathbb{P}}\left(\frac{I(T = 0, R = 0)}{\mathbb{P}[T = 0, R = 0]}\{\overline{m}(r_0, \bX) - (1 - \alpha)\}u_{\bX \mid T, R} \right) \\
    &= \mathbb{E}_{\mathbb{P}}\left(\frac{I(T = 0, R = 0)}{\mathbb{P}[T = 0, R = 0]}\{\overline{m}(r_0, \bX) - (1 - \alpha)\}u(\mathcal{O}) \right),
\end{align*}
where in the last equality we are able to add in $u_{T, R}$ since $I(T = 0, R = 0)\{\overline{m}(r_0, \bX) - (1 - \alpha)\}$ has mean zero given $(T, R)$ by construction, and we can add in $u_{RY \mid \bX, T, R}$ since this has mean zero given $(\bX, T, R)$. Similarly, for the second term above, we have
\begin{align*}
    & \left.\frac{d}{d\epsilon}\mathbb{E}_{\mathbb{P}}(\overline{m}_{\epsilon}(r_0, \bX) \mid T=0, R = 0)\right|_{\epsilon = 0} \\
    & = \mathbb{E}_{\mathbb{P}}\left(\mathbb{E}_{\mathbb{P}}(\{I(S(\bX, Y) \leq r_0) - \overline{m}(r_0, \bX)\}u_{Y \mid \bX, R = 1} \mid \bX, R = 1) \mid T = 0, R = 0\right) \\
    &= \mathbb{E}_{\mathbb{P}}\left(\frac{I(T = 0, R = 0)}{\mathbb{P}[T = 0, R = 0]}\mathbb{E}_{\mathbb{P}}\left(\frac{R}{\mathbb{P}[R = 1 \mid \bX]}\{I(S(\bX, Y) \leq r_0) - \overline{m}(r_0, \bX)\}u_{RY \mid \bX, R} \mid \bX\right)\right) \\
    &= \mathbb{E}_{\mathbb{P}}\left(\frac{\mathbb{P}[T = 0 \mid \bX, R = 0]}{\mathbb{P}[T = 0, R = 0]}\frac{\mathbb{P}[R = 0 \mid \bX]}{\mathbb{P}[R = 1 \mid \bX]}R\{I(S(\bX, Y) \leq r_0) - \overline{m}(r_0, \bX)\}u(\mathcal{O})\right).
\end{align*}

Finally, for the third term above, we have
\begin{align*}
    & \left.\frac{d}{d\epsilon}\mathbb{E}_{\mathbb{P}}(\overline{m}(r_{0}(\alpha)(\mathbb{P}_{\epsilon}), \bX) \mid T=0, R = 0)\right|_{\epsilon = 0} \\
    &= \mathbb{E}_{\mathbb{P}}\left(p_{S \mid \bX, R = 1}(r_0, \bX) \mid T = 0, R = 0\right) \left.\frac{d}{d\epsilon}r_{0}(\alpha)(\mathbb{P}_{\epsilon})\right|_{\epsilon = 0},
\end{align*}
where $p_{S \mid \bX, R = 1}(r_0, \bX)$ is the conditional density of $S(\bX, Y)$ given $\bX, R = 1$, evaluated at $r_0$. Rearranging the original differentiated estimating equation, we have
\[\left.\frac{d}{d\epsilon}r_{0}(\alpha)(\mathbb{P}_{\epsilon})\right|_{\epsilon = 0} = \mathbb{E}_{\mathbb{P}}(\dot{r}_0^{\mathrm{CCOD}}(\mathcal{O}; \mathbb{P}) u(\mathcal{O})),\]
where $\dot{r}_0^{\mathrm{CCOD}}(\mathcal{O}; \mathbb{P}) = -\{\mathbb{P}[T = 0, R = 0] \mathbb{E}_{\mathbb{P}}\left(p_{S \mid \bX, R = 1}(r_0, \bX) \mid T = 0, R = 0\right)\}^{-1}\varphi^{\mathrm{CCOD}}(\mathcal{O}; r_0, \overline{m}, \overline{\eta}, q_0)$, and $\varphi^{\mathrm{CCOD}}$ is as defined in Section~\ref{subsec:ccod}. By Lemma 24 of \citet{rotnitzky2020}, the tangent space of
the semiparametric model at $\mathbb{P}$ is $\Lambda_{\mathbb{P}} = \Lambda_{\bX} \oplus \Lambda_{T \mid \bX} \oplus \Lambda_{R \mid \bX, T} \oplus \Lambda_{RY \mid \bX, R}$, where $\Lambda_{B \mid C} = \{g(B, C) \in L_2(\mathbb{P}): \mathbb{E}(g \mid C) = 0\}$. It is straightforward to verify that $\dot{r}_0^{\mathrm{CCOD}}(\mathcal{O}; \mathbb{P}) \in \Lambda_{\mathbb{P}}$, so it is the semiparametric efficient influence function under Assumption~\ref{ass:ccod}.

\subsection{Proof of Theorem~\ref{thm:cov-ccod}}
Write $\mathbb{P}(f) = \mathbb{E}_{\mathbb{P}}(f(\mathcal{O}) \mid D^n)$, for any $f$. Observe that for any $r$, possibly a function of training data $D^n$, and for $\mathcal{O} \independent D^n$,
\begin{align*}
&\mathbb{P}\left(\varphi^{\mathrm{CCOD}}(\mathcal{O}; r, \widehat{\overline{m}}, \widehat{\overline{\eta}}, \widehat{q}_0)\right) \\
&= \mathbb{P}\left(\mathbb{P}[T = 0, R = 0 \mid \bX]\left\{\widehat{\overline{m}}(r, \bX) - (1 - \alpha)\right\} + \mathbb{P}[R = 1 \mid \bX] \widehat{\overline{\eta}}(\bX)\widehat{q}_0(\bX)\left\{\overline{m}(r, \bX) - \widehat{\overline{m}}(r, \bX)\right\}\right) \\
&= \mathbb{P}\left(\mathbb{P}[R = 1 \mid \bX]\left[\left\{q_0(\bX)\overline{\eta}(\bX) - \widehat{q}_0(\bX)\widehat{\overline{\eta}}(\bX)\right\}\left\{\widehat{\overline{m}}(r, \bX) - \overline{m}(r, \bX)\right\} + q_0(\bX)\overline{\eta}(\bX) \left\{\overline{m}(r, \bX) - (1 - \alpha)\right\}\right] \right).
\end{align*}
Thus, omitting inputs, we have
\begin{equation}\label{eq:prod-bias}
\mathbb{P}\left(\varphi^{\mathrm{CCOD}}(\mathcal{O}; r, \widehat{\overline{m}}, \widehat{\overline{\eta}}, \widehat{q}_0) - \varphi^{\mathrm{CCOD}}(\mathcal{O}; r, \overline{m}, \overline{\eta}, q_0)\right) = \mathbb{P}\left(\mathbb{P}[R = 1 \mid \bX]\left\{q_0\overline{\eta} - \widehat{q}_0\widehat{\overline{\eta}}\right\}\left\{\widehat{\overline{m}}(r, \cdot) - \overline{m}(r, \cdot)\right\}\right)
\end{equation}
On the other hand, by definition,
\begin{align*}
    &\mathbb{P}[Y \in \widehat{C}_{\alpha}^{\mathrm{CCOD}}(\bX) \mid T = 0, R = 0, D^n] - (1 - \alpha) \\
    &= \mathbb{P}[S(\bX, Y) \leq \widehat{r}^{\mathrm{CCOD}} \mid T = 0, R = 0, D^n] - (1 - \alpha)\\
    &= \mathbb{E}_{\mathbb{P}}\left(\overline{m}(\widehat{r}^{\mathrm{CCOD}}, \bX) - (1 - \alpha)\mid T = 0, R = 0, D^n\right) \\
    &= \frac{\mathbb{E}_{\mathbb{P}}\left(\mathbb{P}[R = 1 \mid \bX] q_0(\bX)\overline{\eta}(\bX)\left\{\overline{m}(\widehat{r}^{\mathrm{CCOD}}, \bX) - (1 - \alpha)\right\}\mid D^n\right)}{\mathbb{P}[T = 0, R = 0]} \\
    &= \frac{\mathbb{P}\left(\varphi^{\mathrm{CCOD}}(\mathcal{O}; \widehat{r}^{\mathrm{CCOD}}, \overline{m}, \overline{\eta}, q_0)\right)}{\mathbb{P}[T = 0, R = 0]}
\end{align*}
Finally, we decompose
\begin{align*}
&\mathbb{P}\left(\varphi^{\mathrm{CCOD}}(\mathcal{O}; \widehat{r}^{\mathrm{CCOD}}, \overline{m}, \overline{\eta}, q_0)\right) \\
&= \mathbb{P}_n\left(\varphi^{\mathrm{CCOD}}(\mathcal{O}; \widehat{r}^{\mathrm{CCOD}}, \widehat{\overline{m}}, \widehat{\overline{\eta}}, \widehat{q}_0)\right)
\\ 
& \quad \quad - (\mathbb{P}_n - \mathbb{P})\left(\varphi^{\mathrm{CCOD}}(\mathcal{O}; \widehat{r}^{\mathrm{CCOD}}, \widehat{\overline{m}}, \widehat{\overline{\eta}}, \widehat{q}_0)\right) \\
& \quad \quad - \mathbb{P}\left(\varphi^{\mathrm{CCOD}}(\mathcal{O}; \widehat{r}^{\mathrm{CCOD}}, \widehat{\overline{m}}, \widehat{\overline{\eta}}, \widehat{q}_0) - \varphi^{\mathrm{CCOD}}(\mathcal{O}; \widehat{r}^{\mathrm{CCOD}}, \overline{m}, \overline{\eta}, q_0)\right),
\end{align*}
By construction of $\widehat{r}^{\mathrm{CCOD}}$, the first term above is 0, while the third term is $O_{\mathbb{P}}(R_n)$ by the product bias in~\eqref{eq:prod-bias} and boundedness of $q_0, \overline{\eta}, \widehat{q}_0, \widehat{\overline{\eta}}$. By our assumptions about $\widehat{\overline{m}}$ (monotonicity and boundedness) and $(\widehat{\overline{\eta}}, \widehat{q}_0)$ (boundedness), we can note that $\{\varphi^{\mathrm{CCOD}}(\, \cdot \,; r, \widehat{\overline{m}}, \widehat{\overline{\eta}}, \widehat{q}_0) : r \in \mathbb{R}\}$ is a Donsker class (using similar arguments to Theorem 2 in \citet{yang2024doubly}), so that the second term is $O_{\mathbb{P}}(n^{-1/2})$. Combining these yields the result.

\subsection{Proof of Theorem~\ref{thm:IF-partial-ccod}}\label{subapx:IF-der}
Here, we derive the influence function for a non-target source site $k$, used in Section \ref{subsec:hetedis}, making the working assumption of a common conditional outcome distribution between the target site and site $k$, $p(Y\mid\bX, T=0) = p(Y\mid\bX, T=k)$. Note that our data-adaptive method weights source sites that can violate CCOD; we use this working partial CCOD assumption only to derive the form of the efficient influence function to facilitate downstream analysis. Our derivation is very similar to that in the proof of Theorem~\ref{thm:IF-ccod}. To begin, observe that
\begin{align*}
    1 - \alpha & = \mathbb E_{\mathbb{P}}\left\{{\mathbb{P}}[S(\bX, Y)\leq r_0(\alpha)({{\mathbb{P}}})\mid\bX, T=0, R=1]\mid T=0, R=0\right\}\\
    & = \mathbb E_{\mathbb{P}}\left\{\mathbb P[S(\bX, Y)\leq r_0(\alpha)({\mathbb{P}})\mid\bX, T=k, R=1]\mid T=0, R=0\right\}. 
\end{align*}
In addition, 
\begin{align}
    0 & = \frac{\partial}{\partial\epsilon}(1-\alpha)\Big|_{\epsilon=0}\nonumber \\ 
    & = \frac{\partial}{\partial\epsilon}\mathbb E_{{\mathbb{P}}_\epsilon}\left\{m_{k,\epsilon}(r_0(\alpha)({\mathbb{P}}_\epsilon),\bX)\mid T=0, R=0\right\}\Big|_{\epsilon=0}\nonumber\\
    & = \mathbb E_{\mathbb{P}}\left\{[m_k(r_0(\alpha)({\mathbb{P}}),\bX)-(1-\alpha)]u_{\bX\mid T=0, R=0}\mid T=0, R=0\right\}\label{eq:IF-der-1}\\
    & \quad + \mathbb E_{\mathbb{P}}\big\{\mathbb E_{\mathbb{P}}\{[I(S(\bX, Y)\leq r_0(\alpha)({\mathbb{P}})) - m_k(r_0(\alpha)({\mathbb{P}}), \bX)]u_{Y\mid \bX, T=k, R=1}\mid\bX, T=k, R=1\}\mid T=0, R=0\big\}\label{eq:IF-der-2}\\
    & \quad + \underbrace{\mathbb E_{\mathbb{P}}\{f_{S\mid \bX, T=k, R=1}(r_0(\alpha)({\mathbb{P}})\mid\bX, T=k, R=1)\mid T=0, R=0\}}_{C_{k,0}({\mathbb{P}})}\cdot \frac{\partial}{\partial\epsilon}r_0(\alpha)({\mathbb{P}}_\epsilon)\Big|_{\epsilon=0}, \nonumber
\end{align}
where $f_{S\mid \bX, T=k, R=1}$ is the conditional density function of $S(\bx, y)$, i.e., the derivative of $m_k$. 

Furthermore, we can write,
\begin{align*}
    \eqref{eq:IF-der-1} & = \mathbb E_{\mathbb{P}}\left\{\frac{I(T=0, R=0)}{\mathbb {\mathbb{P}}(T=0, R=0)}[m_k(r_0(\alpha)({\mathbb{P}}),\bX)-(1-\alpha)]u(0)\right\},\\
    \eqref{eq:IF-der-2} & =\mathbb E_{\mathbb{P}}\left\{\frac{I(T=0, R=0)}{\mathbb P(T=0, R=0)}\mathbb E_{\mathbb{P}}\left(\frac{I(T=k, R=1)}{\mathbb P(T=k, R=1\mid\bX)}[I(S(\bX, Y)\leq r_0(\alpha)({\mathbb{P}}))-m_k(r_0(\alpha)({\mathbb{P}}),\bX)]u_{Y\mid\bX,T,R}\mid\bX\right)\right\}\\
    & = \mathbb E_{\mathbb{P}}\left\{\frac{I(T=0, R=0)}{\mathbb P(T=0, R=0)}\frac{\mathbb P(T=0, R=0\mid\bX)}{\mathbb P(T=k, R=1\mid\bX)}[I(S(\bX, Y)\leq r_0(\alpha)({\mathbb{P}}))-m_k(r_0(\alpha)({\mathbb{P}}),\bX)]u(0)\right\},
\end{align*}
by the tower law. Therefore, rearranging the terms, we can obtain
\begin{align*}
    \frac{\partial}{\partial\epsilon}r_0(\alpha)({\mathbb{P}}_\epsilon)\Big|_{\epsilon=0} = -C_{k,0}({\mathbb{P}})^{-1}\{\eqref{eq:IF-der-1} + \eqref{eq:IF-der-2}\}.
\end{align*}
Therefore, an influence function of $r_0(\alpha)(\cdot)$ at ${\mathbb{P}}$ is
\begin{align*}
    \dot{r}_0(\alpha)(\mc O;{\mathbb{P}}) &= -\frac{C_{k,0}({\mathbb{P}})^{-1}}{\mathbb P(T=0, R=0)}\bigg\{I(T=0,R=0)[\underbrace{m_k(r_0(\alpha)({\mathbb{P}}),\bX)}_{=m_0\text{ under our assumption}}-(1-\alpha)] \\
    & \quad\quad\quad + I(T=k,R=1)\frac{\mathbb P(T=0, R=0\mid\bX)}{\mathbb P(T=k, R=1\mid\bX)}[I(S(\bX, Y)\leq r_0(\alpha)({\mathbb{P}})-m_k(r_0(\alpha)({\mathbb{P}}),\bX))]\bigg\} \\
    & = \underbrace{-\frac{C_{k,0}({\mathbb{P}})^{-1}}{\mathbb P(T=0, R=0)}}_{\text{a probability constant}}\varphi_k(\mc O;r_0,m_0,m_k,\omega_{k,0}).
\end{align*}
Observe that, by Bayes' rule, 
$$
\frac{\mathbb P(T=0,R=0\mid\bX)}{\mathbb P(T=k, R=1\mid\bX)} = \underbrace{\frac{\mathbb P(\bX\mid T=0,R=0)}{\mathbb P(\bX\mid T=k, R=1)}}_{\omega_{k,0}(\bX)}\cdot\frac{\mathbb P(T=0,R=0)}{\mathbb P(T=k, R=1)}.
$$
Hence, we can work with
\begin{align*}
    \varphi_k(\mc O;\theta,m_0,m_k,\omega_{k,0}) = \frac{I(T=0,R=0)}{\mathbb P(T=0,R=0)}[m_0(\theta,\bX) - (1-\alpha)] + \frac{I(T=k,R=1)}{\mathbb P(T=k,R=1)}\omega_{k,0}(\bX)[I(S(\bX, Y)\leq\theta) -m_k(\theta,\bX)]. 
\end{align*}

\subsection{Proof of Theorem~\ref{thm:cov-oracle}}

Write $r_{w^*} = \sum_{k=0}^{K-1}w_k \widehat{r}_k$. By
construction of $\varphi_j$, we have
\begin{align}
  \begin{split}\label{eq:decomp}
    \mathbb{P}[Y \in \widehat{C}_{\alpha}^{w^*}(\bX) \mid T = 0, R = 0, D^n] - (1 - \alpha)
    &= \mathbb{P}\left[S(\bX, Y) \leq  r_{w^*} \, \middle| \, T = 0, R = 0, D^n\right] - (1 - \alpha)\\
    &= \frac{\mathbb{P}\left(\varphi_j(\mathcal{O}; r_{w^*}, m_0, m_j, \omega_{j,0})\right)}{\mathbb{P}[T = 0, R = 0]},
  \end{split}
\end{align}
where the last equality holds for all $j \in \mathcal{S}^*$, and the
numerator could also be replaced by
$\mathbb{P}\left(\varphi_0(\mathcal{O}; r_{w^*}, m_0,
  \eta_0)\right)$. Now, see that for any $j \in \mathcal{S}^*$,
\begin{align*}
  &\mathbb{P}\left(\varphi_j(\mathcal{O}; r_{w^*}, m_0, m_j, \omega_{j,0})\right) \\
  &= \mathbb{P}\left(\varphi_j(\mathcal{O}; r_{w^*}, m_0, m_j, \omega_{j,0}) -
    \varphi_j(\mathcal{O};  \widehat{r}_j, m_0, m_j, \omega_{j,0})\right)
    + \mathbb{P}\left(\varphi_j(\mathcal{O}; \widehat{r}_j, m_0, m_j, \omega_{j,0})\right) \\
  &= \mathbb{P}\left(\frac{I(T = 0, R = 0)}{\mathbb{P}(T = 0, R = 0)}
    \left\{m_0(r_{w^*}, \bX) - m_0(\widehat{r}_j, \bX)\right\}\right) +
    \mathbb{P}\left(\varphi_j(\mathcal{O}; \widehat{r}_j, m_0, m_j, \omega_{j,0})\right).
\end{align*}
Further, as in the proof of Theorem~\ref{thm:cov-ccod}, we can decompose the latter term as
\begin{align*}
  \mathbb{P}\left(\varphi_j(\mathcal{O}; \widehat{r}_j, m_0, m_j, \omega_{j,0})\right)
  &= \mathbb{P}_n\left(\varphi_j(\mathcal{O}; \widehat{r}_j,
    \widehat{m}_0, \widehat{m}_j, \widehat{\omega}_{j,0})\right) \\
  & \quad \quad - (\mathbb{P}_n - \mathbb{P})\left(\varphi_j(\mathcal{O}; \widehat{r}_j,
    \widehat{m}_0, \widehat{m}_j, \widehat{\omega}_{j,0})\right) \\
  & \quad \quad - \mathbb{P}\left(\varphi_j(\mathcal{O}; \widehat{r}_j,
    \widehat{m}_0, \widehat{m}_j, \widehat{\omega}_{j,0}) -
    \varphi_j(\mathcal{O}; \widehat{r}_j, m_0, m_j, \omega_{j,0})\right).
\end{align*}
By construction of $\widehat{r}_j$, the first term in this sum is 0,
the second $O_{\mathbb{P}}(n^{-1/2})$ because
$\{\varphi_j(\, \cdot\,; r, \widehat{m}_0, \widehat{m}_j,
\widehat{\omega}_{j,0}): r \in \mathbb{R}\}$ is a Donsker class under
our assumptions, and the third term is
$O_{\mathbb{P}}(R_{n,j}^* + n^{-1/2})$, where
\[R_{n,j}^* = \sup_{r} \lVert \widehat{m}_0(r, \cdot) -
  \widehat{m}_j(r, \cdot) \rVert + \lVert \widehat{\omega}_{j, 0} -
  \omega_{j, 0}\rVert \cdot \sup_{r} \lVert \widehat{m}_j(r, \cdot) -
  m_j(r, \cdot) \rVert,\] since, assuming $\mathbb{P}[T = 0, R = 0]$
is estimated by sample means in the training data (so that
$\widehat{\mathbb{P}}[T = 0, R = 0] - \mathbb{P}[T = 0, R = 0] =
O_{\mathbb{P}}(n^{-1/2})$), we have for any $r$ possibly dependent on
$D^n$,
\begin{align*}
  &\mathbb{P}\left(\varphi_j(\mathcal{O}; r, \widehat{m}_0, \widehat{m}_j, \widehat{\omega}_{j,0}) - \varphi_j(\mathcal{O}; r, m_0, m_j, \omega_{j,0})\right) \\
  &= \mathbb{P}\left(\frac{I(T=0,R=0)}{\mathbb P(T=0,R=0)}[\widehat{m}_0(r,\cdot) - m_0(r, \cdot)] + \frac{I(T=j,R=1)}{\mathbb P(T=j,R=1)}\widehat{\omega}_{j,0}[m_j(r, \cdot)-\widehat{m}_j(r,\cdot)]\right) + O_{\mathbb{P}}(n^{-1/2})\\
  &= \mathbb{P}\left(\frac{\mathbb{P}[T=j, R = 1 \mid \bX]}{\mathbb{P}[T = j, R = 1]}\left\{\omega_{j,0}[\widehat{m}_0(r,\cdot) - m_0(r, \cdot)] + \widehat{\omega}_{j,0}[m_j(r, \cdot)-\widehat{m}_j(r,\cdot)]\right\}\right) + O_{\mathbb{P}}(n^{-1/2}) \\
  &= O_{\mathbb{P}}\left(R_{n,j}^* + n^{-1/2}\right).
\end{align*}

For the target site,
\begin{align*}
  &\mathbb{P}\left(\varphi_0(\mathcal{O}; r_{w^*}, m_0, \eta_0)\right) \\
  &= \mathbb{P}\left(\varphi_0(\mathcal{O}; r_{w^*}, m_0, \eta_0) -
    \varphi_0(\mathcal{O};  \widehat{r}_0, m_0, \eta_0)\right)
    + \mathbb{P}\left(\varphi_0(\mathcal{O}; \widehat{r}_0, m_0, \eta_0)\right) \\
  &= \mathbb{P}\left(\frac{I(T = 0, R = 0)}{\mathbb{P}(T = 0, R = 0)}
    \left\{m_0(r_{w^*}, \bX) - m_0(\widehat{r}_0, \bX)\right\}\right) +
    \mathbb{P}\left(\varphi_0(\mathcal{O}; \widehat{r}_0, m_0, \eta_0)\right)
\end{align*}
and by Theorem 3 in \citet{yang2024doubly},
$\mathbb{P}\left(\varphi_0(\mathcal{O}; \widehat{r}_0, m_0,
  \eta_0)\right) = O_{\mathbb{P}}\left(R_{n,0}^* + n^{-1/2}\right)$,
where
\[R_{n,0}^* = \lVert \widehat{\eta}_0 - \eta_0 \rVert \sup_{r} \lVert
  \widehat{m}_0(r, \cdot) - m_0(r, \cdot) \rVert.\] It remains to
characterize
$\mathbb{P}\left(\frac{I(T = 0, R = 0)}{\mathbb{P}(T = 0, R = 0)}
  \left\{m_0(r_{w^*}, \bX) - m_0(\widehat{r}_j, \bX)\right\}\right)$,
for $j \in \mathcal{S}^* \cup\{0\}$: in Lemma~\ref{lemma:cdf-cont}, we
show that these terms are each
$O_{\mathbb{P}}(R_{n, j}^* + \sum_{k = 0}^{K - 1} w_k R_{n, k}^* +
n^{-1/2})$. Combining all these results, in view of~\eqref{eq:decomp},
we conclude that
\[\mathbb{P}[Y \in \widehat{C}_{\alpha}^{w^*}(\bX) \mid T = 0, R = 0,
  D^n] - (1 - \alpha) = O_{\mathbb{P}}\left(\min_{j \in \mathcal{S}^* \cup
      \{0\}} R_{n,j}^*+ \sum_{k = 0}^{K - 1} w_k R_{n, k}^* +
    n^{-1/2}\right) = O_{\mathbb{P}}\left(R_{n}^* + n^{-1/2}\right),\]
which completes the proof.

\begin{lemma}\label{lemma:cdf-cont}
  Let $F_0(r) = \mathbb{P}[S(\bX, Y) \leq r \mid T = 0, R = 0]$ for
  $r \in \mathbb{R}$, i.e., $F_0$ is the (marginal) cdf of the
  conformal score, given $T = 0, R = 0$. Suppose the conditions of
  Theorem~\ref{thm:cov-oracle} hold, as well as the following conditions:
  \begin{enumerate}[(i)]
  \item $F_0$ is $L$-Lipschitz in a neighborhood around $r_0$,
  \item $\widehat{r}_j \overset{\mathbb{P}}{\to} r_0$,
    $\sup_{r}\lVert \widehat{m}_j(r, \cdot) - m_j(r, \cdot)\rVert =
    o_{\mathbb{P}}(1)$, for all $j \in \mathcal{S}^*\cup\{0\}$,
    $\lVert\widehat{\eta}_0 - \eta_0\rVert = o_{\mathbb{P}}(1)$, and
    $\lVert \widehat{\omega}_{j,0} - \omega_{j,0}\rVert =
    o_{\mathbb{P}}(1)$ for all
    $j \in \mathcal{S}^*$, where the associated rates of convergence
    may be arbitrarily slow,
  \item The maps
    $r \mapsto \mathbb{P}\left(\varphi_j(O; r, m_0, m_j,
      \omega_{j,0})\right)$ for $j \in \mathcal{S}^*$,
    $r \mapsto \mathbb{P}\left(\varphi_0(O;m_0, \eta_0)\right)$ are
    differentiable at $r_0$, uniformly in the nuisance functions, the
    derivative matrices
    $\left.\frac{d}{d r} \mathbb{P}\left(\varphi_j(O; r, m_0, m_j,
        \omega_{j,0})\right)\right|_{r = r_0} \eqqcolon V_j(r_0; m_0,
    m_j, \omega_{j,0})$ and
    $\left.\frac{d}{d r} \mathbb{P}\left(\varphi_0(O; r, m_0,
        \eta_0)\right)\right|_{r = r_0} \eqqcolon V_0(r_0; m_0,
    \eta_0)$ are invertible,
    $V_j(r_0; \widehat{m}_0, \widehat{m}_j, \widehat{\omega}_{j,0})
    \overset{\mathbb{P}}{\to} V_j(r_0; m_0, m_j, \omega_{j,0})$ for
    $r \in \mathcal{S}^*$, and
    $V_0(r_0; \widehat{m}_0, \widehat{\eta_0})
    \overset{\mathbb{P}}{\to} V_0(r_0; m_0, \eta_0)$.
  \end{enumerate}

  Then
  \[\mathbb{P}\left(\frac{I(T = 0, R = 0)}{\mathbb{P}(T = 0, R = 0)}
      \left\{m_0(r_{w^*}, \bX) - m_0(\widehat{r}_j,
        \bX)\right\}\right) = O_{\mathbb{P}}\left(R_{n,j}^* + \sum_{k
        = 0}^{K - 1} w_k R_{n, k}^* + n^{-1/2}\right),\] for all
  $j \in \mathcal{S}^* \cup \{0\}$.
\end{lemma}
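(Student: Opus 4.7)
The plan is to reduce the expectation of interest to a difference of marginal cdfs of the conformal score on the target subpopulation $\{T=0, R=0\}$, apply Lipschitz continuity from (i) to convert this into a distance between quantiles, and then close the loop with standard $Z$-estimator rates for each $\widehat{r}_k$.

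First I would observe that, conditionally on $D^n$, both $r_{w^*}$ and $\widehat r_j$ are deterministic; Fubini combined with MAR (so that $m_0(r, \bX) = \mathbb{P}[S(\bX, Y) \leq r \mid \bX, T=0]$) and the tower property then yield
\[\mathbb{P}\!\left(\frac{I(T = 0, R = 0)}{\mathbb{P}(T = 0, R = 0)} \{m_0(r_{w^*}, \bX) - m_0(\widehat r_j, \bX)\}\right) = F_0(r_{w^*}) - F_0(\widehat r_j).\]
By condition (ii), each $\widehat r_k \overset{\mathbb{P}}{\to} r_0$ for $k \in \{0\} \cup \mathcal{S}^*$, so since $w_k = 0$ outside $\{0\} \cup \mathcal{S}^*$ and $\sum_k w_k = 1$, both $r_{w^*}$ and $\widehat r_j$ eventually lie in the Lipschitz neighborhood from (i). Writing $r_{w^*} - \widehat r_j = \sum_k w_k(\widehat r_k - \widehat r_j)$ and using the triangle inequality,
\[|F_0(r_{w^*}) - F_0(\widehat r_j)| \leq L\, |r_{w^*} - \widehat r_j| \leq L\sum_{k=0}^{K-1} w_k \bigl(|\widehat r_k - r_0| + |\widehat r_j - r_0|\bigr).\]

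It remains to establish $|\widehat r_k - r_0| = O_{\mathbb{P}}(R_{n,k}^* + n^{-1/2})$ for each $k \in \{0\} \cup \mathcal{S}^*$ with $w_k > 0$. Here I would invoke the standard $Z$-estimator linearization made rigorous by condition (iii). Since $\widehat r_k$ solves $\mathbb{P}_n[\varphi_k(\mc O; r, \widehat m_0, \widehat m_k, \widehat \omega_{k,0})] = 0$ (for $k = 0$, the analogous target-site equation with $\varphi_0$ and $\widehat{\eta}_0$), a Taylor expansion of $r \mapsto \mathbb{P}(\varphi_k(\mc O; r, m_0, m_k, \omega_{k,0}))$ around $r_0$, together with the decomposition
\begin{align*}
  \mathbb{P}(\varphi_k(\mc O; \widehat r_k, m_0, m_k, \omega_{k,0}))
  &= -(\mathbb{P}_n - \mathbb{P})\varphi_k(\mc O; \widehat r_k, \widehat m_0, \widehat m_k, \widehat \omega_{k,0}) \\
  &\quad - \mathbb{P}\{\varphi_k(\mc O; \widehat r_k, \widehat m_0, \widehat m_k, \widehat \omega_{k,0}) - \varphi_k(\mc O; \widehat r_k, m_0, m_k, \omega_{k,0})\},
\end{align*}
yields $V_k(r_0;\, m_0, m_k, \omega_{k,0})(\widehat r_k - r_0) = O_{\mathbb{P}}(n^{-1/2} + R_{n,k}^*)$. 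The empirical-process term is $O_{\mathbb{P}}(n^{-1/2})$ by the same Donsker argument used in the proof of Theorem~\ref{thm:cov-oracle}, the second-order term is $O_{\mathbb{P}}(R_{n,k}^*)$ by the product-bias calculation already carried out there, and invertibility plus plug-in consistency of $V_k(r_0;\cdot)$ (from (iii)) inverts the linearization. Plugging these rates back into the Lipschitz bound gives the claim.

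The main obstacle is the quantile-rate step: one must check that the $Z$-estimator linearization is uniformly valid across $k \in \{0\} \cup \mathcal{S}^*$ and that $V_k(r_0; \widehat m_0, \widehat m_k, \widehat \omega_{k,0})$ inherits invertibility from its population counterpart. Condition (iii) is engineered precisely for this, so the difficulty is largely bookkeeping rather than new analysis; once the per-site rates are in hand, the rest is a triangle-inequality exercise that exploits $\sum_k w_k = 1$ to convert the worst per-site convergence rate into the weighted sum appearing in $R_n^*$.
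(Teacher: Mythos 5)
Your proposal is correct and follows essentially the same route as the paper: reduce the expectation to $F_0(r_{w^*}) - F_0(\widehat{r}_j)$, apply the Lipschitz condition (i) and the triangle inequality (using $\sum_k w_k = 1$ and $w_k = 0$ off $\{0\}\cup\mathcal{S}^*$), and then control each $|\widehat{r}_k - r_0|$ at rate $O_{\mathbb{P}}(n^{-1/2} + R_{n,k}^*)$. The only difference is cosmetic: where you spell out the $Z$-estimator linearization by hand (Taylor expansion at $r_0$, Donsker empirical-process term, product-bias term, invertibility and plug-in consistency of $V_k$), the paper simply invokes Lemma 3 of \citet{kennedy2023}, whose hypotheses are exactly your conditions (ii)--(iii) together with the Donsker property.
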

\begin{proof}
  Observe that, for $j \in \mathcal{S}^* \cup \{0\}$,
  \begin{align}\label{eq:lipschitz}
    \left|\mathbb{P}\left(\frac{I(T = 0, R = 0)}{\mathbb{P}(T = 0, R = 0)}
    \left\{m_0(r_{w^*}, \bX) - m_0(\widehat{r}_j, \bX)\right\}\right)\right| =
    \left|F_0(r_{w^*}) - F_0(\widehat{r}_j)\right| \lesssim |r_{w^*} - \widehat{r}_j|,
  \end{align}
  by condition (i). Since
  \begin{equation}\label{eq:diff-bound}
    |r_{w^*} - \widehat{r}_j| \leq |r_{w^*} - r_0| + |\widehat{r}_j -
    r_0| \leq |\widehat{r}_j - r_0| +\sum_{k = 0}^{K - 1}w_k
    |\widehat{r}_k - r_0|,
  \end{equation}
  it suffices to analyze
  $|\widehat{r}_j - r_0|$ for each $j \in \mathcal{S}^* \cup \{0\}$.

  As the function classes
  $\{\varphi_j(\, \cdot\,; r, \widehat{m}_0, \widehat{m}_j,
  \widehat{\omega}_{j,0}): r \in \mathbb{R}\}$ and
  $\{\varphi_0(\, \cdot\,; r, \widehat{m}_0, \widehat{\eta}_0): r \in
  \mathbb{R}\}$ are Donsker under the assumptions of
  Theorem~\ref{thm:cov-oracle}, conditions (ii) and (iii) permit
  application of Lemma 3 of \citet{kennedy2023} to obtain
  \[\widehat{r}_j - r_j = O_{\mathbb{P}}(n^{-1/2} + R_{n,j}^*),\]
  for all $j \in \mathcal{S}^* \cup \{0\}$. Combining this
  with~\eqref{eq:lipschitz} and~\eqref{eq:diff-bound} yields the
  result.
\end{proof}

\clearpage

\subsection{Details of Algorithm \ref{algo:robust}}\label{subapx:algo1full}

In this Appendix, we present all details of Algorithm \ref{algo:robust} in Section \ref{sec:meth} in the following algorithm table (Algorithm \ref{fullalgo:robust}). 

\begin{algorithm*}
  \caption{Robust multi-source conformal prediction (complete version of Algorithm \ref{algo:robust})}\label{fullalgo:robust}
  \begin{algorithmic}[1]
  \STATE {\bfseries Input:} {Training data $\mc D = \{\mc O_i = (\bX_i,R_i,R_iY_i,T_i), i=1,\dots, n\}$, where $T_i\in\{0,1,\dots,K\}$ with the target site indexed by $T=0$ and  source sites by $T=k=1,\dots,K-1$; desired coverage probability $1-\alpha$ for an $\alpha\in(0,0.5)$; estimators of the conditional putative cumulative distribution function $m_k(\theta, \bX)$ for the conformal score $\theta$, ratio of the propensity score $\eta_0(\bd X) = \dfrac{\mathbb P(R=0\mid \bX, T=0)}{\mathbb P(R=1\mid \bX, T=0)}$ for the target site, and the density ratio $\omega_{k,0}(\bX) = \dfrac{\mathbb P(\bX\mid T=0, R=0)}{\mathbb P(\bX\mid T=k, R=1)}$ for sites $k=1,\dots,K-1$, denoted by $\widehat m_k(\widehat\theta, \bX)$, $\widehat\eta_0(\bX)$ and $\widehat\omega_{k,0}(\bX)$ (where $\widehat\theta$ is the estimated conformal score), respectively; a tuning parameter $\lambda$ (in the optimization step); a  testing point $\bX=\bx$ from the target site. }
  
  \STATE {\bfseries Output:} {A valid prediction set $\widehat C_\alpha(\bx)$. }
  
  \STATE Split the training data $\mc D$ randomly into $\mc D_1$ and $\mc D_2$, where $\mc D_j=\{\mc O_i\in\mc D, i\in\mc I_j\}$ for $j=1,2$ and $\mc I_1\cup\mc I_2 = \{1,2,\dots, n\}$. 

  \STATE Fit nuisance functions $\widehat m_k$ and $\widehat\omega_{k,0}$ on $\mc D_1$ and predict them on $\mc D_2$. 
  
  \STATE For the target site $k=0$, find the smallest $\widehat\theta = \widehat r_0$ such that 
  \begin{align*}
      0 & = \frac1{|\mc I_2|}\sum_{i\in\mc I_2}\Bigg[ \underbrace{\frac{I(T_i=0, R_i=0)}{\widehat{\mathbb P}(T_i=0, R_i=0)}\{\widehat m_0(\widehat\theta, \bX_i) - (1-\alpha)\} + \frac{I(T_i=0, R_i=1)}{\widehat{\mathbb P}(T_i=0, R_i=1)}\widehat\eta_0(\bd X_i)\{I(S(\bX_i,Y_i)\leq\widehat\theta)-\widehat m_0(\widehat\theta, \bX_i)\} }_{\varphi_0(\mc O_i; \widehat\theta, \widehat m_0, \widehat\eta_0)} \Bigg]. 
  \end{align*}
  
 \STATE For source sites $k\geq 1$, find the smallest $\widehat\theta = \widehat r_k$ that solves 
 \begin{align*}
     0 & =  \frac1{|\mc I_2|}\sum_{i\in\mc I_2} \Bigg[ \underbrace{ \frac{I(T_i=0, R_i=0)}{\widehat{\mathbb P}(T_i=0, R_i=0)}\{\widehat m_0(\widehat\theta, \bX_i) - (1-\alpha)\}
     +  \frac{I(T_i=k, R_i=1)}{\widehat{\mathbb P}(T_i=k, R_i=1)}\widehat\omega_{k,0}(\bX_i)\{I(S(\bX_i,Y_i)\leq\widehat\theta)-\widehat m_k(\widehat\theta, \bX_i)\} }_{\varphi_k(\mc O_i; \widehat\theta, \widehat m_0, \widehat m_k, \widehat\omega_{k,0})} \Bigg]. 
 \end{align*}
 Compute $\widehat\chi_k = |\widehat r_0-\widehat r_k|$. 
  
\STATE Solve for weights $\widehat{\bd w}=(\widehat w_0, \widehat w_1,\dots\widehat w_{K-1})$ that minimize
  \begin{align*}
      Q(\bd w) & = \frac{1}{|\mc I_2|}\sum_{i\in\mc I_2} \Bigg[\sum_{k=1}^{K-1} w_k\{\varphi_0(\mc O_i; \widehat r_0, \widehat m_0, \widehat\eta_0) - \varphi_k(\mc O_i; \widehat r_0, \widehat m_0, \widehat m_k, \widehat\omega_{k,0})\} - \varphi_0(\mc O_i; \widehat r_0, \widehat m_0, \widehat\eta_0)\Bigg]^2  + \frac{1}{|\mc I_2|}\lambda\sum_{k=1}^{K-1} w_k\widehat\chi_k^2,
  \end{align*} 
  subject to $0\leq w_k\leq 1$  and $\displaystyle\sum_{k=0}^{K-1} w_k = 1$. 

   \STATE Compute $\widehat\theta = \widehat r_{0,\text{fed}} = \displaystyle\sum_{k=0}^{K-1} \widehat w_k\widehat r_k$. 
  
  \STATE \textbf{Return: }The prediction set $\widehat C_\alpha(\bx) = \{y: S(\bx,y)\leq \widehat r_{0,\text{fed}}\}$. 
  \end{algorithmic}
\end{algorithm*}

Below, we also present all relevant details about estimating influence functions.

\begin{algorithm*}
  \caption{Estimation of influence functions} \label{algo:est-if}
  \begin{algorithmic}[1]
  \STATE {\bfseries Input:} {Training data $\mc D = \{\mc O_i = (\bX_i,R_i,R_iY_i,T_i), i=1,\dots, n\}$, where $T_i\in\{0,1,\dots,K\}$ with the target site indexed by $T=0$ and  source sites by $T=k=1,\dots,K-1$. \\
  \STATE Desired coverage probability $1-\alpha$ for an $\alpha\in(0,0.5)$. }
  
  \STATE {\bfseries Output:} {Estimates of the target site influence function $\varphi_0(\mc O_i; \widehat\theta, \widehat m_0, \widehat\eta_0)$ and the source site influence functions $\varphi_k(\mc O_i; \widehat\theta, \widehat m_0, \widehat m_k, \widehat\omega_{k,0})$, $k=1,...,K-1$.}
  
  \STATE Randomly split the training data $\mc D$ into two equal-sized folds $\mc D_1\cup\mc D_2$.

  \STATE On the first split $\mc D_1$, fit models to estimate the following nuisance functions via any arbitrary regression model or density ratio model (nonparametric, semiparametric, or parametric):
  \begin{itemize}
    \item Conditional CDF in the target site $m_0(\theta, \bX)$ across a range of values $\theta$  for observations with observed $Y$ ($R=1$): estimate is $\hat{m}_0$
      \item Conditional CDF $m_k(\theta, \bX)$ in source site $k$, $k=1,...,K-1$, across a range of values $\theta$ for observations with observed $Y$ ($R=1$): estimate is $\hat{m}_k$
      \item Ratio of the missingness propensity score $\eta_0(\bd X) = \dfrac{\mathbb P(R=0\mid \bX, T=0)}{\mathbb P(R=1\mid \bX, T=0)}$ for the target site: estimate is $\hat{\eta}_0$
      \item Density ratio $\omega_{k,0}(\bX) = \dfrac{\mathbb P(\bX\mid T=0, R=0)}{\mathbb P(\bX\mid T=k, R=1)}$ for sites $k=1,\dots,K-1$: estimate is $\hat{\omega}_{k,0}$
  \end{itemize}
  We recommend using SuperLearner with the base learners being random forest, elastic net, and generalized linear model (GLM) for the first three nuisance functions and exponential tilting to estimate the density ratio model.

    \STATE On the second split $\mc D_2$, predict the nuisance functions using the models learned ($\hat{m}_k$, $\hat{\eta}_0$, $\hat{\omega}_{k,0}$) from the first split $\mc D_1$.

    \STATE For the target site $k=0$, estimate the influence function as $$\widehat{\varphi}_0(\mc O_i; \widehat\theta, \widehat m_0, \widehat\eta_0) = \frac{I(T_i=0, R_i=0)}{\widehat{\mathbb P}(T_i=0, R_i=0)}\{\widehat m_0(\widehat\theta, \bX_i) - (1-\alpha)\} + \frac{I(T_i=0, R_i=1)}{\widehat{\mathbb P}(T_i=0, R_i=1)}\widehat\eta_0(\bd X_i)\{I(S(\bX_i,Y_i)\leq\widehat\theta)-\widehat m_0(\widehat\theta, \bX_i)\} $$

    \STATE For each of the source sites $k=1,...,K-1$, estimate the influence functions as 
    \begin{align*}
        \widehat{\varphi}_k(\mc O_i; \widehat\theta, \widehat m_0, \widehat m_k, \widehat\omega_{k,0}) &= \frac{I(T_i=0, R_i=0)}{\widehat{\mathbb P}(T_i=0, R_i=0)}\{\widehat m_0(\widehat\theta, \bX_i) - (1-\alpha)\} \\
     &+  \frac{I(T_i=k, R_i=1)}{\widehat{\mathbb P}(T_i=k, R_i=1)}\widehat\omega_{k,0}(\bX_i)\{I(S(\bX_i,Y_i)\leq\widehat\theta)-\widehat m_k(\widehat\theta, \bX_i)\}
    \end{align*}

  \STATE \textbf{Return:} Estimate of target site influence function $\widehat{\varphi}_0$ and source site influence functions $\widehat{\varphi}_k$, $k=1,...,K-1$. 
  \end{algorithmic}
\end{algorithm*}

\section{Additional Simulation Results}\label{apx:simu}

\subsection{An experiment of sample size vs. interval width}\label{subapx:CIwidth}

We first conducted an experiment to assess the relationship between the sample size of the target site vs. the coverage and width of prediction interval. We use only the set-up of target site's DGP but consider two generations for outcomes: the homogeneous variance with $\varepsilon(x)\sim\mc N(0,1)$ and heterogeneous variance with $\varepsilon(x)\sim\mc N(0,[\log(x)]^2)$ for $\varepsilon(X_i)$ defined in \eqref{eq:outcome_md}. Under both cases, the oracle width of a $90\%$ prediction interval for the outcome is $2\times z_{0.95}\Ex\{\sigma(X_i)\}\approx 3.29$, where $z_{0.95}=1.645$ is the 95th percentile of the standard normal distribution. In addition, note that $\Ex\{\sigma(X_i)\} = \int_0^1\sigma(x)dx = 1$ for both $\sigma(x)=1$ and $\sigma(x)=-\log(x)$ (see also \cite{lei2021conformal}). Figure \ref{fig:box-CI-wid} shows the boxplots of interval widths in 500 Monte Carlo simulations. As we can see, the interval width converges to its oracle faster when the variance is homogeneous, by all 3 conformal scores. We can also note that using ASR as the conformal score has an essential bias to oracle width under heterogeneous width, even if the sample size is large enough, while the other two conformal scores are more robust. 

\begin{figure}[H]
    \centering
    \includegraphics[width=0.96\textwidth]{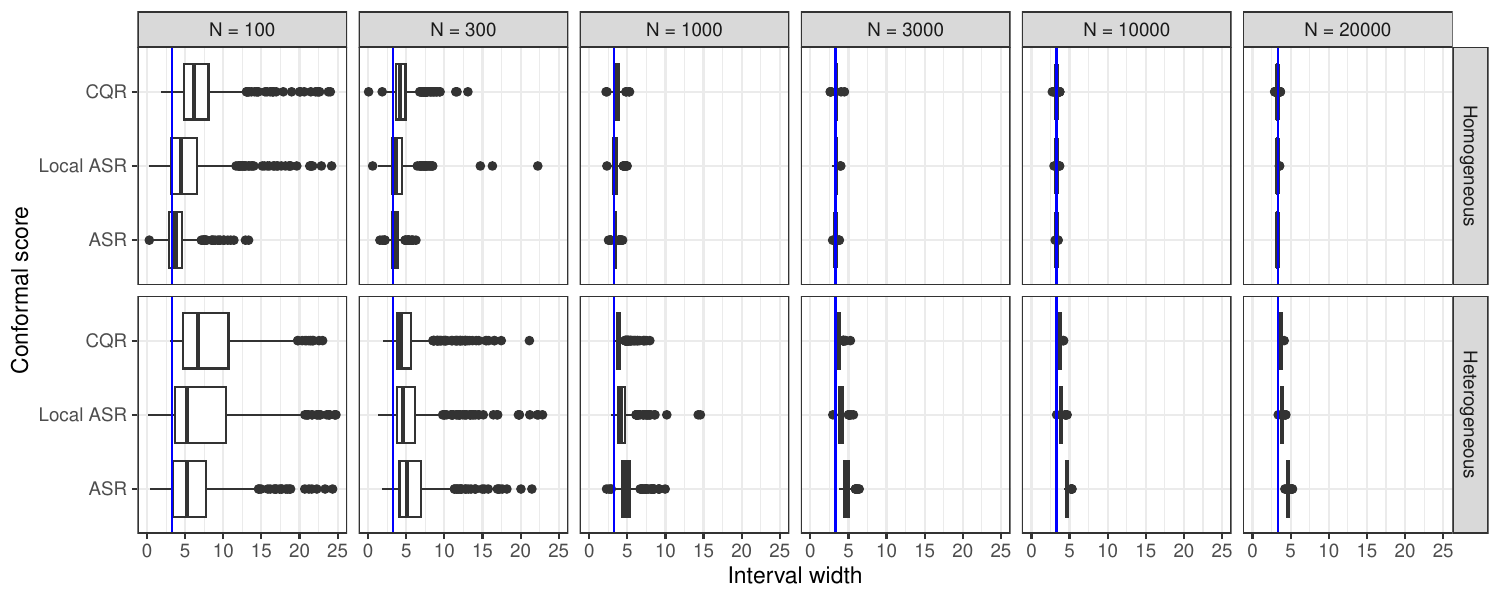}
    \caption{Boxplots of prediction interval widths}
    \label{fig:box-CI-wid}
\end{figure}

\subsection{Complete simulation details and results of coverage probability and interval width, by all sample sizes, variance assumptions, and covariate and outcome distributions across sites}\label{subapx:completeRes}

In this Appendix, we specify more details in our data generating process and competing methods in Section \ref{sec:simu}. In the complete simulation study, we consider 6 methods for constructing prediction interval $\widehat C_\alpha(x)$, where 3 of them (federated (ours), pooled sample and target only) have been described in Section \ref{subsec:DGP}. The additional 3 methods are equal weights, i.e., equally weighting each source site by $1/(K+1)$ (here $=0.2$), and two alternative Federated weights, i.e., Federated I and III shown below. 
\begin{itemize}
    \item Federated I: when solving Step 7 in Algorithm \ref{algo:robust}, set the limit of weight on each source site by $w_k\in[0,1], k=1,\dots,K$, and then the weight of site 0 is $w_0 = 1-\displaystyle\sum_{k=1}^{K-1} w_k$. 
    \item Federated II (ours, and the one in main text): when solving Step 7 in Algorithm \ref{algo:robust}, set the limit of weight on each source site by $w_k\in[0,1], k=1,\dots,K$, and let $w_k^* = w_k\times{K}/{(K+1)} = 0.8w_k$ (here $K=4$), and use $w_k^*$ as the weight of site $k$. Then, $w_0=1-\displaystyle\sum_{k=1}^{K-1} w_k^*$ is the weight of site 0. In this case, $w_0\geq 1/(K+1)=0.2$ in most replications. 
    \item Federated III: setting the limit of weight on each source site by $w_k\in [0,1/(K+1)] = [0,0.2], k=2,\dots, s$, and then $w_0=1-\displaystyle\sum_{k=1}^{K-1}w_k$ for site 0. In this case, $w_0\geq 1/(K+1)$ holds, and thus we always weight the target site the most. 
\end{itemize}

In addition, in this Appendix, we present all simulation findings on coverage probabilities and interval widths via both numerical and visualized results in Tables \ref{tab:homo_n300}--\ref{tab:strg_n3000}, and Figures \ref{fig:box_cov_homo}--\ref{fig:box_len_strg}. We comment, in the following, on patterns and trends we found from these results. 

First, only when CCOD holds does the pooled sample method perform well, where the coverage is close to the nominal level $0.9$ and it is the most efficient one, having the shortest box width (except for CQR under heteroscedastcity). These results make sense as the pooled sample has a larger sample size and when CCOD holds for all sites, all data are directly useful for predicting $Y$ from the target site. However, it can also easily fail when CCOD is violated, either weakly or strongly in our simulation. Compared to other methods, pooled sample can be substantially more sensitive to such violations, which often results in very conservative and wide interval estimations (e.g., from Table \ref{tab:homo_n1000}, the interval widths by pooled sample for ASR under weakly and strongly violations of CCOD are, respectively, $11.19$ and $31.95$ (for homoscedasticity), and $11.27$ and $31.91$ (for heteroscedasticity), which exceeds the oracle width $3.29$ substantially, while other methods always have widths in the range $[3.20,4.10]$. This illustrates that pooling samples from all source sites is not a good strategy in general, especially when there are differences among sites. 

Moreover, the equal weights method can also result in large biases when the distributions of covariates across sites are either weakly or strongly heterogeneous. The biases increase with a stronger difference among covariate distributions. In some cases, it is also less efficient than the federated methods; for instance, in Figure \ref{fig:box_cov_strg}, the boxes of the equal weights method are often wider than those of the three federated methods, as reflected in the corresponding interval width plot, Figure \ref{fig:box_len_strg}. Therefore, it can be biased and less efficient under heterogeneous covariate distributions.

Furthermore, among all methods, only federated weights I, II, and III performed well across settings and exhibited consistent patterns in the coverage probabilities and interval width. The boxplots of coverage probabilities by these federated weights are often situated around the nominal coverage level of $0.9$, and the widths of these boxes are often shorter, indicating higher efficiency in their interval predictions.

Finally, among the three federated weights, there are slight differences with respect to different conformal scores. Federated I and II are less efficient for CQR under both heteroscedasticity and heterogeneous covariate distributions (weakly and strongly). In these cases, federated III for CQR is more efficient, although it is also slightly more conservative (though less biased than the equal weights method). Based on our simulation, we recommend Federated III for CQR, as it offers the optimal choice regarding the bias-variance trade-off among all competing methods. For other cases considered in our simulation, all three federated methods perform similarly and result in valid predictions.

To explore settings in which the propensity score for observing the outcome is allowed to vary more, we provide a comparison by allowing the range of the true propensity score to be in (0.4, 0.6) (panel (a)) and in (0.1, 0.9) (panel (b) of Figure \ref{newfig}. We see that when the propensity score is allowed to have a wider range, our method is even more promising (efficient) than when the propensity score is constrained in (0.4, 0.6). Panel (b) shows the Federated (ours) method provides the overall most efficient interval estimations in ASR and local ASR conformal scores, and the efficiency gains are more obvious than those in panel (a). For CQR, while pooling has the most efficient result, the Federated (ours) also performs well, and it is overall the optimal and safest choice among the three methods compared.


\begin{table}[H]
\centering
\scriptsize
\begin{tabular}{llrrrrrrrrrr}
  \toprule
CFS & CCOD & CP & s.d.(CP) & wd & s.d.(wd) & CP & s.d.(CP) & wd & s.d.(wd) \\ 
  \midrule
  & & \multicolumn{8}{c}{\bf Homoscedasticity where $\sigma(x) = 1$}\\
  \addlinespace
  & & \multicolumn{4}{c}{Federated I}&\multicolumn{4}{c}{Pooled sample}\\
  \cmidrule(lr){3-6}\cmidrule(lr){7-10}
 & holds & 0.893 & 0.038 & 3.27 & 0.35 & 0.900 & 0.018 & 3.30 & 0.16 \\ 
  ASR & weakly violated & 0.894 & 0.040 & 3.29 & 0.37 & 1.000 & 0.000 & 11.16 & 0.55 \\ 
   & strongly violated & 0.894 & 0.039 & 3.29 & 0.37 & 1.000 & 0.000 & 31.99 & 0.97 \\ 
   & holds & 0.904 & 0.033 & 3.38 & 0.36 & 0.901 & 0.023 & 3.32 & 0.22 \\ 
  Local ASR & weakly violated & 0.906 & 0.035 & 3.41 & 0.37 & 0.943 & 0.026 & 3.90 & 0.42 \\ 
   & strongly violated & 0.903 & 0.035 & 3.38 & 0.36 & 0.952 & 0.024 & 4.06 & 0.44 \\ 
   & holds & 0.923 & 0.031 & 3.61 & 0.40 & 0.902 & 0.025 & 3.34 & 0.24 \\ 
  CQR & weakly violated & 0.925 & 0.031 & 3.63 & 0.39 & 0.903 & 0.031 & 3.36 & 0.30 \\ 
   & strongly violated & 0.924 & 0.032 & 3.63 & 0.40 & 0.905 & 0.029 & 3.38 & 0.29 \\ 
   & & \multicolumn{4}{c}{Federated II (ours)}&\multicolumn{4}{c}{Target site only}\\
  \cmidrule(lr){3-6}\cmidrule(lr){7-10}
 & holds & 0.896 & 0.032 & 3.29 & 0.30 & 0.901 & 0.046 & 3.39 & 0.46 \\ 
  ASR & weakly violated & 0.898 & 0.032 & 3.31 & 0.31 & 0.900 & 0.045 & 3.38 & 0.47 \\ 
   & strongly violated & 0.896 & 0.034 & 3.30 & 0.32 & 0.894 & 0.052 & 3.32 & 0.48 \\ 
   & holds & 0.907 & 0.032 & 3.42 & 0.38 & 0.909 & 0.055 & 3.55 & 0.72 \\ 
  Local ASR & weakly violated & 0.909 & 0.032 & 3.44 & 0.38 & 0.908 & 0.057 & 3.56 & 0.78 \\ 
   & strongly violated & 0.905 & 0.032 & 3.39 & 0.35 & 0.900 & 0.060 & 3.44 & 0.63 \\ 
   & holds & 0.925 & 0.029 & 3.63 & 0.39 & 0.922 & 0.054 & 3.71 & 0.67 \\ 
  CQR & weakly violated & 0.926 & 0.030 & 3.65 & 0.39 & 0.917 & 0.060 & 3.66 & 0.70 \\ 
   & strongly violated & 0.925 & 0.031 & 3.63 & 0.39 & 0.917 & 0.059 & 3.66 & 0.67 \\ 
   & & \multicolumn{4}{c}{Federated III}&\multicolumn{4}{c}{Equal weights}\\
  \cmidrule(lr){3-6}\cmidrule(lr){7-10}
 & holds & 0.900 & 0.032 & 3.33 & 0.30 & 0.900 & 0.032 & 3.33 & 0.30 \\ 
  ASR & weakly violated & 0.901 & 0.032 & 3.34 & 0.31 & 0.901 & 0.032 & 3.34 & 0.31 \\ 
   & strongly violated & 0.900 & 0.033 & 3.34 & 0.32 & 0.900 & 0.033 & 3.34 & 0.32 \\ 
   & holds & 0.916 & 0.029 & 3.50 & 0.33 & 0.916 & 0.029 & 3.50 & 0.33 \\ 
  Local ASR & weakly violated & 0.915 & 0.029 & 3.50 & 0.33 & 0.915 & 0.029 & 3.50 & 0.33 \\ 
   & strongly violated & 0.914 & 0.030 & 3.49 & 0.32 & 0.914 & 0.030 & 3.49 & 0.32 \\ 
   & holds & 0.933 & 0.025 & 3.72 & 0.34 & 0.933 & 0.025 & 3.72 & 0.34 \\ 
  CQR & weakly violated & 0.933 & 0.026 & 3.72 & 0.34 & 0.933 & 0.026 & 3.72 & 0.34 \\ 
   & strongly violated & 0.933 & 0.026 & 3.73 & 0.35 & 0.933 & 0.026 & 3.73 & 0.35 \\ 
   \midrule
  & & \multicolumn{8}{c}{\bf Heteroscedasticity where $\sigma(x) = -\log(x)$}\\
  \addlinespace
  & & \multicolumn{4}{c}{Federated I}&\multicolumn{4}{c}{Pooled sample}\\
  \cmidrule(lr){3-6}\cmidrule(lr){7-10}
 & holds & 0.904 & 0.034 & 4.22 & 0.94 & 0.900 & 0.015 & 3.94 & 0.35 \\ 
  ASR & weakly violated & 0.903 & 0.033 & 4.20 & 0.98 & 0.991 & 0.002 & 11.23 & 0.45 \\ 
   & strongly violated & 0.907 & 0.033 & 4.31 & 1.06 & 1.000 & 0.000 & 31.85 & 1.00 \\ 
   & holds & 0.914 & 0.046 & 4.08 & 2.17 & 0.903 & 0.031 & 3.55 & 0.46 \\ 
  Local ASR & weakly violated & 0.914 & 0.048 & 4.24 & 2.55 & 0.915 & 0.032 & 3.74 & 0.49 \\ 
   & strongly violated & 0.909 & 0.048 & 4.14 & 3.42 & 0.924 & 0.027 & 3.85 & 0.42 \\ 
   & holds & 0.926 & 0.029 & 3.40 & 0.31 & 0.902 & 0.024 & 3.21 & 0.15 \\ 
  CQR & weakly violated & 0.926 & 0.029 & 3.39 & 0.26 & 0.904 & 0.025 & 3.22 & 0.16 \\ 
   & strongly violated & 0.928 & 0.029 & 3.42 & 0.34 & 0.905 & 0.026 & 3.23 & 0.16 \\
   & & \multicolumn{4}{c}{Federated II (ours)}&\multicolumn{4}{c}{Target site only}\\
  \cmidrule(lr){3-6}\cmidrule(lr){7-10}
 & holds & 0.905 & 0.030 & 4.21 & 0.87 & 0.898 & 0.042 & 4.16 & 1.21 \\ 
  ASR & weakly violated & 0.905 & 0.029 & 4.20 & 0.86 & 0.900 & 0.043 & 4.21 & 1.22 \\ 
   & strongly violated & 0.908 & 0.030 & 4.30 & 0.92 & 0.902 & 0.042 & 4.27 & 1.26 \\ 
   & holds & 0.915 & 0.044 & 4.14 & 2.50 & 0.909 & 0.061 & 4.37 & 4.93 \\ 
  Local ASR & weakly violated & 0.916 & 0.047 & 4.29 & 2.70 & 0.910 & 0.065 & 4.52 & 3.97 \\ 
   & strongly violated & 0.911 & 0.046 & 4.16 & 3.27 & 0.905 & 0.065 & 4.23 & 3.20 \\ 
   & holds & 0.928 & 0.028 & 3.41 & 0.31 & 0.912 & 0.070 & 3.45 & 0.58 \\ 
  CQR & weakly violated & 0.929 & 0.027 & 3.42 & 0.28 & 0.916 & 0.059 & 3.48 & 0.71 \\ 
   & strongly violated & 0.930 & 0.027 & 3.43 & 0.33 & 0.916 & 0.064 & 3.48 & 0.58 \\ 
   & & \multicolumn{4}{c}{Federated III}&\multicolumn{4}{c}{Equal weights}\\
  \cmidrule(lr){3-6}\cmidrule(lr){7-10}
 & holds & 0.893 & 0.066 & 3.39 & 0.66 & 0.900 & 0.028 & 3.32 & 0.28 \\ 
  & holds & 0.910 & 0.031 & 4.35 & 0.90 & 0.910 & 0.031 & 4.35 & 0.90 \\ 
  ASR & weakly violated & 0.908 & 0.030 & 4.31 & 0.95 & 0.908 & 0.030 & 4.31 & 0.95 \\ 
   & strongly violated & 0.912 & 0.031 & 4.43 & 1.01 & 0.912 & 0.031 & 4.43 & 1.01 \\ 
   & holds & 0.923 & 0.041 & 4.41 & 5.69 & 0.923 & 0.041 & 4.41 & 5.69 \\ 
  Local ASR & weakly violated & 0.924 & 0.044 & 4.38 & 2.54 & 0.924 & 0.044 & 4.38 & 2.54 \\ 
   & strongly violated & 0.919 & 0.043 & 4.32 & 3.50 & 0.919 & 0.043 & 4.32 & 3.50 \\ 
   & holds & 0.939 & 0.026 & 3.51 & 0.33 & 0.939 & 0.026 & 3.51 & 0.33 \\ 
  CQR & weakly violated & 0.940 & 0.025 & 3.52 & 0.30 & 0.940 & 0.025 & 3.52 & 0.30 \\ 
   & strongly violated & 0.941 & 0.025 & 3.54 & 0.34 & 0.941 & 0.025 & 3.54 & 0.34 \\ 
   \bottomrule
\end{tabular}
\begin{tablenotes}
       \item CFS: conformal score; CCOD: common conditional outcome distribution
       \item CP: coverage probability; wd: width; s.d.: standard deviation (over 500 replications)
   \end{tablenotes}
\caption{$n_k=300$, homogeneous covariate distribution}\label{tab:homo_n300}
\end{table}

\begin{table}[H]
\centering
\scriptsize 
\begin{tabular}{llrrrrrrrrrr}
  \toprule
CFS & CCOD & CP & s.d.(CP) & wd & s.d.(wd) & CP & s.d.(CP) & wd & s.d.(wd) \\ 
  \midrule
  & & \multicolumn{8}{c}{\bf Homoscedasticity where $\sigma(x) = 1$}\\
  \addlinespace
  & & \multicolumn{4}{c}{Federated I}&\multicolumn{4}{c}{Pooled sample}\\
  \cmidrule(lr){3-6}\cmidrule(lr){7-10}
 & holds & 0.899 & 0.021 & 3.30 & 0.20 & 0.900 & 0.010 & 3.30 & 0.09 \\ 
  ASR & weakly violated & 0.899 & 0.022 & 3.29 & 0.20 & 1.000 & 0.000 & 11.19 & 0.28 \\ 
   & strongly violated & 0.898 & 0.021 & 3.29 & 0.20 & 1.000 & 0.000 & 31.95 & 0.50 \\ 
   & holds & 0.901 & 0.019 & 3.32 & 0.19 & 0.900 & 0.014 & 3.30 & 0.13 \\ 
  Local ASR & weakly violated & 0.901 & 0.020 & 3.31 & 0.19 & 0.947 & 0.014 & 3.89 & 0.22 \\ 
   & strongly violated & 0.902 & 0.020 & 3.33 & 0.18 & 0.952 & 0.012 & 3.99 & 0.22 \\ 
   & holds & 0.905 & 0.018 & 3.37 & 0.17 & 0.900 & 0.015 & 3.31 & 0.13 \\ 
  CQR & weakly violated & 0.906 & 0.019 & 3.36 & 0.18 & 0.901 & 0.018 & 3.31 & 0.17 \\ 
   & strongly violated & 0.905 & 0.018 & 3.36 & 0.17 & 0.900 & 0.018 & 3.31 & 0.16 \\ 
   & & \multicolumn{4}{c}{Federated II (ours)}&\multicolumn{4}{c}{Target site only}\\
  \cmidrule(lr){3-6}\cmidrule(lr){7-10}
 & holds & 0.900 & 0.017 & 3.31 & 0.16 & 0.901 & 0.023 & 3.32 & 0.22 \\ 
  ASR & weakly violated & 0.900 & 0.018 & 3.30 & 0.17 & 0.900 & 0.023 & 3.31 & 0.22 \\ 
   & strongly violated & 0.899 & 0.018 & 3.30 & 0.16 & 0.901 & 0.023 & 3.32 & 0.22 \\ 
   & holds & 0.901 & 0.018 & 3.32 & 0.17 & 0.901 & 0.030 & 3.34 & 0.30 \\ 
  Local ASR & weakly violated & 0.902 & 0.018 & 3.32 & 0.17 & 0.901 & 0.031 & 3.34 & 0.30 \\ 
   & strongly violated & 0.903 & 0.019 & 3.34 & 0.17 & 0.904 & 0.030 & 3.37 & 0.31 \\ 
   & holds & 0.906 & 0.018 & 3.37 & 0.17 & 0.905 & 0.033 & 3.39 & 0.33 \\ 
  CQR & weakly violated & 0.906 & 0.018 & 3.37 & 0.17 & 0.905 & 0.032 & 3.40 & 0.32 \\ 
   & strongly violated & 0.905 & 0.018 & 3.37 & 0.17 & 0.905 & 0.033 & 3.40 & 0.33 \\ 
   & & \multicolumn{4}{c}{Federated III}&\multicolumn{4}{c}{Equal weights}\\
  \cmidrule(lr){3-6}\cmidrule(lr){7-10}
 & holds & 0.902 & 0.018 & 3.32 & 0.18 & 0.902 & 0.018 & 3.32 & 0.18 \\ 
  ASR & weakly violated & 0.901 & 0.019 & 3.31 & 0.18 & 0.901 & 0.019 & 3.31 & 0.18 \\ 
   & strongly violated & 0.900 & 0.019 & 3.31 & 0.18 & 0.900 & 0.019 & 3.31 & 0.18 \\ 
   & holds & 0.905 & 0.017 & 3.35 & 0.18 & 0.905 & 0.017 & 3.35 & 0.18 \\ 
  Local ASR & weakly violated & 0.905 & 0.018 & 3.35 & 0.18 & 0.905 & 0.018 & 3.35 & 0.18 \\ 
   & strongly violated & 0.906 & 0.018 & 3.37 & 0.18 & 0.906 & 0.018 & 3.37 & 0.18 \\ 
   & holds & 0.910 & 0.017 & 3.41 & 0.16 & 0.910 & 0.017 & 3.41 & 0.16 \\ 
  CQR & weakly violated & 0.910 & 0.017 & 3.41 & 0.17 & 0.910 & 0.017 & 3.41 & 0.17 \\ 
   & strongly violated & 0.909 & 0.017 & 3.41 & 0.16 & 0.909 & 0.017 & 3.41 & 0.16 \\ 
   \midrule
  & & \multicolumn{8}{c}{\bf Heteroscedasticity where $\sigma(x) = -\log(x)$}\\
  \addlinespace
  & & \multicolumn{4}{c}{Federated I}&\multicolumn{4}{c}{Pooled sample}\\
  \cmidrule(lr){3-6}\cmidrule(lr){7-10}
 & holds & 0.901 & 0.020 & 3.99 & 0.50 & 0.900 & 0.010 & 3.91 & 0.20 \\ 
  ASR & weakly violated & 0.903 & 0.021 & 4.04 & 0.54 & 0.991 & 0.002 & 11.27 & 0.24 \\ 
   & strongly violated & 0.901 & 0.020 & 3.99 & 0.50 & 1.000 & 0.000 & 31.91 & 0.57 \\ 
   & holds & 0.907 & 0.031 & 3.67 & 0.71 & 0.899 & 0.020 & 3.47 & 0.23 \\ 
  Local ASR & weakly violated & 0.907 & 0.032 & 3.66 & 0.82 & 0.921 & 0.019 & 3.76 & 0.28 \\ 
   & strongly violated & 0.906 & 0.032 & 3.67 & 0.96 & 0.924 & 0.017 & 3.81 & 0.24 \\ 
   & holds & 0.907 & 0.017 & 3.23 & 0.12 & 0.900 & 0.015 & 3.19 & 0.11 \\ 
  CQR & weakly violated & 0.908 & 0.018 & 3.23 & 0.14 & 0.901 & 0.017 & 3.19 & 0.13 \\ 
   & strongly violated & 0.908 & 0.017 & 3.24 & 0.13 & 0.901 & 0.016 & 3.20 & 0.12 \\ 
   & & \multicolumn{4}{c}{Federated II (ours)}&\multicolumn{4}{c}{Target site only}\\
  \cmidrule(lr){3-6}\cmidrule(lr){7-10}
 & holds & 0.901 & 0.018 & 3.98 & 0.43 & 0.899 & 0.020 & 3.94 & 0.48 \\ 
  ASR & weakly violated & 0.903 & 0.019 & 4.03 & 0.48 & 0.900 & 0.022 & 3.98 & 0.52 \\ 
   & strongly violated & 0.901 & 0.018 & 3.98 & 0.44 & 0.900 & 0.021 & 3.97 & 0.51 \\ 
   & holds & 0.908 & 0.030 & 3.67 & 0.73 & 0.907 & 0.038 & 3.71 & 0.89 \\ 
  Local ASR & weakly violated & 0.908 & 0.031 & 3.67 & 0.84 & 0.907 & 0.040 & 3.71 & 1.06 \\ 
   & strongly violated & 0.907 & 0.031 & 3.68 & 0.93 & 0.908 & 0.039 & 3.73 & 0.94 \\ 
   & holds & 0.907 & 0.017 & 3.23 & 0.12 & 0.903 & 0.031 & 3.23 & 0.18 \\ 
  CQR & weakly violated & 0.908 & 0.018 & 3.23 & 0.13 & 0.907 & 0.034 & 3.26 & 0.19 \\ 
   & strongly violated & 0.909 & 0.017 & 3.24 & 0.13 & 0.907 & 0.032 & 3.26 & 0.18 \\ 
   & & \multicolumn{4}{c}{Federated III}&\multicolumn{4}{c}{Equal weights}\\
  \cmidrule(lr){3-6}\cmidrule(lr){7-10}
  & holds & 0.902 & 0.019 & 4.01 & 0.46 & 0.902 & 0.019 & 4.01 & 0.46 \\ 
  ASR & weakly violated & 0.904 & 0.020 & 4.07 & 0.51 & 0.904 & 0.020 & 4.07 & 0.51 \\ 
   & strongly violated & 0.902 & 0.019 & 4.01 & 0.47 & 0.902 & 0.019 & 4.01 & 0.47 \\ 
   & holds & 0.911 & 0.030 & 3.71 & 0.70 & 0.911 & 0.030 & 3.71 & 0.70 \\ 
  Local ASR & weakly violated & 0.911 & 0.031 & 3.71 & 0.83 & 0.911 & 0.031 & 3.71 & 0.83 \\ 
   & strongly violated & 0.910 & 0.031 & 3.71 & 0.93 & 0.910 & 0.031 & 3.71 & 0.93 \\ 
   & holds & 0.911 & 0.016 & 3.25 & 0.13 & 0.911 & 0.016 & 3.25 & 0.13 \\ 
  CQR & weakly violated & 0.913 & 0.017 & 3.25 & 0.14 & 0.913 & 0.017 & 3.25 & 0.14 \\ 
   & strongly violated & 0.914 & 0.017 & 3.26 & 0.13 & 0.914 & 0.017 & 3.26 & 0.13 \\ 
   \bottomrule
\end{tabular}
\begin{tablenotes}
       \item CFS: conformal score; CCOD: common conditional outcome distribution
       \item CP: coverage probability; wd: width; s.d.: standard deviation (over 500 replications)
   \end{tablenotes}
\caption{$n_k=1000$, homogeneous covariate distribution}\label{tab:homo_n1000}
\end{table}

\begin{table}[H]
\centering
\scriptsize 
\begin{tabular}{llrrrrrrrrrr}
  \toprule
CFS & CCOD & CP & s.d.(CP) & wd & s.d.(wd) & CP & s.d.(CP) & wd & s.d.(wd) \\ 
  \midrule
  & & \multicolumn{8}{c}{\bf Homoscedasticity where $\sigma(x) = 1$}\\
  \addlinespace
  & & \multicolumn{4}{c}{Federated I}&\multicolumn{4}{c}{Pooled sample}\\
  \cmidrule(lr){3-6}\cmidrule(lr){7-10}
 & holds & 0.900 & 0.014 & 3.29 & 0.12 & 0.900 & 0.008 & 3.29 & 0.05 \\ 
  ASR & weakly violated & 0.899 & 0.013 & 3.28 & 0.11 & 1.000 & 0.000 & 11.19 & 0.17 \\ 
   & strongly violated & 0.900 & 0.014 & 3.30 & 0.12 & 1.000 & 0.000 & 31.99 & 0.29 \\ 
   & holds & 0.900 & 0.014 & 3.30 & 0.12 & 0.900 & 0.010 & 3.29 & 0.08 \\ 
  Local ASR & weakly violated & 0.901 & 0.013 & 3.30 & 0.11 & 0.947 & 0.010 & 3.88 & 0.14 \\ 
   & strongly violated & 0.901 & 0.014 & 3.31 & 0.12 & 0.954 & 0.009 & 3.99 & 0.14 \\ 
   & holds & 0.901 & 0.013 & 3.31 & 0.11 & 0.899 & 0.011 & 3.29 & 0.09 \\ 
  CQR & weakly violated & 0.901 & 0.013 & 3.31 & 0.11 & 0.899 & 0.013 & 3.29 & 0.11 \\ 
   & strongly violated & 0.901 & 0.012 & 3.32 & 0.11 & 0.899 & 0.012 & 3.30 & 0.11 \\ 
   & & \multicolumn{4}{c}{Federated II (ours)}&\multicolumn{4}{c}{Target site only}\\
  \cmidrule(lr){3-6}\cmidrule(lr){7-10}
 & holds & 0.900 & 0.012 & 3.29 & 0.10 & 0.899 & 0.015 & 3.28 & 0.13 \\ 
  ASR & weakly violated & 0.899 & 0.012 & 3.29 & 0.10 & 0.900 & 0.014 & 3.30 & 0.12 \\ 
   & strongly violated & 0.901 & 0.012 & 3.30 & 0.10 & 0.901 & 0.014 & 3.31 & 0.13 \\ 
   & holds & 0.900 & 0.013 & 3.29 & 0.11 & 0.899 & 0.019 & 3.29 & 0.17 \\ 
  Local ASR & weakly violated & 0.901 & 0.013 & 3.31 & 0.11 & 0.902 & 0.018 & 3.32 & 0.17 \\ 
   & strongly violated & 0.901 & 0.013 & 3.31 & 0.12 & 0.901 & 0.018 & 3.32 & 0.17 \\ 
   & holds & 0.901 & 0.012 & 3.31 & 0.11 & 0.900 & 0.020 & 3.31 & 0.18 \\ 
  CQR & weakly violated & 0.901 & 0.013 & 3.31 & 0.11 & 0.902 & 0.020 & 3.33 & 0.19 \\ 
   & strongly violated & 0.902 & 0.012 & 3.32 & 0.11 & 0.902 & 0.019 & 3.33 & 0.18 \\  
   & & \multicolumn{4}{c}{Federated III}&\multicolumn{4}{c}{Equal weights}\\
  \cmidrule(lr){3-6}\cmidrule(lr){7-10}
 & holds & 0.901 & 0.013 & 3.30 & 0.11 & 0.901 & 0.013 & 3.30 & 0.11 \\ 
  ASR & weakly violated & 0.900 & 0.012 & 3.29 & 0.10 & 0.900 & 0.012 & 3.29 & 0.10 \\ 
   & strongly violated & 0.901 & 0.013 & 3.31 & 0.11 & 0.901 & 0.013 & 3.31 & 0.11 \\ 
   & holds & 0.901 & 0.013 & 3.31 & 0.11 & 0.901 & 0.013 & 3.31 & 0.11 \\ 
  Local ASR & weakly violated & 0.902 & 0.013 & 3.32 & 0.11 & 0.902 & 0.013 & 3.32 & 0.11 \\ 
   & strongly violated & 0.902 & 0.013 & 3.32 & 0.12 & 0.902 & 0.013 & 3.32 & 0.12 \\ 
   & holds & 0.903 & 0.012 & 3.33 & 0.11 & 0.903 & 0.012 & 3.33 & 0.11 \\ 
  CQR & weakly violated & 0.903 & 0.012 & 3.33 & 0.11 & 0.903 & 0.012 & 3.33 & 0.11 \\ 
   & strongly violated & 0.903 & 0.012 & 3.33 & 0.10 & 0.903 & 0.012 & 3.33 & 0.10 \\ 
   \midrule
  & & \multicolumn{8}{c}{\bf Heteroscedasticity where $\sigma(x) = -\log(x)$}\\
  \addlinespace
  & & \multicolumn{4}{c}{Federated I}&\multicolumn{4}{c}{Pooled sample}\\
  \cmidrule(lr){3-6}\cmidrule(lr){7-10}
 & holds & 0.901 & 0.013 & 3.94 & 0.29 & 0.900 & 0.007 & 3.91 & 0.12 \\ 
  ASR & weakly violated & 0.901 & 0.013 & 3.94 & 0.30 & 0.991 & 0.002 & 11.28 & 0.14 \\ 
   & strongly violated & 0.901 & 0.012 & 3.95 & 0.29 & 1.000 & 0.000 & 31.96 & 0.35 \\ 
   & holds & 0.902 & 0.023 & 3.48 & 0.31 & 0.901 & 0.015 & 3.45 & 0.15 \\ 
  Local ASR & weakly violated & 0.899 & 0.021 & 3.48 & 0.27 & 0.921 & 0.015 & 3.77 & 0.18 \\ 
   & strongly violated & 0.901 & 0.023 & 3.50 & 0.30 & 0.925 & 0.013 & 3.80 & 0.17 \\ 
   & holds & 0.902 & 0.013 & 3.19 & 0.11 & 0.900 & 0.011 & 3.18 & 0.11 \\ 
  CQR & weakly violated & 0.902 & 0.012 & 3.20 & 0.11 & 0.900 & 0.012 & 3.19 & 0.11 \\ 
   & strongly violated & 0.902 & 0.013 & 3.20 & 0.11 & 0.899 & 0.012 & 3.18 & 0.11 \\  
   & & \multicolumn{4}{c}{Federated II (ours)}&\multicolumn{4}{c}{Target site only}\\
  \cmidrule(lr){3-6}\cmidrule(lr){7-10}
& holds & 0.901 & 0.011 & 3.94 & 0.25 & 0.900 & 0.013 & 3.91 & 0.28 \\ 
  ASR & weakly violated & 0.901 & 0.012 & 3.94 & 0.27 & 0.901 & 0.013 & 3.96 & 0.30 \\ 
   & strongly violated & 0.901 & 0.011 & 3.95 & 0.26 & 0.900 & 0.013 & 3.94 & 0.30 \\ 
   & holds & 0.901 & 0.022 & 3.48 & 0.30 & 0.900 & 0.027 & 3.47 & 0.34 \\ 
  Local ASR & weakly violated & 0.899 & 0.021 & 3.48 & 0.26 & 0.900 & 0.025 & 3.50 & 0.32 \\ 
   & strongly violated & 0.901 & 0.023 & 3.50 & 0.30 & 0.901 & 0.027 & 3.51 & 0.35 \\ 
   & holds & 0.902 & 0.013 & 3.19 & 0.11 & 0.900 & 0.022 & 3.19 & 0.13 \\ 
  CQR & weakly violated & 0.902 & 0.012 & 3.20 & 0.11 & 0.901 & 0.020 & 3.20 & 0.13 \\ 
   & strongly violated & 0.902 & 0.012 & 3.20 & 0.11 & 0.901 & 0.020 & 3.20 & 0.13 \\ 
   & & \multicolumn{4}{c}{Federated III}&\multicolumn{4}{c}{Equal weights}\\
  \cmidrule(lr){3-6}\cmidrule(lr){7-10}
  & holds & 0.901 & 0.012 & 3.95 & 0.27 & 0.901 & 0.012 & 3.95 & 0.27 \\ 
  ASR & weakly violated & 0.901 & 0.013 & 3.95 & 0.29 & 0.901 & 0.013 & 3.95 & 0.29 \\ 
   & strongly violated & 0.901 & 0.012 & 3.96 & 0.27 & 0.901 & 0.012 & 3.96 & 0.27 \\ 
   & holds & 0.903 & 0.022 & 3.50 & 0.30 & 0.903 & 0.022 & 3.50 & 0.30 \\ 
  Local ASR & weakly violated & 0.900 & 0.021 & 3.49 & 0.26 & 0.900 & 0.021 & 3.49 & 0.26 \\ 
   & strongly violated & 0.902 & 0.023 & 3.51 & 0.30 & 0.902 & 0.023 & 3.51 & 0.30 \\ 
   & holds & 0.903 & 0.013 & 3.20 & 0.11 & 0.903 & 0.013 & 3.20 & 0.11 \\ 
  CQR & weakly violated & 0.904 & 0.012 & 3.20 & 0.11 & 0.904 & 0.012 & 3.20 & 0.11 \\ 
   & strongly violated & 0.904 & 0.012 & 3.20 & 0.11 & 0.904 & 0.012 & 3.20 & 0.11 \\  
   \bottomrule
\end{tabular}
\begin{tablenotes}
       \item CFS: conformal score; CCOD: common conditional outcome distribution
       \item CP: coverage probability; wd: width; s.d.: standard deviation (over 500 replications)
   \end{tablenotes}
\caption{$n_k=3000$, homogeneous covariate distribution}\label{tab:homo_n3000}
\end{table}

\begin{figure}[H]
    \centering
    \includegraphics[width=0.96\textwidth]{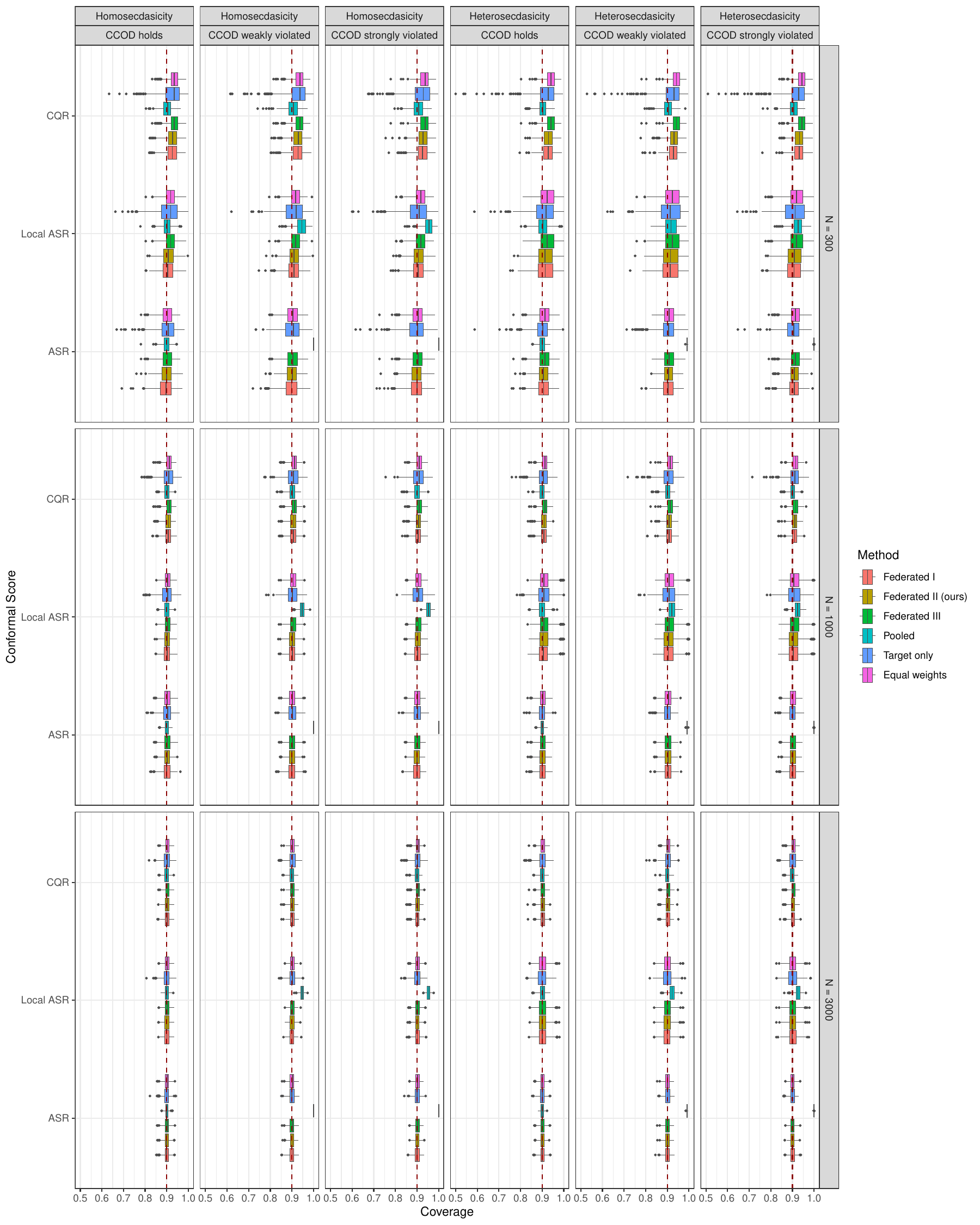}
    \caption{Boxplots of coverage probability, under homogeneous covariate distributions}
    \label{fig:box_cov_homo}
\end{figure}

\begin{figure}[H]
    \centering
    \includegraphics[width=0.96\textwidth]{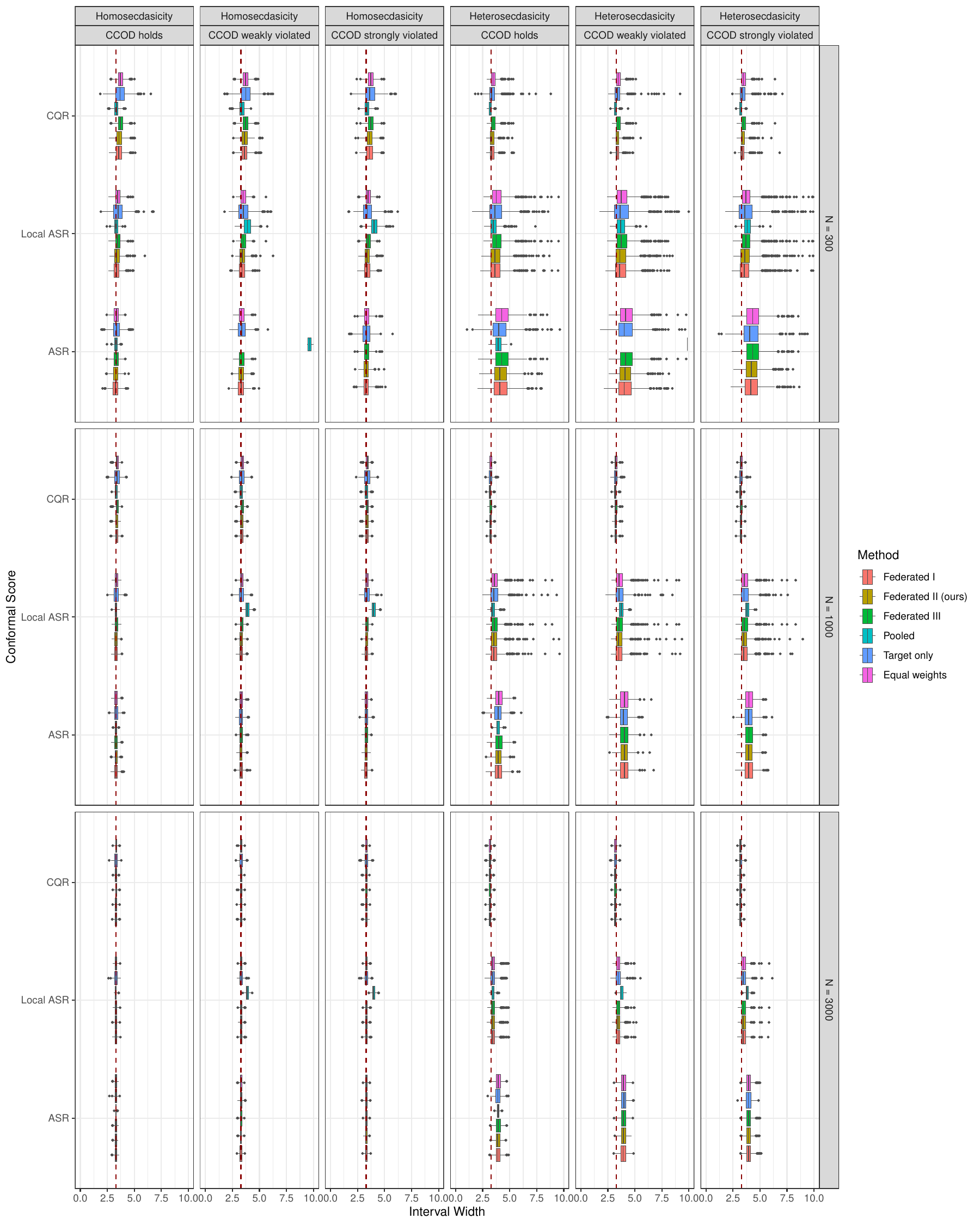}
    \caption{Boxplots of prediction interval width, under homogeneous covariate distributions}
    \label{fig:box_len_homo}
\end{figure}


\begin{table}[H]
\centering
\scriptsize 
\begin{tabular}{llrrrrrrrrrr}
  \toprule
CFS & CCOD & CP & s.d.(CP) & wd & s.d.(wd) & CP & s.d.(CP) & wd & s.d.(wd) \\ 
  \midrule
  & & \multicolumn{8}{c}{\bf Homoscedasticity where $\sigma(x) = 1$}\\
  \addlinespace
  & & \multicolumn{4}{c}{Federated I}&\multicolumn{4}{c}{Pooled sample}\\
  \cmidrule(lr){3-6}\cmidrule(lr){7-10}
 & holds & 0.892 & 0.043 & 3.28 & 0.44 & 0.899 & 0.018 & 3.29 & 0.16 \\ 
  ASR & weakly violated & 0.892 & 0.042 & 3.27 & 0.38 & 1.000 & 0.000 & 9.39 & 0.56 \\ 
   & strongly violated & 0.890 & 0.042 & 3.26 & 0.40 & 1.000 & 0.000 & 25.09 & 1.74 \\ 
   & holds & 0.898 & 0.038 & 3.33 & 0.37 & 0.899 & 0.024 & 3.31 & 0.23 \\ 
  Local ASR & weakly violated & 0.897 & 0.042 & 3.33 & 0.40 & 0.841 & 0.041 & 2.85 & 0.28 \\ 
   & strongly violated & 0.897 & 0.044 & 3.34 & 0.42 & 0.756 & 0.058 & 2.36 & 0.31 \\ 
   & holds & 0.925 & 0.043 & 6.42 & 17.84 & 0.901 & 0.025 & 3.33 & 0.24 \\ 
  CQR & weakly violated & 0.925 & 0.045 & 4.68 & 7.49 & 0.905 & 0.038 & 3.41 & 0.39 \\ 
   & strongly violated & 0.927 & 0.042 & 5.56 & 26.79 & 0.905 & 0.041 & 3.43 & 0.47 \\  
   & & \multicolumn{4}{c}{Federated II (ours)}&\multicolumn{4}{c}{Target site only}\\
  \cmidrule(lr){3-6}\cmidrule(lr){7-10}
  & holds & 0.895 & 0.037 & 3.30 & 0.38 & 0.901 & 0.045 & 3.38 & 0.43 \\ 
  ASR & weakly violated & 0.896 & 0.036 & 3.29 & 0.34 & 0.902 & 0.046 & 3.40 & 0.47 \\ 
   & strongly violated & 0.894 & 0.035 & 3.29 & 0.35 & 0.901 & 0.051 & 3.41 & 0.50 \\ 
   & holds & 0.902 & 0.035 & 3.37 & 0.36 & 0.907 & 0.055 & 3.50 & 0.61 \\ 
  Local ASR & weakly violated & 0.902 & 0.037 & 3.37 & 0.39 & 0.909 & 0.059 & 3.56 & 0.72 \\ 
   & strongly violated & 0.902 & 0.040 & 3.38 & 0.42 & 0.908 & 0.061 & 3.58 & 0.73 \\ 
   & holds & 0.929 & 0.037 & 5.88 & 14.24 & 0.921 & 0.053 & 3.70 & 0.65 \\ 
  CQR & weakly violated & 0.929 & 0.039 & 4.49 & 5.97 & 0.920 & 0.062 & 3.72 & 0.72 \\ 
   & strongly violated & 0.931 & 0.036 & 5.19 & 21.43 & 0.920 & 0.063 & 3.72 & 0.70 \\ 
   & & \multicolumn{4}{c}{Federated III}&\multicolumn{4}{c}{Equal weights}\\
  \cmidrule(lr){3-6}\cmidrule(lr){7-10}
 & holds & 0.899 & 0.034 & 3.33 & 0.35 & 0.907 & 0.037 & 3.43 & 0.41 \\ 
  ASR & weakly violated & 0.900 & 0.033 & 3.33 & 0.33 & 0.904 & 0.035 & 3.39 & 0.38 \\ 
   & strongly violated & 0.898 & 0.034 & 3.32 & 0.34 & 0.904 & 0.035 & 3.39 & 0.38 \\ 
   & holds & 0.913 & 0.036 & 3.51 & 0.49 & 0.924 & 0.039 & 3.90 & 2.17 \\ 
  Local ASR & weakly violated & 0.915 & 0.035 & 3.53 & 0.50 & 0.923 & 0.038 & 4.05 & 4.30 \\ 
   & strongly violated & 0.913 & 0.036 & 3.50 & 0.43 & 0.923 & 0.038 & 3.74 & 0.93 \\ 
   & holds & 0.949 & 0.033 & 4.98 & 5.52 & 0.962 & 0.035 & 8.94 & 33.91 \\ 
  CQR & weakly violated & 0.947 & 0.032 & 4.38 & 2.66 & 0.960 & 0.035 & 6.23 & 6.98 \\ 
   & strongly violated & 0.948 & 0.032 & 4.61 & 7.20 & 0.961 & 0.037 & 6.22 & 8.50 \\ 
   \midrule
  & & \multicolumn{8}{c}{\bf Heteroscedasticity where $\sigma(x) = -\log(x)$}\\
  \addlinespace
  & & \multicolumn{4}{c}{Federated I}&\multicolumn{4}{c}{Pooled sample}\\
  \cmidrule(lr){3-6}\cmidrule(lr){7-10}
 & holds & 0.904 & 0.034 & 4.21 & 0.90 & 0.906 & 0.017 & 4.12 & 0.42 \\ 
  ASR & weakly violated & 0.903 & 0.036 & 4.21 & 1.06 & 0.985 & 0.004 & 9.69 & 0.64 \\ 
   & strongly violated & 0.905 & 0.034 & 4.25 & 0.96 & 1.000 & 0.000 & 25.21 & 1.76 \\ 
   & holds & 0.915 & 0.060 & 3.87 & 0.87 & 0.926 & 0.033 & 3.91 & 0.51 \\ 
  Local ASR & weakly violated & 0.920 & 0.049 & 4.11 & 3.12 & 0.861 & 0.043 & 3.10 & 0.41 \\ 
   & strongly violated & 0.920 & 0.048 & 3.97 & 0.90 & 0.779 & 0.056 & 2.49 & 0.37 \\ 
   & holds & 0.923 & 0.111 & 8.19 & 27.88 & 0.851 & 0.156 & 3.31 & 0.58 \\ 
  CQR & weakly violated & 0.919 & 0.120 & 8.36 & 57.29 & 0.858 & 0.156 & 3.43 & 0.70 \\ 
   & strongly violated & 0.922 & 0.122 & 7.36 & 40.15 & 0.841 & 0.175 & 3.46 & 0.91 \\ 
   & & \multicolumn{4}{c}{Federated II (ours)}&\multicolumn{4}{c}{Target site only}\\
  \cmidrule(lr){3-6}\cmidrule(lr){7-10}
 & holds & 0.906 & 0.030 & 4.22 & 0.82 & 0.902 & 0.043 & 4.27 & 1.24 \\ 
  ASR & weakly violated & 0.905 & 0.031 & 4.23 & 0.93 & 0.902 & 0.044 & 4.31 & 1.30 \\ 
   & strongly violated & 0.908 & 0.029 & 4.28 & 0.83 & 0.905 & 0.044 & 4.36 & 1.24 \\ 
   & holds & 0.919 & 0.052 & 3.93 & 0.86 & 0.906 & 0.083 & 4.14 & 1.63 \\ 
  Local ASR & weakly violated & 0.924 & 0.045 & 4.14 & 2.55 & 0.909 & 0.081 & 4.29 & 1.93 \\ 
   & strongly violated & 0.925 & 0.045 & 4.03 & 0.89 & 0.910 & 0.094 & 4.29 & 1.62 \\ 
   & holds & 0.931 & 0.101 & 7.36 & 22.26 & 0.872 & 0.192 & 4.05 & 1.46 \\ 
  CQR & weakly violated & 0.933 & 0.099 & 7.50 & 45.81 & 0.863 & 0.209 & 4.12 & 1.57 \\ 
   & strongly violated & 0.934 & 0.102 & 6.71 & 32.09 & 0.863 & 0.210 & 4.12 & 1.58 \\ 
   & & \multicolumn{4}{c}{Federated III}&\multicolumn{4}{c}{Equal weights}\\
  \cmidrule(lr){3-6}\cmidrule(lr){7-10}
 & holds & 0.912 & 0.029 & 4.39 & 0.85 & 0.916 & 0.028 & 4.55 & 0.91 \\ 
  ASR & weakly violated & 0.911 & 0.029 & 4.38 & 0.91 & 0.915 & 0.028 & 4.51 & 0.91 \\ 
   & strongly violated & 0.914 & 0.029 & 4.46 & 0.90 & 0.917 & 0.029 & 4.59 & 0.97 \\ 
   & holds & 0.938 & 0.041 & 4.28 & 0.95 & 0.946 & 0.038 & 4.68 & 1.91 \\ 
  Local ASR & weakly violated & 0.940 & 0.042 & 4.76 & 6.03 & 0.948 & 0.038 & 5.05 & 6.76 \\ 
   & strongly violated & 0.938 & 0.043 & 4.34 & 1.04 & 0.946 & 0.039 & 4.63 & 1.48 \\ 
   & holds & 0.970 & 0.054 & 6.53 & 11.45 & 0.977 & 0.047 & 10.45 & 22.37 \\ 
  CQR & weakly violated & 0.966 & 0.064 & 6.51 & 20.21 & 0.974 & 0.052 & 10.32 & 28.37 \\ 
   & strongly violated & 0.969 & 0.059 & 6.02 & 14.23 & 0.977 & 0.050 & 9.86 & 19.04 \\ 
   \bottomrule
\end{tabular}
\begin{tablenotes}
       \item CFS: conformal score; CCOD: common conditional outcome distribution
       \item CP: coverage probability; wd: width; s.d.: standard deviation (over 500 replications)
   \end{tablenotes}
\caption{$n_k=300$, weakly heterogeneous covariate distribution}\label{tab:weak_n300}
\end{table}

\begin{table}[H]
\centering
\scriptsize 
\begin{tabular}{llrrrrrrrrrr}
  \toprule
CFS & CCOD & CP & s.d.(CP) & wd & s.d.(wd) & CP & s.d.(CP) & wd & s.d.(wd) \\ 
  \midrule
  & & \multicolumn{8}{c}{\bf Homoscedasticity where $\sigma(x) = 1$}\\
  \addlinespace
  & & \multicolumn{4}{c}{Federated I}&\multicolumn{4}{c}{Pooled sample}\\
  \cmidrule(lr){3-6}\cmidrule(lr){7-10}
& holds & 0.897 & 0.022 & 3.28 & 0.21 & 0.899 & 0.011 & 3.29 & 0.09 \\ 
  ASR & weakly violated & 0.896 & 0.025 & 3.27 & 0.22 & 1.000 & 0.000 & 9.23 & 0.26 \\ 
   & strongly violated & 0.894 & 0.027 & 3.26 & 0.24 & 1.000 & 0.000 & 24.51 & 0.96 \\ 
   & holds & 0.899 & 0.024 & 3.30 & 0.22 & 0.899 & 0.015 & 3.29 & 0.12 \\ 
  Local ASR & weakly violated & 0.897 & 0.027 & 3.28 & 0.23 & 0.832 & 0.022 & 2.76 & 0.14 \\ 
   & strongly violated & 0.897 & 0.024 & 3.29 & 0.22 & 0.736 & 0.030 & 2.25 & 0.14 \\ 
   & holds & 0.909 & 0.035 & 3.62 & 3.28 & 0.899 & 0.015 & 3.30 & 0.13 \\ 
  CQR & weakly violated & 0.911 & 0.037 & 3.56 & 0.91 & 0.903 & 0.020 & 3.34 & 0.20 \\ 
   & strongly violated & 0.907 & 0.036 & 3.53 & 1.28 & 0.900 & 0.023 & 3.32 & 0.22 \\   
   & & \multicolumn{4}{c}{Federated II (ours)}&\multicolumn{4}{c}{Target site only}\\
  \cmidrule(lr){3-6}\cmidrule(lr){7-10}
 & holds & 0.899 & 0.019 & 3.29 & 0.18 & 0.901 & 0.025 & 3.33 & 0.23 \\ 
  ASR & weakly violated & 0.898 & 0.021 & 3.28 & 0.19 & 0.900 & 0.025 & 3.31 & 0.23 \\ 
   & strongly violated & 0.896 & 0.022 & 3.27 & 0.20 & 0.899 & 0.025 & 3.31 & 0.24 \\ 
   & holds & 0.900 & 0.022 & 3.32 & 0.21 & 0.903 & 0.032 & 3.37 & 0.31 \\ 
  Local ASR & weakly violated & 0.899 & 0.024 & 3.29 & 0.22 & 0.902 & 0.031 & 3.35 & 0.31 \\ 
   & strongly violated & 0.899 & 0.022 & 3.30 & 0.20 & 0.901 & 0.032 & 3.35 & 0.32 \\ 
   & holds & 0.910 & 0.030 & 3.58 & 2.62 & 0.905 & 0.036 & 3.40 & 0.35 \\ 
  CQR & weakly violated & 0.912 & 0.032 & 3.53 & 0.72 & 0.904 & 0.034 & 3.39 & 0.34 \\ 
   & strongly violated & 0.908 & 0.030 & 3.50 & 1.01 & 0.904 & 0.035 & 3.39 & 0.35 \\ 
   & & \multicolumn{4}{c}{Federated III}&\multicolumn{4}{c}{Equal weights}\\
  \cmidrule(lr){3-6}\cmidrule(lr){7-10}
 & holds & 0.902 & 0.021 & 3.33 & 0.20 & 0.905 & 0.024 & 3.37 & 0.25 \\ 
  ASR & weakly violated & 0.901 & 0.021 & 3.31 & 0.20 & 0.903 & 0.023 & 3.34 & 0.23 \\ 
   & strongly violated & 0.899 & 0.022 & 3.30 & 0.20 & 0.902 & 0.024 & 3.34 & 0.23 \\ 
   & holds & 0.906 & 0.023 & 3.39 & 0.30 & 0.913 & 0.027 & 3.53 & 0.88 \\ 
  Local ASR & weakly violated & 0.906 & 0.022 & 3.37 & 0.22 & 0.913 & 0.027 & 3.48 & 0.54 \\ 
   & strongly violated & 0.905 & 0.021 & 3.36 & 0.21 & 0.912 & 0.024 & 3.45 & 0.38 \\ 
   & holds & 0.923 & 0.027 & 3.65 & 1.09 & 0.936 & 0.032 & 3.94 & 1.28 \\ 
  CQR & weakly violated & 0.925 & 0.027 & 3.64 & 0.52 & 0.936 & 0.032 & 3.92 & 0.94 \\ 
   & strongly violated & 0.922 & 0.026 & 3.61 & 0.47 & 0.935 & 0.033 & 3.94 & 1.02 \\ 
   \midrule
  & & \multicolumn{8}{c}{\bf Heteroscedasticity where $\sigma(x) = -\log(x)$}\\
  \addlinespace
  & & \multicolumn{4}{c}{Federated I}&\multicolumn{4}{c}{Pooled sample}\\
  \cmidrule(lr){3-6}\cmidrule(lr){7-10}
& holds & 0.907 & 0.022 & 4.16 & 0.58 & 0.908 & 0.011 & 4.12 & 0.24 \\ 
  ASR & weakly violated & 0.905 & 0.022 & 4.12 & 0.54 & 0.985 & 0.003 & 9.53 & 0.32 \\ 
   & strongly violated & 0.901 & 0.025 & 4.03 & 0.57 & 1.000 & 0.000 & 24.61 & 0.95 \\ 
   & holds & 0.921 & 0.040 & 3.86 & 0.54 & 0.929 & 0.020 & 3.91 & 0.31 \\ 
  Local ASR & weakly violated & 0.922 & 0.033 & 3.84 & 0.50 & 0.859 & 0.025 & 3.04 & 0.22 \\ 
   & strongly violated & 0.918 & 0.038 & 3.78 & 0.47 & 0.767 & 0.031 & 2.39 & 0.17 \\ 
   & holds & 0.894 & 0.136 & 3.84 & 1.73 & 0.872 & 0.107 & 3.23 & 0.35 \\ 
  CQR & weakly violated & 0.885 & 0.142 & 3.81 & 2.35 & 0.879 & 0.104 & 3.26 & 0.36 \\ 
   & strongly violated & 0.882 & 0.143 & 3.74 & 1.45 & 0.867 & 0.121 & 3.29 & 0.46 \\ 
   & & \multicolumn{4}{c}{Federated II (ours)}&\multicolumn{4}{c}{Target site only}\\
  \cmidrule(lr){3-6}\cmidrule(lr){7-10}
 & holds & 0.908 & 0.018 & 4.17 & 0.50 & 0.907 & 0.023 & 4.17 & 0.62 \\ 
  ASR & weakly violated & 0.907 & 0.018 & 4.13 & 0.47 & 0.908 & 0.022 & 4.18 & 0.57 \\ 
   & strongly violated & 0.903 & 0.021 & 4.06 & 0.50 & 0.906 & 0.022 & 4.16 & 0.60 \\ 
   & holds & 0.923 & 0.034 & 3.88 & 0.50 & 0.922 & 0.047 & 3.95 & 0.74 \\ 
  Local ASR & weakly violated & 0.925 & 0.029 & 3.87 & 0.45 & 0.926 & 0.041 & 3.97 & 0.68 \\ 
   & strongly violated & 0.921 & 0.032 & 3.82 & 0.43 & 0.925 & 0.041 & 3.97 & 0.71 \\ 
   & holds & 0.904 & 0.121 & 3.77 & 1.37 & 0.849 & 0.179 & 3.47 & 0.81 \\ 
  CQR & weakly violated & 0.899 & 0.119 & 3.75 & 1.86 & 0.857 & 0.172 & 3.51 & 0.77 \\ 
   & strongly violated & 0.896 & 0.123 & 3.69 & 1.14 & 0.857 & 0.172 & 3.51 & 0.80 \\ 
   & & \multicolumn{4}{c}{Federated III}&\multicolumn{4}{c}{Equal weights}\\
  \cmidrule(lr){3-6}\cmidrule(lr){7-10}
 & holds & 0.911 & 0.019 & 4.26 & 0.54 & 0.914 & 0.020 & 4.36 & 0.60 \\ 
  ASR & weakly violated & 0.910 & 0.019 & 4.23 & 0.52 & 0.913 & 0.021 & 4.33 & 0.60 \\ 
   & strongly violated & 0.908 & 0.020 & 4.19 & 0.54 & 0.911 & 0.022 & 4.31 & 0.63 \\ 
   & holds & 0.932 & 0.030 & 4.04 & 0.55 & 0.940 & 0.030 & 4.24 & 0.87 \\ 
  Local ASR & weakly violated & 0.934 & 0.028 & 4.03 & 0.54 & 0.940 & 0.029 & 4.28 & 1.76 \\ 
   & strongly violated & 0.932 & 0.028 & 4.01 & 0.50 & 0.939 & 0.030 & 4.30 & 1.88 \\ 
   & holds & 0.941 & 0.086 & 4.03 & 1.14 & 0.957 & 0.074 & 4.67 & 1.77 \\ 
  CQR & weakly violated & 0.944 & 0.071 & 3.97 & 0.97 & 0.956 & 0.069 & 4.67 & 2.34 \\ 
   & strongly violated & 0.941 & 0.088 & 3.96 & 0.87 & 0.958 & 0.067 & 4.80 & 3.17 \\ 
   \bottomrule
\end{tabular}
\begin{tablenotes}
       \item CFS: conformal score; CCOD: common conditional outcome distribution
       \item CP: coverage probability; wd: width; s.d.: standard deviation (over 500 replications)
   \end{tablenotes}
\caption{$n_k=1000$, weakly heterogeneous covariate distribution}\label{tab:weak_n1000}
\end{table}

\begin{table}[H]
\centering
\scriptsize 
\begin{tabular}{llrrrrrrrrrr}
  \toprule
CFS & CCOD & CP & s.d.(CP) & wd & s.d.(wd) & CP & s.d.(CP) & wd & s.d.(wd) \\ 
  \midrule
  & & \multicolumn{8}{c}{\bf Homoscedasticity where $\sigma(x) = 1$}\\
  \addlinespace
  & & \multicolumn{4}{c}{Federated I}&\multicolumn{4}{c}{Pooled sample}\\
  \cmidrule(lr){3-6}\cmidrule(lr){7-10}
& holds & 0.898 & 0.016 & 3.27 & 0.14 & 0.900 & 0.008 & 3.29 & 0.05 \\ 
  ASR & weakly violated & 0.898 & 0.015 & 3.28 & 0.13 & 1.000 & 0.000 & 9.20 & 0.16 \\ 
   & strongly violated & 0.896 & 0.017 & 3.27 & 0.15 & 1.000 & 0.000 & 24.47 & 0.56 \\ 
   & holds & 0.898 & 0.016 & 3.28 & 0.14 & 0.900 & 0.011 & 3.29 & 0.09 \\ 
  Local ASR & weakly violated & 0.898 & 0.016 & 3.28 & 0.14 & 0.829 & 0.016 & 2.74 & 0.08 \\ 
   & strongly violated & 0.897 & 0.017 & 3.28 & 0.14 & 0.733 & 0.020 & 2.23 & 0.08 \\ 
   & holds & 0.899 & 0.019 & 3.30 & 0.17 & 0.899 & 0.012 & 3.29 & 0.10 \\ 
  CQR & weakly violated & 0.899 & 0.019 & 3.30 & 0.17 & 0.900 & 0.015 & 3.30 & 0.13 \\ 
   & strongly violated & 0.899 & 0.020 & 3.30 & 0.19 & 0.900 & 0.015 & 3.31 & 0.13 \\  
   & & \multicolumn{4}{c}{Federated II (ours)}&\multicolumn{4}{c}{Target site only}\\
  \cmidrule(lr){3-6}\cmidrule(lr){7-10}
  & holds & 0.898 & 0.013 & 3.28 & 0.12 & 0.900 & 0.014 & 3.30 & 0.12 \\ 
  ASR & weakly violated & 0.899 & 0.013 & 3.29 & 0.11 & 0.901 & 0.015 & 3.30 & 0.13 \\ 
   & strongly violated & 0.897 & 0.014 & 3.27 & 0.12 & 0.900 & 0.014 & 3.30 & 0.13 \\ 
   & holds & 0.899 & 0.015 & 3.29 & 0.13 & 0.901 & 0.018 & 3.31 & 0.17 \\ 
  Local ASR & weakly violated & 0.899 & 0.015 & 3.28 & 0.13 & 0.900 & 0.020 & 3.31 & 0.18 \\ 
   & strongly violated & 0.898 & 0.015 & 3.29 & 0.13 & 0.901 & 0.018 & 3.32 & 0.17 \\ 
   & holds & 0.900 & 0.017 & 3.30 & 0.15 & 0.901 & 0.020 & 3.31 & 0.19 \\ 
  CQR & weakly violated & 0.900 & 0.016 & 3.31 & 0.15 & 0.902 & 0.021 & 3.34 & 0.20 \\ 
   & strongly violated & 0.900 & 0.017 & 3.31 & 0.16 & 0.902 & 0.021 & 3.34 & 0.20 \\ 
   & & \multicolumn{4}{c}{Federated III}&\multicolumn{4}{c}{Equal weights}\\
  \cmidrule(lr){3-6}\cmidrule(lr){7-10}
 & holds & 0.901 & 0.014 & 3.31 & 0.12 & 0.902 & 0.015 & 3.32 & 0.14 \\ 
  ASR & weakly violated & 0.902 & 0.014 & 3.31 & 0.13 & 0.903 & 0.015 & 3.33 & 0.14 \\ 
   & strongly violated & 0.900 & 0.014 & 3.30 & 0.13 & 0.901 & 0.016 & 3.32 & 0.15 \\ 
   & holds & 0.903 & 0.016 & 3.33 & 0.14 & 0.905 & 0.017 & 3.35 & 0.16 \\ 
  Local ASR & weakly violated & 0.903 & 0.015 & 3.33 & 0.14 & 0.905 & 0.016 & 3.35 & 0.15 \\ 
   & strongly violated & 0.903 & 0.015 & 3.33 & 0.14 & 0.905 & 0.016 & 3.35 & 0.16 \\ 
   & holds & 0.909 & 0.018 & 3.40 & 0.18 & 0.914 & 0.022 & 3.46 & 0.24 \\ 
  CQR & weakly violated & 0.909 & 0.016 & 3.40 & 0.16 & 0.914 & 0.019 & 3.46 & 0.22 \\ 
   & strongly violated & 0.909 & 0.017 & 3.40 & 0.17 & 0.913 & 0.020 & 3.46 & 0.23 \\  
   \midrule
  & & \multicolumn{8}{c}{\bf Heteroscedasticity where $\sigma(x) = -\log(x)$}\\
  \addlinespace
  & & \multicolumn{4}{c}{Federated I}&\multicolumn{4}{c}{Pooled sample}\\
  \cmidrule(lr){3-6}\cmidrule(lr){7-10}
& holds & 0.906 & 0.014 & 4.09 & 0.35 & 0.907 & 0.008 & 4.10 & 0.14 \\ 
  ASR & weakly violated & 0.906 & 0.015 & 4.08 & 0.37 & 0.985 & 0.003 & 9.46 & 0.17 \\ 
   & strongly violated & 0.903 & 0.016 & 4.04 & 0.37 & 1.000 & 0.000 & 24.53 & 0.52 \\ 
   & holds & 0.927 & 0.022 & 3.84 & 0.33 & 0.932 & 0.014 & 3.90 & 0.20 \\ 
  Local ASR & weakly violated & 0.925 & 0.023 & 3.86 & 0.33 & 0.856 & 0.018 & 3.02 & 0.15 \\ 
   & strongly violated & 0.924 & 0.023 & 3.82 & 0.33 & 0.763 & 0.021 & 2.36 & 0.12 \\ 
   & holds & 0.861 & 0.124 & 3.23 & 0.43 & 0.888 & 0.058 & 3.19 & 0.21 \\ 
  CQR & weakly violated & 0.860 & 0.132 & 3.26 & 0.47 & 0.890 & 0.057 & 3.21 & 0.23 \\ 
   & strongly violated & 0.861 & 0.125 & 3.23 & 0.42 & 0.880 & 0.081 & 3.21 & 0.28 \\ 
   & & \multicolumn{4}{c}{Federated II (ours)}&\multicolumn{4}{c}{Target site only}\\
  \cmidrule(lr){3-6}\cmidrule(lr){7-10}
 & holds & 0.907 & 0.012 & 4.09 & 0.29 & 0.907 & 0.013 & 4.10 & 0.33 \\ 
  ASR & weakly violated & 0.906 & 0.013 & 4.09 & 0.31 & 0.908 & 0.013 & 4.13 & 0.33 \\ 
   & strongly violated & 0.904 & 0.013 & 4.05 & 0.30 & 0.906 & 0.013 & 4.11 & 0.34 \\ 
   & holds & 0.928 & 0.020 & 3.85 & 0.30 & 0.928 & 0.026 & 3.90 & 0.42 \\ 
  Local ASR & weakly violated & 0.926 & 0.021 & 3.87 & 0.30 & 0.927 & 0.026 & 3.92 & 0.42 \\ 
   & strongly violated & 0.926 & 0.020 & 3.84 & 0.29 & 0.929 & 0.026 & 3.93 & 0.43 \\ 
   & holds & 0.872 & 0.108 & 3.23 & 0.36 & 0.853 & 0.138 & 3.24 & 0.46 \\ 
  CQR & weakly violated & 0.873 & 0.110 & 3.26 & 0.40 & 0.861 & 0.134 & 3.30 & 0.48 \\ 
   & strongly violated & 0.874 & 0.105 & 3.25 & 0.36 & 0.861 & 0.140 & 3.30 & 0.49 \\ 
   & & \multicolumn{4}{c}{Federated III}&\multicolumn{4}{c}{Equal weights}\\
  \cmidrule(lr){3-6}\cmidrule(lr){7-10}
  & holds & 0.910 & 0.013 & 4.18 & 0.34 & 0.911 & 0.014 & 4.22 & 0.38 \\ 
  ASR & weakly violated & 0.909 & 0.014 & 4.18 & 0.35 & 0.910 & 0.014 & 4.21 & 0.39 \\ 
   & strongly violated & 0.907 & 0.014 & 4.14 & 0.35 & 0.908 & 0.015 & 4.17 & 0.38 \\ 
   & holds & 0.935 & 0.020 & 3.98 & 0.33 & 0.937 & 0.021 & 4.04 & 0.38 \\ 
  Local ASR & weakly violated & 0.932 & 0.020 & 3.98 & 0.32 & 0.934 & 0.022 & 4.03 & 0.37 \\ 
   & strongly violated & 0.932 & 0.021 & 3.97 & 0.33 & 0.934 & 0.022 & 4.02 & 0.39 \\ 
   & holds & 0.919 & 0.080 & 3.49 & 0.42 & 0.927 & 0.082 & 3.63 & 0.56 \\ 
  CQR & weakly violated & 0.921 & 0.080 & 3.51 & 0.44 & 0.929 & 0.080 & 3.64 & 0.54 \\ 
   & strongly violated & 0.918 & 0.083 & 3.48 & 0.39 & 0.926 & 0.086 & 3.60 & 0.48 \\ 
   \bottomrule
\end{tabular}
\begin{tablenotes}
       \item CFS: conformal score; CCOD: common conditional outcome distribution
       \item CP: coverage probability; wd: width; s.d.: standard deviation (over 500 replications)
   \end{tablenotes}
\caption{$n_k=3000$, weakly heterogeneous covariate distribution}
\end{table}\label{tab:weak_n3000}

\begin{figure}[H]
    \centering
    \includegraphics[width=0.96\textwidth]{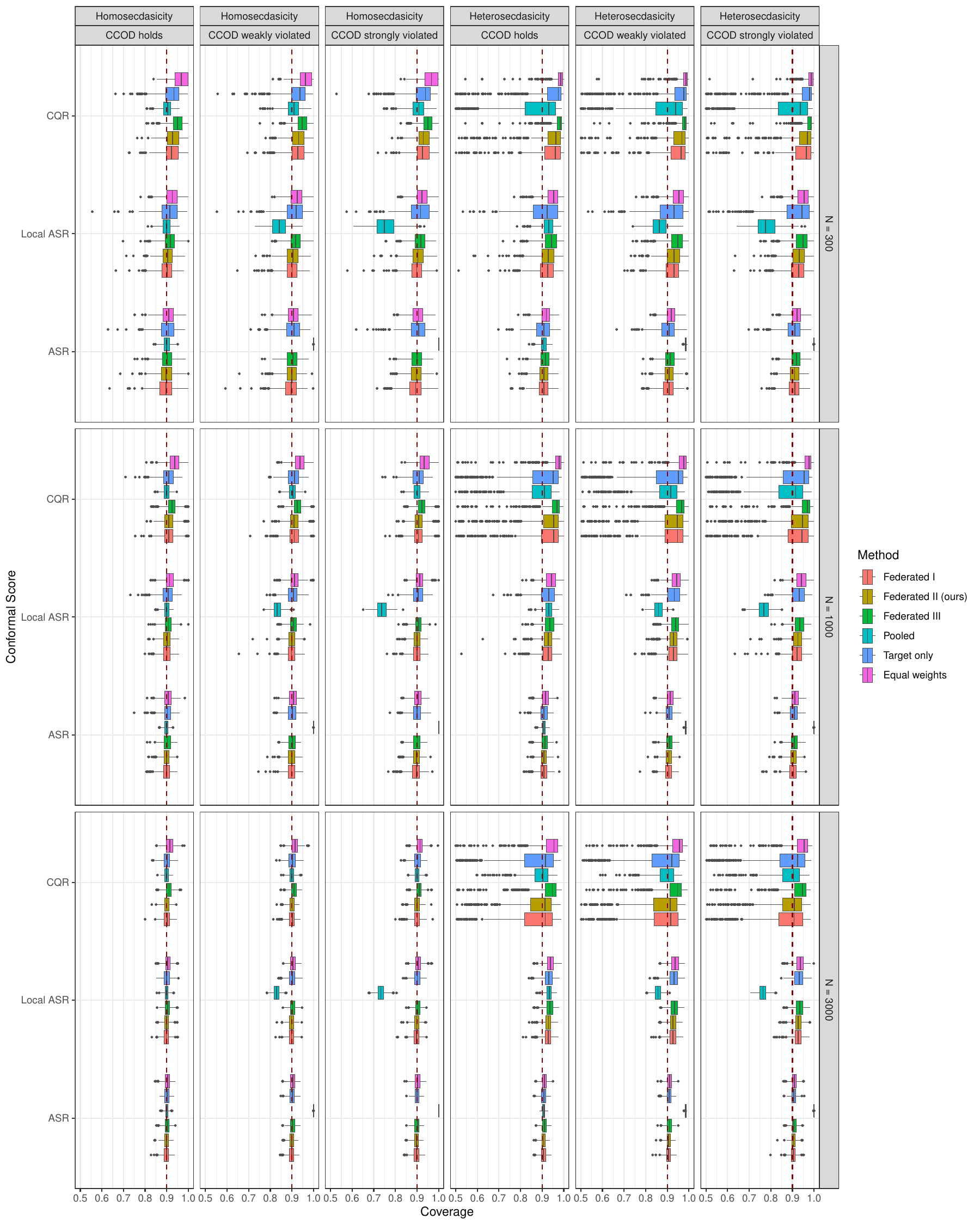}
    \caption{Boxplots of coverage probability, under weakly heterogeneous covariate distributions}
    \label{fig:box_cov_weak}
\end{figure}

\begin{figure}[H]
    \centering
    \includegraphics[width=0.96\textwidth]{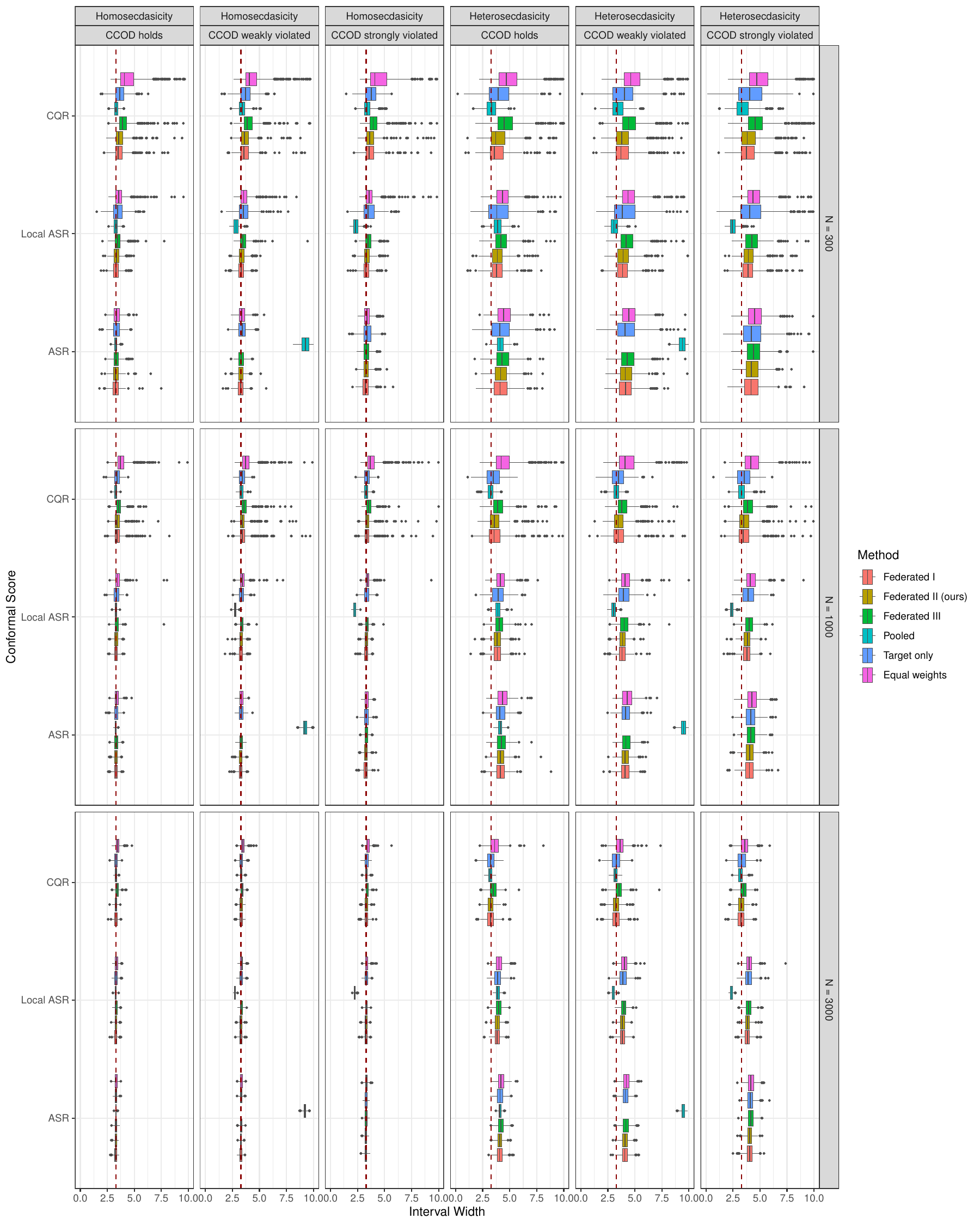}
    \caption{Boxplots of prediction interval width, under weakly heterogeneous covariate distributions}
    \label{fig:box_len_weak}
\end{figure}


\begin{table}[H]
\centering
\scriptsize 
\begin{tabular}{llrrrrrrrrrr}
  \toprule
CFS & CCOD & CP & s.d.(CP) & wd & s.d.(wd) & CP & s.d.(CP) & wd & s.d.(wd) \\ 
  \midrule
  & & \multicolumn{8}{c}{\bf Homoscedasticity where $\sigma(x) = 1$}\\
  \addlinespace
  & & \multicolumn{4}{c}{Federated I}&\multicolumn{4}{c}{Pooled sample}\\
  \cmidrule(lr){3-6}\cmidrule(lr){7-10}
 & holds & 0.891 & 0.043 & 3.26 & 0.37 & 0.901 & 0.018 & 3.31 & 0.16 \\ 
  ASR & weakly violated & 0.893 & 0.039 & 3.28 & 0.40 & 1.000 & 0.000 & 8.85 & 0.50 \\ 
   & strongly violated & 0.893 & 0.041 & 3.29 & 0.39 & 1.000 & 0.000 & 24.23 & 1.49 \\ 
   & holds & 0.899 & 0.043 & 3.36 & 0.44 & 0.901 & 0.023 & 3.33 & 0.22 \\ 
  Local ASR & weakly violated & 0.897 & 0.044 & 3.34 & 0.48 & 0.843 & 0.037 & 2.86 & 0.27 \\ 
   & strongly violated & 0.895 & 0.049 & 3.34 & 0.47 & 0.754 & 0.056 & 2.35 & 0.30 \\ 
   & holds & 0.923 & 0.064 & 4.55 & 6.23 & 0.903 & 0.025 & 3.35 & 0.24 \\ 
  CQR & weakly violated & 0.922 & 0.055 & 4.49 & 5.18 & 0.904 & 0.043 & 3.41 & 0.42 \\ 
   & strongly violated & 0.924 & 0.067 & 4.72 & 7.05 & 0.904 & 0.042 & 3.41 & 0.44 \\ 
   & & \multicolumn{4}{c}{Federated II (ours)}&\multicolumn{4}{c}{Target site only}\\
  \cmidrule(lr){3-6}\cmidrule(lr){7-10}
   & holds & 0.893 & 0.038 & 3.28 & 0.34 & 0.898 & 0.048 & 3.35 & 0.47 \\ 
  ASR & weakly violated & 0.895 & 0.034 & 3.29 & 0.34 & 0.898 & 0.043 & 3.34 & 0.41 \\ 
   & strongly violated & 0.896 & 0.036 & 3.30 & 0.35 & 0.896 & 0.049 & 3.35 & 0.47 \\ 
   & holds & 0.903 & 0.040 & 3.39 & 0.43 & 0.904 & 0.058 & 3.50 & 0.66 \\ 
  Local ASR & weakly violated & 0.900 & 0.041 & 3.37 & 0.46 & 0.904 & 0.056 & 3.48 & 0.64 \\ 
   & strongly violated & 0.900 & 0.043 & 3.37 & 0.44 & 0.905 & 0.058 & 3.50 & 0.63 \\ 
   & holds & 0.927 & 0.051 & 4.38 & 4.96 & 0.918 & 0.058 & 3.68 & 0.66 \\ 
  CQR & weakly violated & 0.926 & 0.045 & 4.33 & 4.12 & 0.920 & 0.062 & 3.71 & 0.71 \\ 
   & strongly violated & 0.928 & 0.055 & 4.52 & 5.62 & 0.920 & 0.060 & 3.71 & 0.69 \\ 
   & & \multicolumn{4}{c}{Federated III}&\multicolumn{4}{c}{Equal weights}\\
  \cmidrule(lr){3-6}\cmidrule(lr){7-10}
 & holds & 0.897 & 0.036 & 3.31 & 0.34 & 0.907 & 0.037 & 3.44 & 0.46 \\ 
  ASR & weakly violated & 0.899 & 0.032 & 3.32 & 0.32 & 0.907 & 0.036 & 3.44 & 0.44 \\ 
   & strongly violated & 0.900 & 0.034 & 3.34 & 0.35 & 0.908 & 0.039 & 3.46 & 0.47 \\ 
   & holds & 0.914 & 0.038 & 3.59 & 1.08 & 0.929 & 0.042 & 4.27 & 3.11 \\ 
  Local ASR & weakly violated & 0.912 & 0.037 & 3.55 & 0.89 & 0.930 & 0.042 & 4.70 & 7.53 \\ 
   & strongly violated & 0.913 & 0.039 & 3.53 & 0.57 & 0.930 & 0.042 & 4.30 & 4.39 \\ 
   & holds & 0.949 & 0.033 & 4.59 & 3.44 & 0.965 & 0.034 & 9.82 & 52.91 \\ 
  CQR & weakly violated & 0.946 & 0.035 & 4.55 & 3.13 & 0.960 & 0.056 & 6.84 & 8.83 \\ 
   & strongly violated & 0.950 & 0.031 & 4.65 & 4.31 & 0.966 & 0.032 & 6.82 & 7.94 \\ 
   \midrule
  & & \multicolumn{8}{c}{\bf Heteroscedasticity where $\sigma(x) = -\log(x)$}\\
  \addlinespace
  & & \multicolumn{4}{c}{Federated I}&\multicolumn{4}{c}{Pooled sample}\\
  \cmidrule(lr){3-6}\cmidrule(lr){7-10}
 & holds & 0.903 & 0.038 & 4.23 & 1.05 & 0.906 & 0.017 & 4.10 & 0.42 \\ 
  ASR & weakly violated & 0.902 & 0.035 & 4.16 & 0.90 & 0.983 & 0.004 & 9.22 & 0.52 \\ 
   & strongly violated & 0.904 & 0.034 & 4.21 & 0.95 & 1.000 & 0.000 & 24.37 & 1.53 \\ 
   & holds & 0.918 & 0.053 & 3.92 & 0.89 & 0.925 & 0.032 & 3.89 & 0.49 \\ 
  Local ASR & weakly violated & 0.919 & 0.053 & 3.97 & 0.92 & 0.870 & 0.044 & 3.19 & 0.45 \\ 
   & strongly violated & 0.919 & 0.052 & 3.96 & 0.91 & 0.789 & 0.058 & 2.55 & 0.40 \\ 
   & holds & 0.915 & 0.134 & 9.13 & 47.58 & 0.850 & 0.155 & 3.29 & 0.56 \\ 
  CQR & weakly violated & 0.916 & 0.127 & 4.57 & 8.47 & 0.854 & 0.169 & 3.43 & 0.72 \\ 
   & strongly violated & 0.915 & 0.126 & 5.86 & 16.05 & 0.839 & 0.183 & 3.48 & 0.91 \\ 
   & & \multicolumn{4}{c}{Federated II (ours)}&\multicolumn{4}{c}{Target site only}\\
  \cmidrule(lr){3-6}\cmidrule(lr){7-10}
  & holds & 0.905 & 0.034 & 4.23 & 0.97 & 0.899 & 0.046 & 4.24 & 1.29 \\ 
  ASR & weakly violated & 0.905 & 0.031 & 4.20 & 0.84 & 0.905 & 0.044 & 4.39 & 1.33 \\ 
   & strongly violated & 0.906 & 0.030 & 4.22 & 0.87 & 0.902 & 0.042 & 4.28 & 1.27 \\ 
   & holds & 0.920 & 0.049 & 3.96 & 0.91 & 0.901 & 0.091 & 4.12 & 1.67 \\ 
  Local ASR & weakly violated & 0.923 & 0.050 & 4.04 & 0.94 & 0.913 & 0.085 & 4.31 & 1.70 \\ 
   & strongly violated & 0.923 & 0.049 & 4.03 & 0.93 & 0.913 & 0.081 & 4.32 & 1.71 \\ 
   & holds & 0.925 & 0.120 & 8.10 & 38.03 & 0.855 & 0.213 & 3.97 & 1.46 \\ 
  CQR & weakly violated & 0.926 & 0.116 & 4.49 & 6.77 & 0.877 & 0.197 & 4.18 & 1.51 \\ 
   & strongly violated & 0.929 & 0.107 & 5.53 & 12.80 & 0.877 & 0.192 & 4.18 & 1.49 \\ 
   & & \multicolumn{4}{c}{Federated III}&\multicolumn{4}{c}{Equal weights}\\
  \cmidrule(lr){3-6}\cmidrule(lr){7-10}
  & holds & 0.910 & 0.031 & 4.38 & 1.00 & 0.918 & 0.031 & 4.69 & 1.25 \\ 
  ASR & weakly violated & 0.909 & 0.030 & 4.34 & 0.86 & 0.915 & 0.030 & 4.52 & 0.96 \\ 
   & strongly violated & 0.910 & 0.030 & 4.35 & 0.86 & 0.917 & 0.029 & 4.61 & 0.99 \\ 
   & holds & 0.936 & 0.044 & 4.27 & 1.02 & 0.951 & 0.038 & 4.99 & 2.81 \\ 
  Local ASR & weakly violated & 0.936 & 0.046 & 4.41 & 1.81 & 0.948 & 0.041 & 5.64 & 9.58 \\ 
   & strongly violated & 0.938 & 0.047 & 4.42 & 1.66 & 0.950 & 0.040 & 5.40 & 7.17 \\ 
   & holds & 0.958 & 0.097 & 7.31 & 18.71 & 0.975 & 0.066 & 16.39 & 86.25 \\ 
  CQR & weakly violated & 0.958 & 0.092 & 5.51 & 7.78 & 0.969 & 0.084 & 13.30 & 53.84 \\ 
   & strongly violated & 0.969 & 0.057 & 6.29 & 9.57 & 0.979 & 0.043 & 10.97 & 24.62 \\ 
   \bottomrule
\end{tabular}
\begin{tablenotes}
       \item CFS: conformal score; CCOD: common conditional outcome distribution
       \item CP: coverage probability; wd: width; s.d.: standard deviation (over 500 replications)
   \end{tablenotes}
\caption{$n_k=300$, strongly heterogeneous covariate distribution}\label{tab:strg_n300}
\end{table}

\begin{table}[H]
\centering
\scriptsize 
\begin{tabular}{llrrrrrrrrrr}
  \toprule
CFS & CCOD & CP & s.d.(CP) & wd & s.d.(wd) & CP & s.d.(CP) & wd & s.d.(wd) \\ 
  \midrule
  & & \multicolumn{8}{c}{\bf Homoscedasticity where $\sigma(x) = 1$}\\
  \addlinespace
  & & \multicolumn{4}{c}{Federated I}&\multicolumn{4}{c}{Pooled sample}\\
  \cmidrule(lr){3-6}\cmidrule(lr){7-10}
 & holds & 0.895 & 0.027 & 3.27 & 0.23 & 0.900 & 0.011 & 3.30 & 0.09 \\ 
  ASR & weakly violated & 0.896 & 0.026 & 3.27 & 0.23 & 1.000 & 0.000 & 8.77 & 0.26 \\ 
   & strongly violated & 0.895 & 0.026 & 3.27 & 0.22 & 1.000 & 0.000 & 23.95 & 0.85 \\ 
   & holds & 0.897 & 0.030 & 3.30 & 0.26 & 0.901 & 0.015 & 3.31 & 0.14 \\ 
  Local ASR & weakly violated & 0.899 & 0.027 & 3.30 & 0.25 & 0.839 & 0.021 & 2.80 & 0.14 \\ 
   & strongly violated & 0.896 & 0.028 & 3.29 & 0.25 & 0.739 & 0.036 & 2.26 & 0.17 \\ 
   & holds & 0.907 & 0.044 & 3.60 & 1.54 & 0.901 & 0.015 & 3.32 & 0.14 \\ 
  CQR & weakly violated & 0.909 & 0.045 & 3.60 & 1.44 & 0.902 & 0.022 & 3.33 & 0.21 \\ 
   & strongly violated & 0.906 & 0.042 & 3.57 & 1.66 & 0.900 & 0.025 & 3.33 & 0.25 \\ 
   & & \multicolumn{4}{c}{Federated II (ours)}&\multicolumn{4}{c}{Target site only}\\
  \cmidrule(lr){3-6}\cmidrule(lr){7-10}
   & holds & 0.897 & 0.022 & 3.28 & 0.20 & 0.902 & 0.024 & 3.34 & 0.24 \\ 
  ASR & weakly violated & 0.898 & 0.022 & 3.28 & 0.20 & 0.901 & 0.025 & 3.32 & 0.24 \\ 
   & strongly violated & 0.896 & 0.022 & 3.28 & 0.20 & 0.899 & 0.025 & 3.31 & 0.24 \\ 
   & holds & 0.900 & 0.026 & 3.32 & 0.23 & 0.906 & 0.029 & 3.40 & 0.31 \\ 
  Local ASR & weakly violated & 0.900 & 0.025 & 3.31 & 0.23 & 0.903 & 0.031 & 3.35 & 0.32 \\ 
   & strongly violated & 0.898 & 0.025 & 3.30 & 0.23 & 0.902 & 0.031 & 3.36 & 0.31 \\ 
   & holds & 0.910 & 0.036 & 3.57 & 1.22 & 0.909 & 0.031 & 3.43 & 0.33 \\ 
  CQR & weakly violated & 0.911 & 0.037 & 3.56 & 1.14 & 0.905 & 0.033 & 3.40 & 0.33 \\ 
   & strongly violated & 0.908 & 0.035 & 3.53 & 1.32 & 0.905 & 0.035 & 3.40 & 0.35 \\ 
   & & \multicolumn{4}{c}{Federated III}&\multicolumn{4}{c}{Equal weights}\\
  \cmidrule(lr){3-6}\cmidrule(lr){7-10}
 & holds & 0.898 & 0.022 & 3.29 & 0.20 & 0.904 & 0.028 & 3.36 & 0.28 \\ 
  ASR & weakly violated & 0.900 & 0.021 & 3.30 & 0.20 & 0.907 & 0.027 & 3.39 & 0.28 \\ 
   & strongly violated & 0.899 & 0.021 & 3.30 & 0.20 & 0.906 & 0.025 & 3.38 & 0.26 \\ 
   & holds & 0.904 & 0.023 & 3.36 & 0.23 & 0.916 & 0.029 & 3.52 & 0.41 \\ 
  Local ASR & weakly violated & 0.907 & 0.023 & 3.38 & 0.26 & 0.919 & 0.029 & 3.59 & 0.62 \\ 
   & strongly violated & 0.904 & 0.023 & 3.36 & 0.25 & 0.917 & 0.031 & 3.71 & 3.48 \\ 
   & holds & 0.922 & 0.028 & 3.63 & 0.61 & 0.942 & 0.034 & 4.03 & 1.02 \\ 
  CQR & weakly violated & 0.925 & 0.026 & 3.65 & 0.68 & 0.944 & 0.031 & 4.03 & 0.90 \\ 
   & strongly violated & 0.921 & 0.028 & 3.61 & 0.75 & 0.943 & 0.033 & 4.11 & 1.30 \\ 
   \midrule
  & & \multicolumn{8}{c}{\bf Heteroscedasticity where $\sigma(x) = -\log(x)$}\\
  \addlinespace
  & & \multicolumn{4}{c}{Federated I}&\multicolumn{4}{c}{Pooled sample}\\
  \cmidrule(lr){3-6}\cmidrule(lr){7-10}
 & holds & 0.904 & 0.025 & 4.10 & 0.60 & 0.907 & 0.010 & 4.10 & 0.23 \\ 
  ASR & weakly violated & 0.903 & 0.022 & 4.07 & 0.56 & 0.983 & 0.003 & 9.11 & 0.28 \\ 
   & strongly violated & 0.903 & 0.025 & 4.07 & 0.58 & 1.000 & 0.000 & 24.14 & 0.85 \\ 
   & holds & 0.918 & 0.040 & 3.80 & 0.54 & 0.929 & 0.020 & 3.90 & 0.31 \\ 
  Local ASR & weakly violated & 0.921 & 0.041 & 3.85 & 0.61 & 0.868 & 0.025 & 3.12 & 0.23 \\ 
   & strongly violated & 0.920 & 0.037 & 3.84 & 0.56 & 0.781 & 0.036 & 2.48 & 0.22 \\ 
   & holds & 0.862 & 0.168 & 3.83 & 2.02 & 0.866 & 0.111 & 3.22 & 0.35 \\ 
  CQR & weakly violated & 0.868 & 0.165 & 3.86 & 2.11 & 0.856 & 0.121 & 3.20 & 0.39 \\ 
   & strongly violated & 0.874 & 0.166 & 4.17 & 4.44 & 0.854 & 0.143 & 3.28 & 0.50 \\ 
   & & \multicolumn{4}{c}{Federated II (ours)}&\multicolumn{4}{c}{Target site only}\\
  \cmidrule(lr){3-6}\cmidrule(lr){7-10}
  & holds & 0.906 & 0.021 & 4.12 & 0.52 & 0.907 & 0.023 & 4.18 & 0.60 \\ 
  ASR & weakly violated & 0.904 & 0.019 & 4.08 & 0.49 & 0.906 & 0.022 & 4.12 & 0.58 \\ 
   & strongly violated & 0.904 & 0.021 & 4.09 & 0.50 & 0.905 & 0.023 & 4.14 & 0.60 \\ 
   & holds & 0.921 & 0.033 & 3.84 & 0.49 & 0.924 & 0.043 & 3.98 & 0.73 \\ 
  Local ASR & weakly violated & 0.924 & 0.035 & 3.87 & 0.56 & 0.924 & 0.043 & 3.96 & 0.72 \\ 
   & strongly violated & 0.923 & 0.032 & 3.87 & 0.52 & 0.923 & 0.047 & 3.97 & 0.76 \\ 
   & holds & 0.879 & 0.143 & 3.76 & 1.59 & 0.854 & 0.177 & 3.48 & 0.81 \\ 
  CQR & weakly violated & 0.884 & 0.141 & 3.77 & 1.67 & 0.849 & 0.183 & 3.46 & 0.81 \\ 
   & strongly violated & 0.889 & 0.143 & 4.03 & 3.53 & 0.849 & 0.177 & 3.46 & 0.82 \\ 
   & & \multicolumn{4}{c}{Federated III}&\multicolumn{4}{c}{Equal weights}\\
  \cmidrule(lr){3-6}\cmidrule(lr){7-10}
    & holds & 0.908 & 0.021 & 4.18 & 0.55 & 0.915 & 0.025 & 4.43 & 0.77 \\ 
  ASR & weakly violated & 0.908 & 0.019 & 4.17 & 0.51 & 0.915 & 0.023 & 4.44 & 0.73 \\ 
   & strongly violated & 0.907 & 0.020 & 4.16 & 0.53 & 0.914 & 0.024 & 4.43 & 0.73 \\ 
   & holds & 0.929 & 0.031 & 3.98 & 0.52 & 0.942 & 0.035 & 4.38 & 0.96 \\ 
  Local ASR & weakly violated & 0.932 & 0.031 & 4.03 & 0.67 & 0.947 & 0.032 & 4.58 & 2.28 \\ 
   & strongly violated & 0.930 & 0.032 & 3.99 & 0.56 & 0.947 & 0.033 & 4.49 & 1.24 \\ 
   & holds & 0.942 & 0.079 & 4.02 & 1.08 & 0.964 & 0.064 & 5.00 & 3.18 \\ 
  CQR & weakly violated & 0.937 & 0.088 & 4.01 & 1.54 & 0.960 & 0.074 & 4.81 & 2.30 \\ 
   & strongly violated & 0.940 & 0.080 & 4.07 & 1.72 & 0.964 & 0.065 & 5.11 & 2.77 \\ 
   \bottomrule
\end{tabular}
\begin{tablenotes}
       \item CFS: conformal score; CCOD: common conditional outcome distribution
       \item CP: coverage probability; wd: width; s.d.: standard deviation (over 500 replications)
   \end{tablenotes}
\caption{$n_k=1000$, strongly heterogeneous covariate distribution}\label{tab:strg_n1000}
\end{table}

\begin{table}[H]
\centering
\scriptsize 
\begin{tabular}{llrrrrrrrrrr}
  \toprule
CFS & CCOD & CP & s.d.(CP) & wd & s.d.(wd) & CP & s.d.(CP) & wd & s.d.(wd) \\ 
  \midrule
  & & \multicolumn{8}{c}{\bf Homoscedasticity where $\sigma(x) = 1$}\\
  \addlinespace
  & & \multicolumn{4}{c}{Federated I}&\multicolumn{4}{c}{Pooled sample}\\
  \cmidrule(lr){3-6}\cmidrule(lr){7-10}
 & holds & 0.898 & 0.017 & 3.28 & 0.15 & 0.899 & 0.008 & 3.28 & 0.06 \\ 
  ASR & weakly violated & 0.897 & 0.016 & 3.27 & 0.14 & 1.000 & 0.000 & 8.74 & 0.15 \\ 
   & strongly violated & 0.897 & 0.017 & 3.27 & 0.15 & 1.000 & 0.000 & 23.78 & 0.49 \\ 
   & holds & 0.898 & 0.018 & 3.28 & 0.15 & 0.899 & 0.011 & 3.29 & 0.09 \\ 
  Local ASR & weakly violated & 0.898 & 0.019 & 3.28 & 0.17 & 0.835 & 0.015 & 2.78 & 0.08 \\ 
   & strongly violated & 0.898 & 0.018 & 3.28 & 0.15 & 0.732 & 0.022 & 2.22 & 0.09 \\ 
   & holds & 0.900 & 0.024 & 3.31 & 0.22 & 0.899 & 0.011 & 3.29 & 0.10 \\ 
  CQR & weakly violated & 0.899 & 0.026 & 3.31 & 0.24 & 0.899 & 0.015 & 3.29 & 0.13 \\ 
   & strongly violated & 0.899 & 0.024 & 3.31 & 0.21 & 0.899 & 0.016 & 3.30 & 0.15 \\ 
   & & \multicolumn{4}{c}{Federated II (ours)}&\multicolumn{4}{c}{Target site only}\\
  \cmidrule(lr){3-6}\cmidrule(lr){7-10}
   & holds & 0.899 & 0.015 & 3.28 & 0.13 & 0.900 & 0.014 & 3.30 & 0.13 \\ 
  ASR & weakly violated & 0.898 & 0.014 & 3.28 & 0.12 & 0.900 & 0.014 & 3.30 & 0.13 \\ 
   & strongly violated & 0.898 & 0.014 & 3.28 & 0.12 & 0.900 & 0.014 & 3.30 & 0.12 \\ 
   & holds & 0.898 & 0.016 & 3.28 & 0.14 & 0.900 & 0.018 & 3.30 & 0.17 \\ 
  Local ASR & weakly violated & 0.899 & 0.017 & 3.29 & 0.15 & 0.901 & 0.019 & 3.31 & 0.18 \\ 
   & strongly violated & 0.899 & 0.016 & 3.29 & 0.13 & 0.900 & 0.018 & 3.31 & 0.17 \\ 
   & holds & 0.901 & 0.020 & 3.32 & 0.19 & 0.901 & 0.020 & 3.32 & 0.18 \\ 
  CQR & weakly violated & 0.900 & 0.022 & 3.31 & 0.20 & 0.903 & 0.021 & 3.34 & 0.19 \\ 
   & strongly violated & 0.900 & 0.020 & 3.31 & 0.18 & 0.903 & 0.020 & 3.34 & 0.18 \\ 
   & & \multicolumn{4}{c}{Federated III}&\multicolumn{4}{c}{Equal weights}\\
  \cmidrule(lr){3-6}\cmidrule(lr){7-10}
 & holds & 0.901 & 0.015 & 3.30 & 0.13 & 0.903 & 0.017 & 3.33 & 0.16 \\ 
  ASR & weakly violated & 0.900 & 0.015 & 3.30 & 0.13 & 0.903 & 0.018 & 3.33 & 0.17 \\ 
   & strongly violated & 0.901 & 0.013 & 3.30 & 0.12 & 0.902 & 0.017 & 3.33 & 0.16 \\ 
   & holds & 0.902 & 0.016 & 3.32 & 0.15 & 0.907 & 0.019 & 3.37 & 0.19 \\ 
  Local ASR & weakly violated & 0.903 & 0.017 & 3.33 & 0.15 & 0.907 & 0.019 & 3.37 & 0.19 \\ 
   & strongly violated & 0.903 & 0.014 & 3.33 & 0.13 & 0.906 & 0.018 & 3.37 & 0.17 \\ 
   & holds & 0.910 & 0.018 & 3.41 & 0.21 & 0.918 & 0.022 & 3.51 & 0.28 \\ 
  CQR & weakly violated & 0.908 & 0.020 & 3.39 & 0.20 & 0.915 & 0.025 & 3.48 & 0.28 \\ 
   & strongly violated & 0.909 & 0.017 & 3.41 & 0.17 & 0.915 & 0.022 & 3.48 & 0.25 \\ 
   \midrule
  & & \multicolumn{8}{c}{\bf Heteroscedasticity where $\sigma(x) = -\log(x)$}\\
  \addlinespace
  & & \multicolumn{4}{c}{Federated I}&\multicolumn{4}{c}{Pooled sample}\\
  \cmidrule(lr){3-6}\cmidrule(lr){7-10}
 & holds & 0.906 & 0.017 & 4.09 & 0.40 & 0.908 & 0.007 & 4.11 & 0.13 \\ 
  ASR & weakly violated & 0.905 & 0.015 & 4.06 & 0.37 & 0.983 & 0.003 & 9.04 & 0.18 \\ 
   & strongly violated & 0.904 & 0.016 & 4.07 & 0.38 & 1.000 & 0.000 & 23.98 & 0.51 \\ 
   & holds & 0.925 & 0.024 & 3.83 & 0.37 & 0.932 & 0.014 & 3.91 & 0.20 \\ 
  Local ASR & weakly violated & 0.924 & 0.024 & 3.85 & 0.35 & 0.866 & 0.019 & 3.11 & 0.15 \\ 
   & strongly violated & 0.924 & 0.028 & 3.85 & 0.37 & 0.773 & 0.023 & 2.42 & 0.14 \\ 
   & holds & 0.855 & 0.141 & 3.28 & 0.53 & 0.886 & 0.060 & 3.19 & 0.22 \\ 
  CQR & weakly violated & 0.861 & 0.142 & 3.30 & 0.53 & 0.877 & 0.078 & 3.19 & 0.25 \\ 
   & strongly violated & 0.863 & 0.134 & 3.29 & 0.49 & 0.868 & 0.098 & 3.19 & 0.32 \\ 
   & & \multicolumn{4}{c}{Federated II (ours)}&\multicolumn{4}{c}{Target site only}\\
  \cmidrule(lr){3-6}\cmidrule(lr){7-10}
  & holds & 0.907 & 0.014 & 4.10 & 0.33 & 0.908 & 0.014 & 4.14 & 0.34 \\ 
  ASR & weakly violated & 0.905 & 0.013 & 4.07 & 0.31 & 0.907 & 0.014 & 4.11 & 0.33 \\ 
   & strongly violated & 0.905 & 0.014 & 4.08 & 0.32 & 0.907 & 0.013 & 4.13 & 0.34 \\ 
   & holds & 0.927 & 0.021 & 3.85 & 0.32 & 0.930 & 0.027 & 3.93 & 0.44 \\ 
  Local ASR & weakly violated & 0.925 & 0.021 & 3.86 & 0.32 & 0.927 & 0.026 & 3.92 & 0.41 \\ 
   & strongly violated & 0.926 & 0.023 & 3.87 & 0.33 & 0.930 & 0.025 & 3.95 & 0.44 \\ 
   & holds & 0.866 & 0.123 & 3.27 & 0.44 & 0.850 & 0.144 & 3.26 & 0.50 \\ 
  CQR & weakly violated & 0.870 & 0.124 & 3.29 & 0.45 & 0.863 & 0.127 & 3.30 & 0.46 \\ 
   & strongly violated & 0.876 & 0.113 & 3.29 & 0.42 & 0.863 & 0.132 & 3.30 & 0.50 \\ 
   & & \multicolumn{4}{c}{Federated III}&\multicolumn{4}{c}{Equal weights}\\
  \cmidrule(lr){3-6}\cmidrule(lr){7-10}
   & holds & 0.909 & 0.014 & 4.17 & 0.34 & 0.912 & 0.017 & 4.27 & 0.45 \\ 
  ASR & weakly violated & 0.908 & 0.014 & 4.15 & 0.35 & 0.911 & 0.017 & 4.24 & 0.47 \\ 
   & strongly violated & 0.908 & 0.013 & 4.16 & 0.32 & 0.910 & 0.015 & 4.22 & 0.41 \\ 
   & holds & 0.933 & 0.021 & 3.96 & 0.36 & 0.939 & 0.024 & 4.11 & 0.49 \\ 
  Local ASR & weakly violated & 0.931 & 0.022 & 3.97 & 0.37 & 0.936 & 0.026 & 4.10 & 0.52 \\ 
   & strongly violated & 0.933 & 0.020 & 3.98 & 0.33 & 0.937 & 0.023 & 4.07 & 0.43 \\ 
   & holds & 0.914 & 0.090 & 3.49 & 0.43 & 0.933 & 0.083 & 3.75 & 0.60 \\ 
  CQR & weakly violated & 0.916 & 0.085 & 3.50 & 0.43 & 0.929 & 0.090 & 3.74 & 0.60 \\ 
   & strongly violated & 0.921 & 0.082 & 3.51 & 0.41 & 0.934 & 0.082 & 3.73 & 0.53 \\ 
   \bottomrule
\end{tabular}
\begin{tablenotes}
       \item CFS: conformal score; CCOD: common conditional outcome distribution
       \item CP: coverage probability; wd: width; s.d.: standard deviation (over 500 replications)
   \end{tablenotes}
\caption{$n_k=3000$, strongly heterogeneous covariate distribution}\label{tab:strg_n3000}
\end{table}

\begin{figure}[H]
    \centering
    \includegraphics[width=0.96\textwidth]{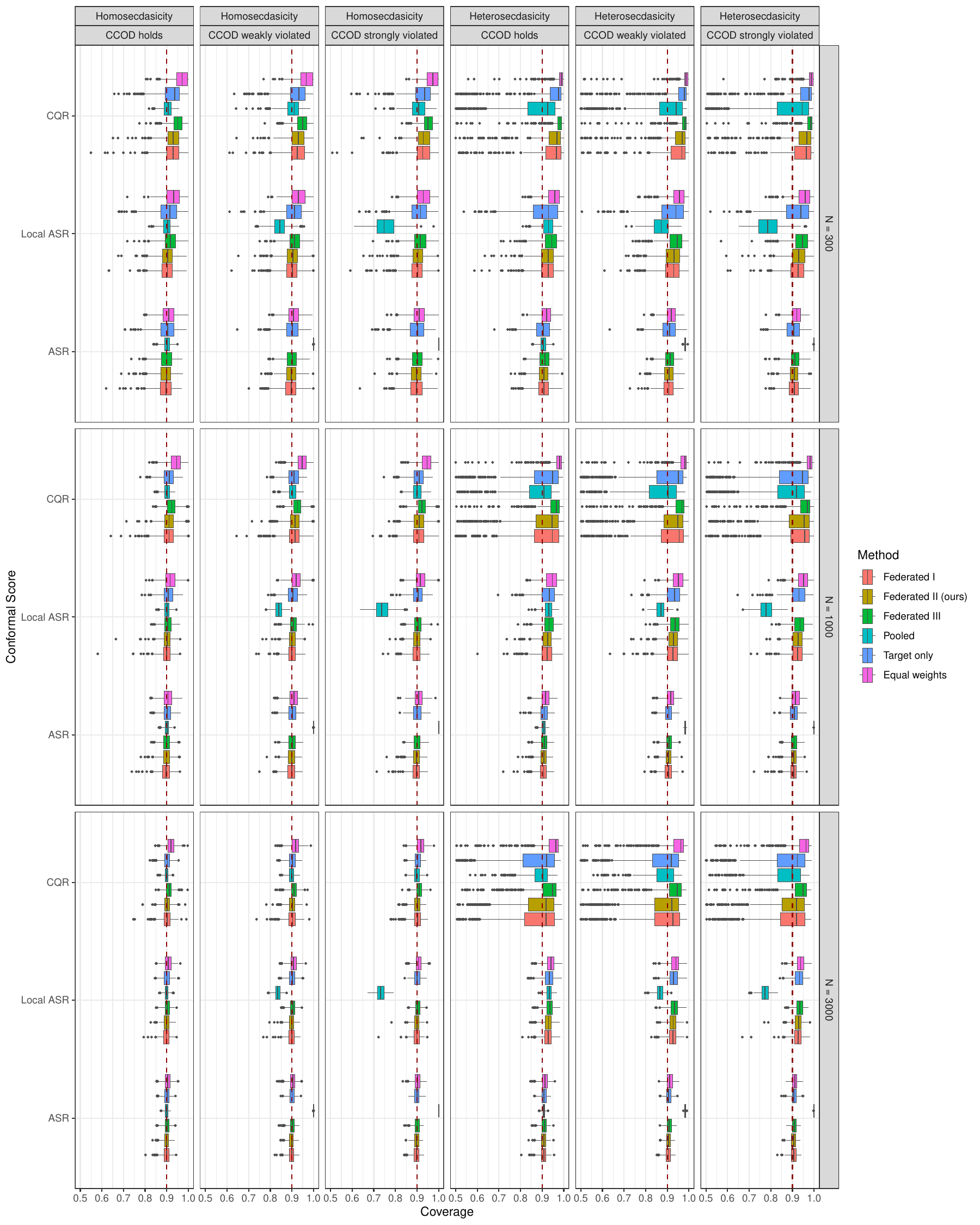}
    \caption{Boxplots of coverage probability, under strongly heterogeneous covariate distributions}
    \label{fig:box_cov_strg}
\end{figure}

\begin{figure}[H]
    \centering
    \includegraphics[width=0.96\textwidth]{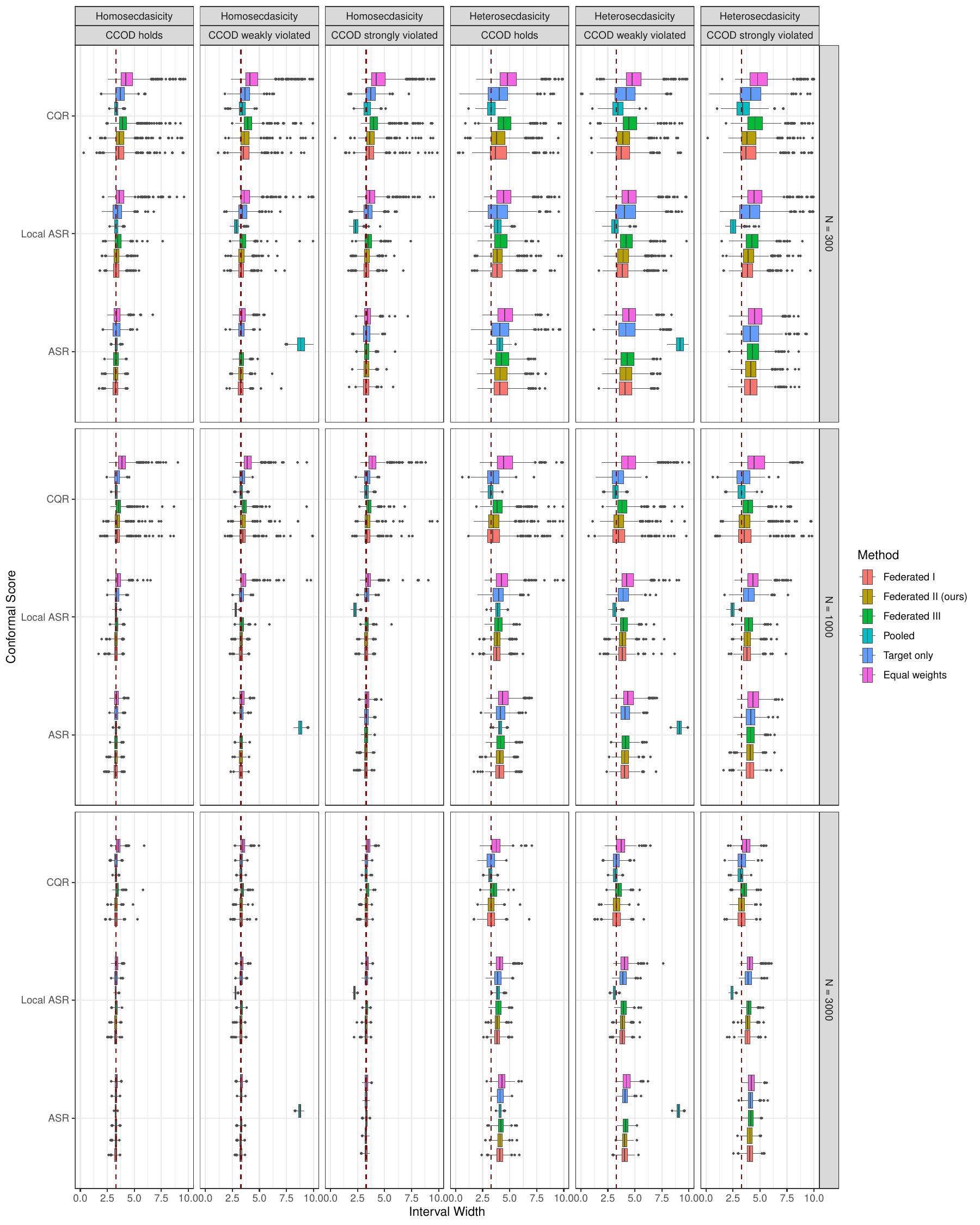}
    \caption{Boxplots of prediction interval width, under strongly heterogeneous covariate distributions}
    \label{fig:box_len_strg}
\end{figure}


\subsection{Local coverage over covariate values and scatterplots of federated weights}\label{subapx:condCov}

In this section, we provide the conditional coverage and federated weights plots. 

Figure \ref{fig:localCov} shows the plots of local coverage of the constructed prediction intervals over a grid of $X\in[0,1]$, where the sample size is set to be $n_k=3000$ for all sites. We used the smoothing method and published R code by \citet{lei2018distribution} for these plots. We can see that under homoscedasticity, the local coverage is constant (a horizontal line) over the covariate values by a given conformal score. Most of these horizontal lines are close to $0.9$, except for the pooled sample. The three federated weights consistently performed well under homoscedasticity. Furthermore, under heteroscedasticity, we can see the local coverage when the value of $X$ is too small always deviates from the nominal level by all methods and conformal scores, which makes sense as $-\log x\to\infty$ when $x\to 0$. When $X$ is sufficiently larger than $0$, the local coverage increases. Among the three conformal scores, ASR is the most sensitive one to the change in variance, and does not have coverage close to $0.9$ almost everywhere. This confirms findings in \citet{lei2018distribution}. The other two conformal scores are more robust against the heteroscedastic variance. When $X \in [0.1,0.6]$, their local coverages are close to $0.9$, except for the pooled sample method.  

In addition, Figure \ref{fig:wts_all} shows three federated weights vs. $\chi^2_k$ values using data of $n_k=3000$ and under heteroscedasticity, where we only plotted weights corresponding to $\chi_k^2\in[0,0.5]$ for illustration purposes, i.e., some weights corresponding to $\chi_k^2>0.5$ are not shown. As can be seen from the upper 9 panels when CCOD holds, in every case, all weights are clustered more or less around $0.2$. When covariate distributions are heterogeneous, the weights distributions become more complex, but generally when $\chi_k^2$ is smaller, there are larger weights in each panel. Also, there are obviously some larger weights ($>0.2$, i.e., above the red dashed lines) in site 1; about half of the weights are below $0.2$ for both sites 2 and 3, and most weights for site 4 are close to $0$. Although site 3 has some surprisingly large weights, it also shows a more unstable pattern of weights, which might be a reflection of its heterogeneity to the target site. Overall, the trend of weights fits the expectation of our method: the bigger the difference to the target site, the smaller (or the less stable) the weights.

\begin{figure}[H]
    \centering
    \includegraphics[width=1\textwidth]{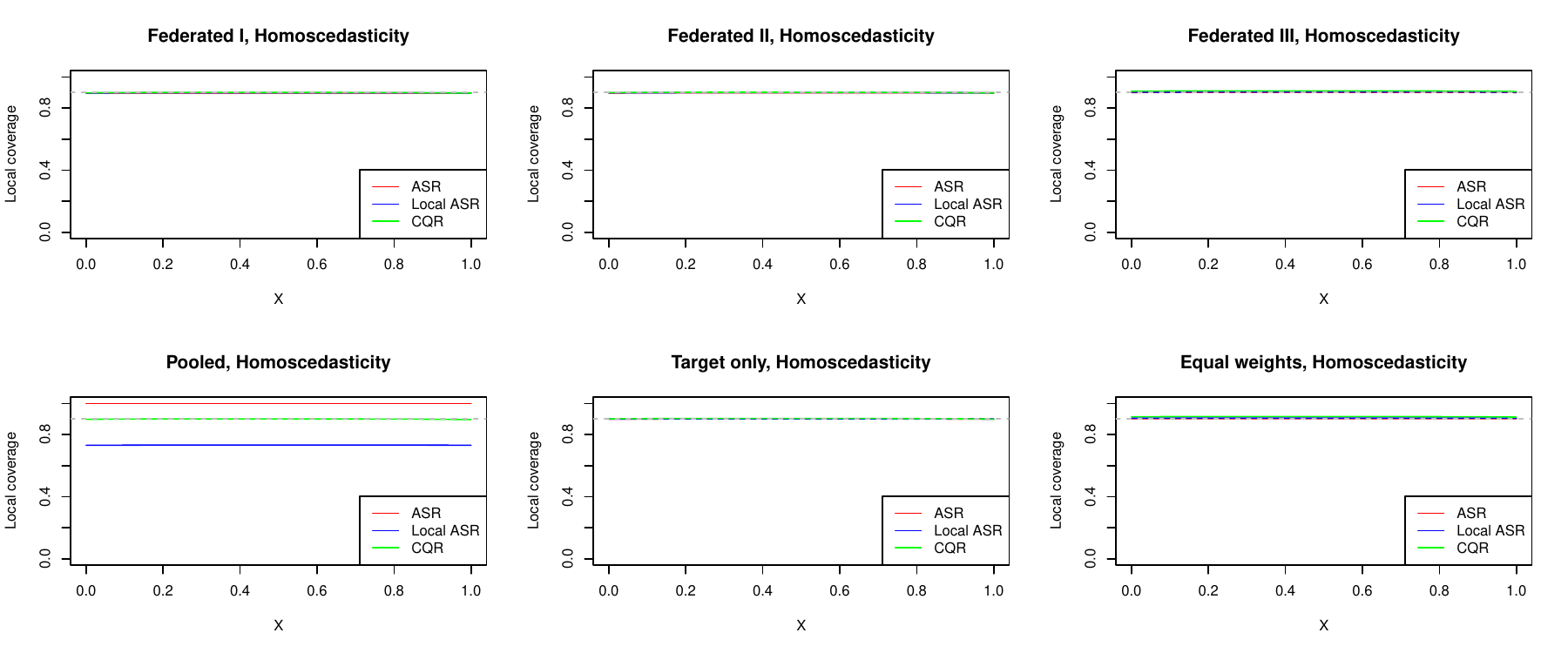}
    \includegraphics[width=1\textwidth]{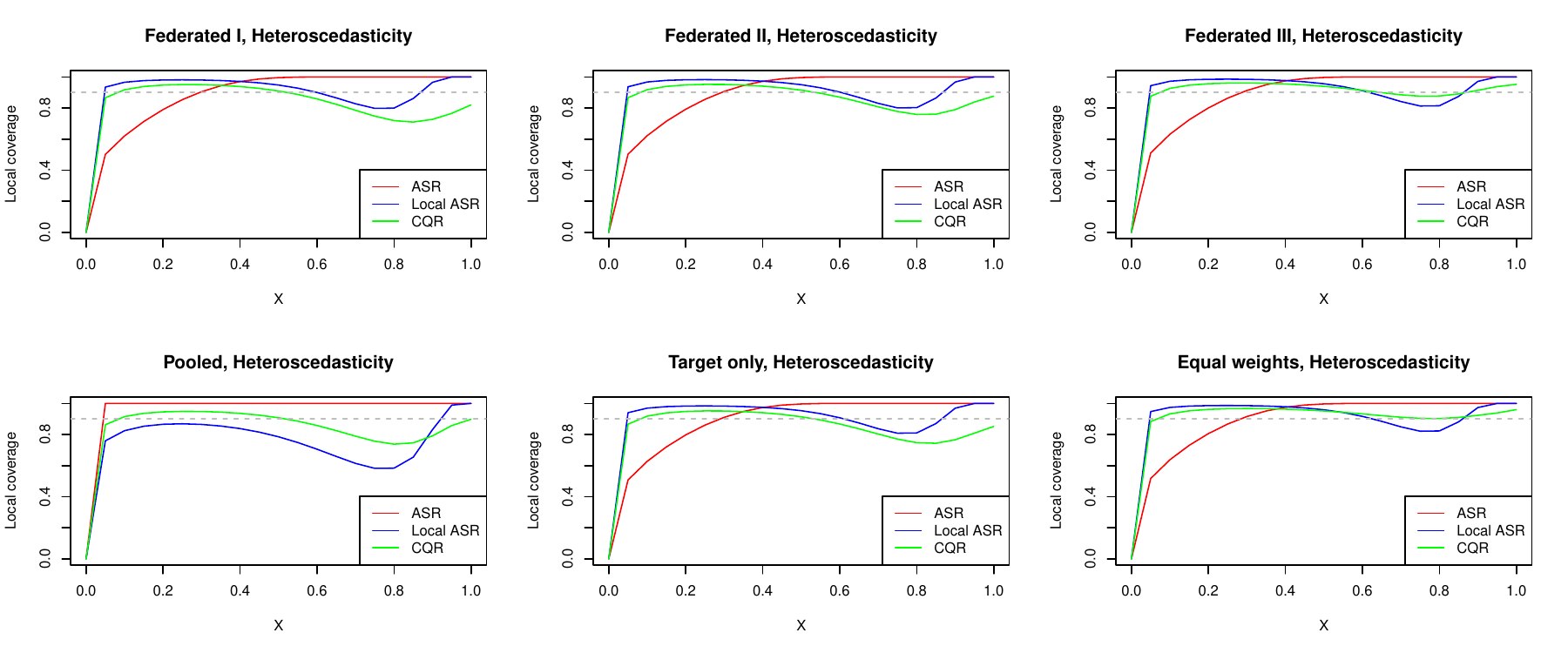}
    \caption{Local coverages, under CCOD is strongly violated and strongly heterogeneous covariate distributions and $n_k=3000$}
    \label{fig:localCov}
\end{figure}



\begin{figure}[H]
    \centering
    \includegraphics[width=0.96\textwidth]{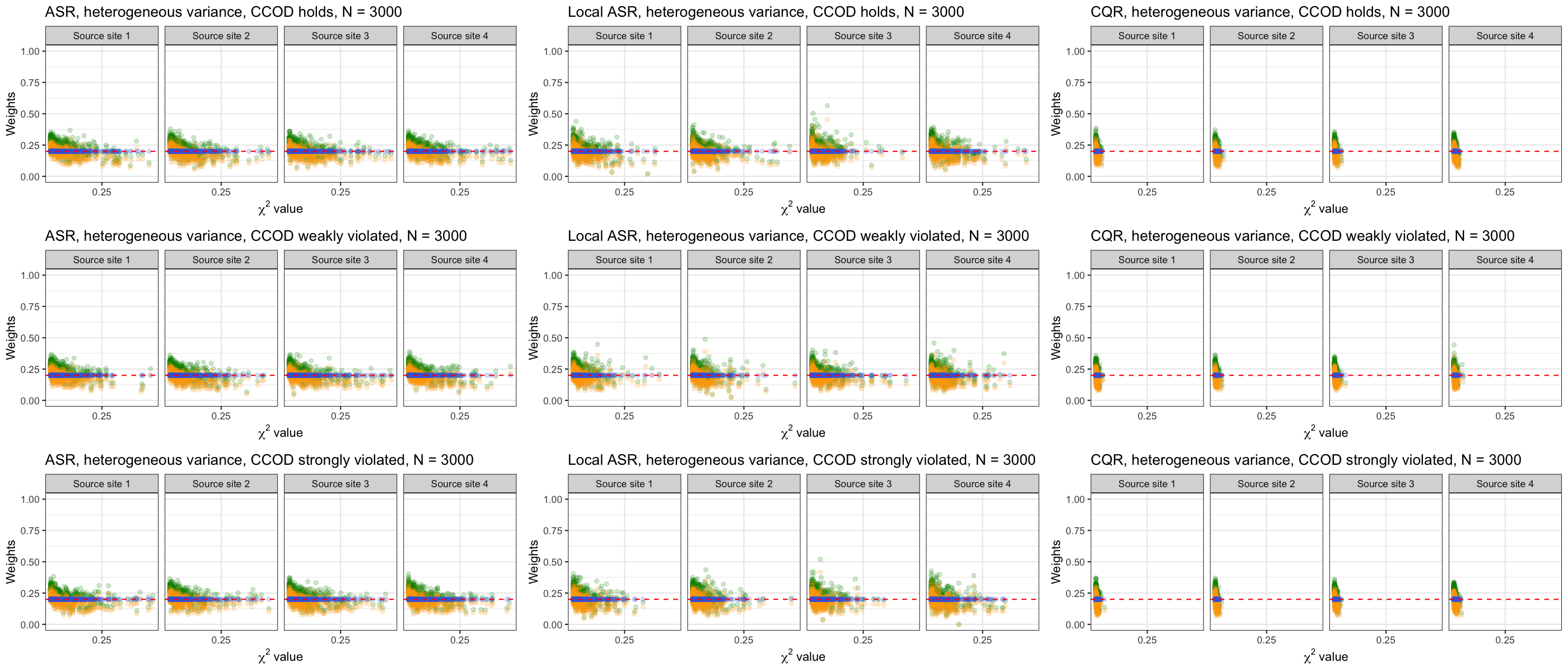}
    \includegraphics[width=0.96\textwidth]{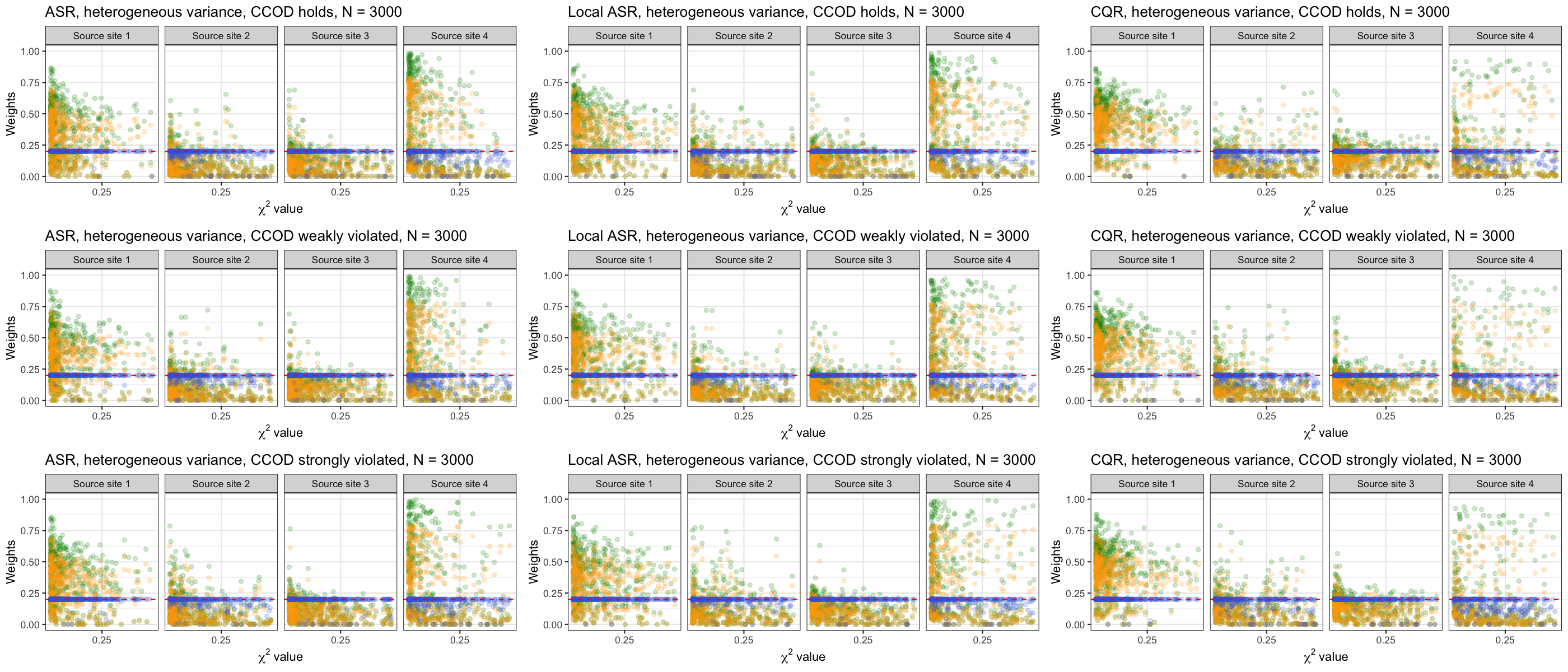}
    \includegraphics[width=0.96\textwidth]{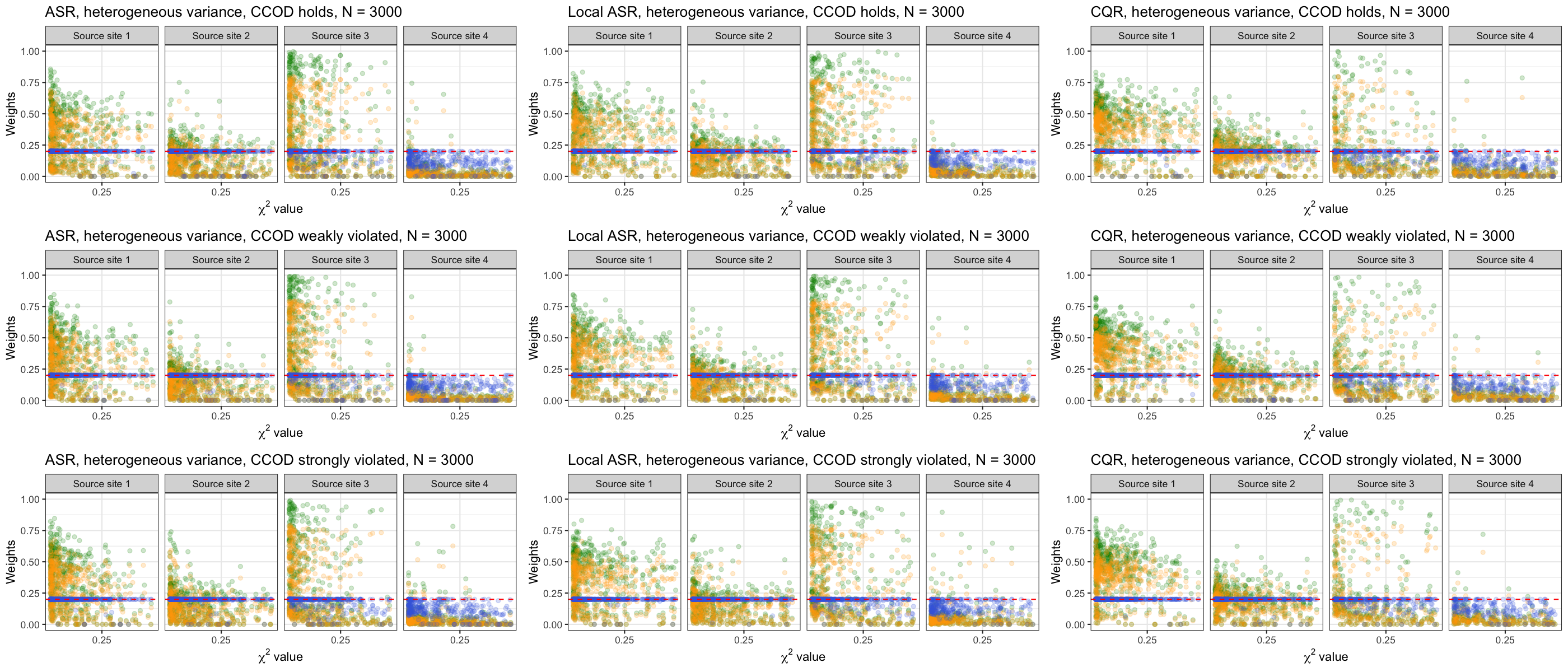}
    \caption{Weights vs. $\chi_k^2$ values, using $n_k=3000$ data under heteroscedasticity. The green points are by Federated I, the orange points are by Federated II (ours), the blue points are by Federated III, and the red dashed lines are for a reference line weights $= 0.2$. }
    \label{fig:wts_all}
\end{figure}

\begin{figure}[H]
    \centering
    \includegraphics[width=0.96\textwidth]{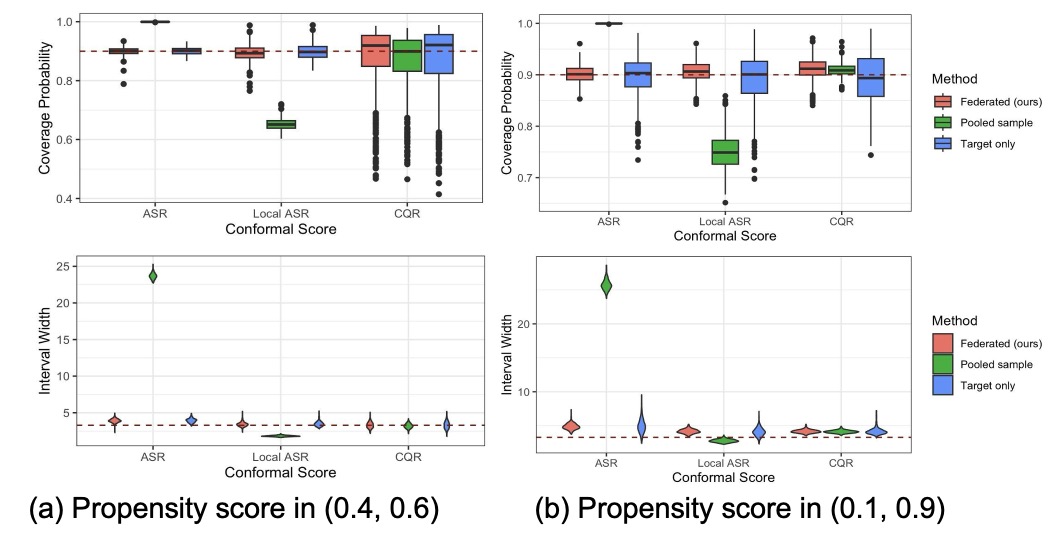}
    \caption{Comparison of coverage probabilities and average interval width when modifying the propensity score of observing the outcome between $(0.4,0.6)$ (panel (a)) and $(0.1,0.9)$ (panel (b)).}
    \label{newfig}
\end{figure}

\end{document}